\theoremstyle{plain}
\newtheorem{thm}{Theorem}[section]
\newtheorem{lemma}{Lemma}[section]
\theoremstyle{definition}
\newtheorem{definition}{Definition}[section]
\newtheorem{procedure}{Procedure}[section]
\numberwithin{equation}{section}
\numberwithin{table}{section}
\DeclareMathOperator{\cov}{Cov}
\newcommand{\ex}{\mathrm{E}}
\newcommand{\pr}{\mathrm{P}}
\newcommand{\rot}{{\scriptscriptstyle\circlearrowleft}}
\newcommand{\srv}{\mathrm{s}}
\newcommand{\convl}{\rightsquigarrow}
\newcommand{\oset}[3][0ex]{
	\mathrel{\mathop{#3}\limits^{
			\vbox to#1{\kern0.5\ex@
				\hbox{$\scriptstyle#2$}\vss}}}}
\newcommand{\uset}[3][0ex]{%
	\mathrel{\mathop{#3}\limits^{
			\vbox to#1{\kern7\ex@
				\hbox{$\scriptstyle#2$}\vss}}}}
\newcommand{\cwconv}{\oset{\pr}{\uset{\tau}{\rightsquigarrow}}}
\newcommand{\osetd}[3][0ex]{%
	\mathrel{\mathop{#3}\limits^{
			\vbox to#1{\kern-5\ex@
				\hbox{$\scriptstyle#2$}\vss}}}}
\newcommand{\eql}{\osetd{\mathcal{D}}{=}}
\begin{document}
	
	\title{Randomization tests of copula symmetry}
	
	\author[1]{Brendan K.\ Beare}
	\author[2]{Juwon Seo}
	\affil[1]{School of Economics, University of Sydney}
	\affil[2]{Department of Economics, National University of Singapore}
	
	\maketitle
	\begin{center}
		Accepted for publication in \textit{Econometric Theory}.
	\end{center}	
	\bigskip

\begin{abstract}
	New nonparametric tests of copula exchangeability and radial symmetry are proposed. The novel aspect of the tests is a resampling procedure that exploits group invariance conditions associated with the relevant symmetry hypothesis. They may be viewed as feasible versions of randomization tests of symmetry, the latter being inapplicable due to the unobservability of margins. Our tests are simple to compute, control size asymptotically, consistently detect arbitrary forms of asymmetry, and do not require the specification of a tuning parameter. Simulations indicate excellent small sample properties compared to existing procedures involving the multiplier bootstrap.
\end{abstract}

\newpage

\section{Introduction}\label{secintro}

In this paper we propose statistical tests of the null hypothesis that a copula \(C\) is symmetric, based on a sample of independent and identically distributed (iid) pairs of random variables with common copula $C$. We focus on two notions of symmetry that have received particular attention in the literature: exchangeability and radial symmetry. Let \((U,V)\) be a pair of random variables whose joint distribution is given by the copula \(C\).  We say that \(C\) is exchangeable when
\begin{equation}
C(u,v)=C(v,u)\text{ for all }(u,v)\in[0,1]^2.
\end{equation}
Exchangeability of \(C\) is satisfied if and only if \((U,V)\eql(V,U)\), where \(\eql\) signifies equality in law. We say that \(C\) is radially symmetric when
\begin{equation}
C(u,v)=C^\srv(u,v)\text{ for all }(u,v)\in[0,1]^2,
\end{equation}
where \(C^\srv(u,v)=u+v-1+C(1-u,1-v)\), the survival copula for \(C\). Radial symmetry of \(C\) is satisfied if and only if \((U,V)\eql(1-U,1-V)\). See \citet{N93,N06,N07} for further discussion of the exchangeability and radial symmetry properties.

The property of exchangeability plays an important role in various models of economic interaction. \citet{M16} writes that ``exchangeability of a certain form is a feature of almost any commonly used empirical specification for game-theoretic models with more than two players''. A prominent example is the symmetric common value auction model, which was developed by \citet{MW82} under the assumption that the distribution of signals across bidders is exchangeable. Such exchangeability has powerful implications for the identification of structural econometric models of auctions \citep{AH02} and is frequently assumed when they are estimated \citep{LPV00,HPP03,T11}. Another example is the model of product bundling developed by \citet{CR13b}, in which the exchangeability of the copula describing the dependence between consumer valuations of different products is a central assumption when a multi-product firm competes with a single-product firm. Radial symmetry, or rather the lack thereof, has been a subject of interest in empirical finance: researchers have found that the dependence between various asset returns, particularly equity portfolios, is markedly stronger in downturns than in upturns \citep{AC02,HTZ02}. Radially asymmetric copula functions have proved to be useful for modeling this feature of return dependence \citep{P04,P06,O08,GT11}.

Several statistical tests of exchangeability and radial symmetry for bivariate copulas have been proposed in recent literature. \citet{GNQ12} and \citet{GN14} proposed tests of copula exchangeability and radial symmetry respectively. Extensions of Genest et al.'s exchangeability tests to higher dimensional copulas have been provided by \citet{HS17}. Genest and Ne\v{s}lehov\'{a}'s radial symmetry tests extend earlier contributions of \citet{BC12} and \citet{DDU13}. Other tests of copula exchangeability and radial symmetry were proposed by \citet{LG13}, \citet{QB13}, \citet{BQ17}, \citet{BIW17} and \citet{K17}. \citet{BS14} also proposed a test of copula exchangeability, but for the somewhat different case where the copula in question characterizes the serial dependence in a univariate time series. Many other authors have considered tests of exchangeability or radial symmetry for multivariate cdfs---see, for instance, \citet{Q16} and references therein---but such tests are typically inapplicable to hypotheses of copula symmetry due to the unobservability of margins.

The new tests of copula symmetry proposed in this paper combine the test statistics of \cite{GNQ12} and \citet{GN14} with a new method of constructing critical values. Whereas those authors obtain critical values using the multiplier bootstrap of \citet{RS09} and \citet{BD10}, we instead use a novel resampling procedure motivated by randomization tests of symmetry hypotheses. \citet{R89,R90} observed that exact tests of symmetry hypotheses on multivariate cdfs could be obtained by applying randomization procedures that exploit group invariance conditions implied by symmetry. While these tests are not directly applicable to hypotheses of copula symmetry, we show how a feasible randomization procedure may be used to obtain critical values that properly account for uncertainty about margins. The justification for our procedure is asymptotic rather than exact, but numerical simulations indicate excellent size control with sample sizes as small as \(n=30\). Simulations also indicate substantially improved power compared to the tests based on the multiplier bootstrap at smaller sample sizes.

A recent paper by \citet{CRS17} is related to ours in that it studies the behavior of randomization tests when symmetry is only approximately satisfied. Suppose we have a sample \(Z^{(n)}\) of size \(n\) taking values in a sample space \(\mathcal Z_n\). Approximate symmetry in the sense of \citet{CRS17} means that for each \(n\) there exists a map \(S_n\) from \(\mathcal Z_n\) to a metric space \(\mathcal S\) such that (i) \(S_n(Z^{(n)})\) converges in law to a random element \(S\) of \(\mathcal S\) as \(n\to\infty\), and (ii) \(g(S)\) is equal in law to \(S\) for all \(g\) in some finite group of transformations \(\mathbf G\). Crucially, \(\mathcal S\) and \(\mathbf G\) cannot depend on \(n\). In our paper, the sample is a collection of iid pairs \(Z^{(n)}=((X_1,Y_1),\ldots,(X_n,Y_n))\) taking values in \(\mathcal Z_n=(\mathbf R^2)^n\). Approximate symmetry holds in the following sense: if \(S_n:\mathcal Z_n\to([0,1]^2)^n\) is the map that transforms our sample to the normalized rank pairs \(((U_{n1},V_{n1}),\ldots,(U_{nn},V_{nn}))\) defined in equation \eqref{rankpairs} below, then the law of \(S_n(Z^{(n)})\) is (loosely speaking) approximately that of \(n\) iid draws from the copula \(C\). When \(C\) is symmetric, such \(n\)-tuples of iid draws are distributionally invariant under a group \(\mathbf G_n\) consisting of \(2^n\) distinct transformations from \(([0,1]^2)^n\) to itself; we postpone details of \(\mathbf G_n\) to Section \ref{secresample}. Since the dimension of \(([0,1]^2)^n\) and the number of transformations in \(\mathbf G_n\) grow with \(n\), our problem falls outside the scope of the results of \citet{CRS17}, in which \(\mathcal S\) and \(\mathbf G\) are assumed fixed.

Recent results of \citet{CR13,CR16a,CR16b} are also somewhat related to the problem studied here. Like us, and unlike \citet{CRS17}, Chung and Romano allow the number of transforms in the group \(\mathbf G_n\) to increase with \(n\). However, whereas in our setting the normalized rank pairs \(((U_{n1},V_{n1}),\ldots,(U_{nn},V_{nn}))\) are approximately distributionally invariant under \(\mathbf G_n\) whenever the null is satisfied, in the setting considered by Chung and Romano the data are exactly distributionally invariant under \(\mathbf G_n\) on a subset of the null, and not even approximately invariant elsewhere in the null. The problems we study are therefore fundamentally different. Chung and Romano establish their results by verifying a condition of \citet{H52} necessary and sufficient for suitable convergence of the randomization distribution. We instead take the conditional approach to which \citet[p.\ 497]{CR13} refer following their discussion of Hoeffding's condition. Specifically, in place of Hoeffding's condition we verify that a statistic computed from a random transformation of the normalized rank pairs converges weakly to a suitable limit conditional on the data in probability.

Our paper is structured as follows. In Section \ref{secteststats} we define our test statistics and characterize their limit distributions under the null hypothesis of symmetry. In Section \ref{secresample} we describe our feasible randomization procedure for obtaining critical values. In Section \ref{secasym} we provide results on the asymptotic properties of tests based on our feasible randomization procedure. The results of our numerical simulations are presented in Section \ref{secsim}. Some closing remarks are given in Section \ref{conclusion}. Proofs and supplementary lemmas are collected in Appendix \ref{appx}.

\section{Test statistics}\label{secteststats}

\subsection{Basic setup}

Let \(X\) and \(Y\) be random variables with bivariate cumulative distribution function (cdf) \(H(x,y)=\pr(X\leq x,Y\leq y)\) and margins \(F(x)=\pr(X\leq x)\) and \(G(y)=\pr(Y\leq y)\). We assume that \(F\) and \(G\) are continuous. Sklar's theorem \citep{S59} then ensures the existence of a unique copula \(C:[0,1]^2\to[0,1]\) satisfying \(C(F(x),G(y))=H(x,y)\) for all \(x,y\in\mathbf R\). The copula \(C\) is the bivariate cdf of the probability integral transforms \(U=F(X)\) and \(V=G(Y)\).

Our data consist of \(n\) iid draws \((X_1,Y_1),\ldots,(X_n,Y_n)\) from \(H\). Let \(F_n\), \(G_n\) and \(H_n\) be the empirical cdfs corresponding to \(F\), \(G\) and \(H\) respectively. We use \(F_n\) and \(G_n\) to construct (normalized) ranks
\begin{equation}\label{rankpairs}
U_{ni}=F_n(X_i),\quad V_{ni}=G_n(Y_i),\quad i=1,\ldots,n.
\end{equation}
From the rank pairs \((U_{ni},V_{ni})\) we construct the empirical copula
\begin{equation}\label{empcop}
C_n(u,v)=\frac{1}{n}\sum_{i=1}^n\mathbbm{1}\left(U_{ni}\leq u,V_{ni}\leq v\right),\quad(u,v)\in[0,1]^2.
\end{equation}
An alternative definition of the empirical copula in common use is
\begin{equation}\label{empcopD}
C_n^\mathrm{D}(u,v)=H_n\left(F_n^\leftarrow(u),G_n^\leftarrow(v)\right),\quad(u,v)\in[0,1]^2,
\end{equation}
where \(F_n^\leftarrow\) is the generalized inverse of \(F_n\),
\begin{eqnarray}
F_n^\leftarrow(u)&=&\inf\{x\in\mathbf R:F_n(x)\geq u\}, \quad u\in(0,1],\\
F_n^\leftarrow(0)&=&F_n^\leftarrow(0+),
\end{eqnarray}
and \(G_n^\leftarrow\) is defined similarly. The definitions of the empirical copula given in (\ref{empcop}) and (\ref{empcopD}) are attributed to \citet{R76} and \citet{D79} respectively. They differ by at most \(2n^{-1}\) almost surely (a.s.); see Lemma \ref{frwlem} in the Appendix. The R\"{u}schendorf empirical copula is more convenient for computation, while the Deheuvels empirical copula is more convenient to analyze using the delta method.

The asymptotic validity of our proposed testing procedures hinges on weak convergence of the empirical copula process \(\mathbb C_n=\sqrt{n}\left(C_n-C\right)\) in the space \(\ell^\infty([0,1]^2)\) of bounded real valued functions on the unit square equipped with the uniform metric. Such weak convergence is satisfied under the following condition of \citet{S12}.
\begin{definition}
	A bivariate copula \(C\) is said to be regular if the partial derivatives \(\dot{C}_1\) and \(\dot{C}_2\) exist and are continuous everywhere on \((0,1)\times[0,1]\) and \([0,1]\times(0,1)\) respectively.
\end{definition}
We extend the definition of \(\dot{C}_1\) to \([0,1]^2\) by setting
\begin{equation*}
\dot{C}_1(u,v)=\begin{cases}\limsup_{\delta\downarrow0}\delta^{-1}C(\delta,v)&\text{for }u=0,\\\limsup_{\delta\downarrow0}\delta^{-1}(v-C(1-\delta,v))&\text{for }u=1,\end{cases}
\end{equation*}
and similarly for \(\dot{C}_2\). When \(C\) is regular, \citet{S12} has shown that the empirical copula process satisfies \(\mathbb C_n\rightsquigarrow\mathbb C\) in \(\ell^\infty([0,1]^2)\), where \(\rightsquigarrow\) denotes Hoffmann-J\o rgensen convergence in some metric space \citep[pp.\ 107--108]{K08}. The limit \(\mathbb C\) can be written as
\begin{equation}\label{cdef1}
\mathbb C(u,v)=\mathbb B(u,v)-\dot{C}_1(u,v)\mathbb B(u,1)-\dot{C}_2(u,v)\mathbb B(1,v),
\end{equation}
with \(\mathbb B\) a centered Gaussian random element of \(\ell^\infty([0,1]^2)\) with continuous sample paths and covariance kernel
\begin{equation}\label{cdef2}
\cov\left(\mathbb B(u,v),\mathbb B(u',v')\right)=C(u\wedge u',v\wedge v')-C(u,v)C(u',v').
\end{equation}
An alternative demonstration of the weak convergence \(\mathbb C_n\rightsquigarrow\mathbb C\) based on the delta method was given by \citet{B11} and by \citet{BV13}. Earlier work by \citet{FRW04} accomplished the same thing under a condition somewhat stronger than regularity.

\subsection{Exchangeability test statistics}\label{ssecrefstat}

For any function \(\theta\in\ell^\infty([0,1]^2)\) we define \(\theta^\top\in\ell^\infty([0,1]^2)\) by \(\theta^\top(u,v)=\theta(v,u)\). The null hypothesis that \(C\) is exchangeable may then be written as \(C=C^\top\). Statistics for testing this null may be constructed from the difference \(C_n-C_n^\top\). Following \citet{GNQ12} we consider the three statistics
\begin{align}
R_n&=n\int_0^1\int_0^1\left(C_n(u,v)-C_n(v,u)\right)^2\mathrm{d}u\mathrm{d}v,\label{rref}\\
S_n&=n\int_0^1\int_0^1\left(C_n(u,v)-C_n(v,u)\right)^2C_n(\mathrm{d}u,\mathrm{d}v),\label{sref}\\
T_n&=\sqrt{n}\sup_{(u,v)\in[0,1]^2}\left\vert C_n(u,v)-C_n(v,u)\right\vert.\label{tref}
\end{align}
Equivalent expressions for \(R_n\), \(S_n\) and \(T_n\) more amenable to exact computation, with the integrals replaced by sums and the supremum replaced by a maximum over a finite set, have been given by \citet[Prop.\ 1]{GNQ12}. When \(C\) is regular and exchangeable, \citet[Prop.\ 3]{GNQ12} have shown that
\begin{align}
R_n&\convl\int_0^1\int_0^1\left(\mathbb{C}(u,v)-\mathbb{C}(v,u)\right)^2\mathrm{d}u\mathrm{d}v,\label{rreflim}\\
S_n&\convl\int_0^1\int_0^1\left(\mathbb{C}(u,v)-\mathbb{C}(v,u)\right)^2C(\mathrm{d}u,\mathrm{d}v),\label{sreflim}\\
T_n&\convl\sup_{(u,v)\in[0,1]^2}\left\vert\mathbb{C}(u,v)-\mathbb{C}(v,u)\right\vert.\label{treflim}
\end{align}

\subsection{Radial symmetry test statistics}\label{ssecradstat}

For any function \(\theta\in\ell^\infty([0,1]^2)\), we define \(\theta^\srv,\theta^\rot\in\ell^\infty([0,1]^2)\) by
\begin{align*}
\theta^\srv(u,v)&=u+v-1+\theta(1-u,1-v),\\
\theta^\rot(u,v)&=\theta(1-u,1-v).
\end{align*}
The null hypothesis that \(C\) is radially symmetric may be written as \(C=C^\srv\). %\citet{BC12} proposed to test this null using statistics constructed from the difference between the Deheuvels empirical copula and its survival transform. \citet{GN14} subsequently proposed asymptotically equivalent statistics that are simpler to compute.
Denote the empirical cdf of rotated rank pairs \((1-U_{ni},1-V_{ni})\) by
\begin{equation}\label{empsrvcop}
D_n(u,v)=\frac{1}{n}\sum_{i=1}^n\mathbbm{1}\left(1-U_{ni}\leq u,1-V_{ni}\leq v\right),\quad(u,v)\in[0,1]^2.
\end{equation}
The cdf \(D_n\) differs from \(C_n^\mathrm{s}\) by no more than \(4n^{-1}\) a.s.; see Lemma \ref{gnlem} in the Appendix. This motivates the use of the following statistics for testing radial symmetry:
\begin{align}
R_n'&=n\int_0^1\int_0^1\left(C_n(u,v)-D_n(u,v)\right)^2\mathrm{d}u\mathrm{d}v,\label{rrad}\\
S_n'&=n\int_0^1\int_0^1\left(C_n(u,v)-D_n(u,v)\right)^2C_n(\mathrm{d}u,\mathrm{d}v),\label{srad}\\
T_n'&=\sqrt{n}\sup_{(u,v)\in[0,1]^2}\left\vert C_n(u,v)-D_n(u,v)\right\vert.\label{trad}
\end{align}
The statistics \(R_n'\), \(S_n'\) and \(T_n'\) are the same as those of \citet{GN14}, except that those authors defined \(U_{ni}\) and \(V_{ni}\) in a way that differs from us by an asymptotically negligible factor of \(n/(n+1)\). \citet[p.\ 1109]{GN14} provided equivalent expressions for these statistics more amenable to computation, with integrals replaced by sums and the supremum replaced by a maximum over a finite set. When \(C\) is regular and radially symmetric, results of \citet{BC12} imply that
\begin{align}
R_n'&\convl\int_0^1\int_0^1\left(\mathbb{C}(u,v)-\mathbb{C}^\rot(u,v)\right)^2\mathrm{d}u\mathrm{d}v,\label{rradlim}\\
S_n'&\convl\int_0^1\int_0^1\left(\mathbb{C}(u,v)-\mathbb{C}^\rot(u,v)\right)^2C(\mathrm{d}u,\mathrm{d}v),\label{sradlim}\\
T_n'&\convl\sup_{(u,v)\in[0,1]^2}\left\vert\mathbb{C}(u,v)-\mathbb{C}^\rot(u,v)\right\vert.\label{tradlim}
\end{align}

\section{Feasible randomization procedure}\label{secresample}

We require critical values with which to compare the test statistics defined in (\ref{rref})--(\ref{tref}) and (\ref{rrad})--(\ref{trad}). These should approximate the relevant quantiles of the null limit distributions given in (\ref{rreflim})-(\ref{treflim}) and (\ref{rradlim})-(\ref{tradlim}). The approach taken by \citet{GNQ12} and \citet{GN14} was to use the multiplier bootstrap of \citet{RS09} and \citet{BD10} to generate bootstrap versions of the empirical copula process \(\mathbb C_n\), and thereby approximate the null limit distribution of the relevant statistic. We will propose a different resampling scheme similar to randomization tests, narrowly tailored to the symmetry testing problem. It delivers improved small sample performance in simulations reported in Section \ref{secsim}.

Suppose for a moment that \(F\) and \(G\) are known, so that we observe \(n\) iid pairs \((U_i,V_i)=(F(X_i),G(Y_i))\) whose common bivariate cdf is the copula \(C\).
A randomization test of a hypothesis about \(C\) may be possible if there is a finite group \(\mathbf G\) of transformations from \([0,1]^{2}\) to itself such that, when the hypothesis is satisfied, \(g(U,V)\eql(U,V)\) for all \(g\in\mathbf G\). For testing bivariate exchangeability and radial symmetry it is enough to consider a group of two transformations \(\mathbf{G}=\{\pi^0,\pi^1\}\). The transformations \(\pi^0:[0,1]^2\to[0,1]^2\) and \(\pi^1:[0,1]^2\to[0,1]^2\) are given by
\begin{equation}\label{expi}
\pi^0(u,v)=(u,v)\quad\text{and}\quad\pi^1(u,v)=(v,u)
\end{equation}
when our hypothesis is exchangeability, and by
\begin{equation}\label{rspi}
\pi^0(u,v)=(u,v)\quad\text{and}\quad\pi^1(u,v)=(1-u,1-v)
\end{equation}
when our hypothesis is radial symmetry. It is easy to check that either choice of \(\mathbf G\) is a group under the operation of composition. Using \(\mathbf G\) we can construct a second group \(\mathbf G_n\) of \(2^n\) transformations from \(([0,1]^2)^n\) to itself by setting \(\mathbf G_n=\{g_\tau:\tau=(\tau_1,\ldots,\tau_n)\in\{0,1\}^n\}\), where
\begin{equation}\label{gn}
g_\tau\left((u_1,v_1),\ldots,(u_n,v_n)\right)=\left(\pi^{\tau_1}(u_1,v_1),\ldots,\pi^{\tau_n}(u_n,v_n)\right).
\end{equation}
Again, it is easy to check that \(\mathbf G_n\) is a group under the operation of composition. Moreover, when the relevant symmetry hypothesis is satisfied, our sample of pairs satisfies
\begin{equation}\label{gninvar}
g\left((U_1,V_1),\ldots,(U_n,V_n)\right)\eql\left((U_1,V_1),\ldots,(U_n,V_n)\right)\quad\text{for all }g\in\mathbf G_n.
\end{equation}

Given an arbitrary test statistic \(W_n=W_n((U_1,V_1),\ldots,(U_n,V_n))\), property \eqref{gninvar} can be used to justify the construction of an exact level \(\alpha\) randomization test of our symmetry hypothesis. The procedure, as described by \citet{R90}, is as follows.
\begin{procedure}[Infeasible randomization test]\label{rtest}\leavevmode
\begin{enumerate}
	\item Compute \(W_n^\tau=W_n(g_\tau((U_1,V_1),\ldots,(U_n,V_n)))\) for each \(g_\tau\in\mathbf G_n\). Denote by \(W_n^{(1)}\leq W_n^{(2)}\leq\cdots\leq W_n^{(2^n)}\) the ordered values of these statistics.
	\item Let \(k=2^n-\lfloor2^n\alpha\rfloor\), where \(\lfloor\cdot\rfloor\) rounds down to the nearest integer. Determine the number \(M^+\) of \(W_n^\tau\)'s that are strictly greater than \(W_n^{(k)}\), and the number \(M^0\) that are equal to \(W_n^{(k)}\).
	\item Reject the null if \(W_n>W_n^{(k)}\). Reject the null with probability \((2^n\alpha-M^+)/M^0\) if \(W_n=W_n^{(k)}\). Do not reject the null if \(W_n<W_n^{(k)}\).
\end{enumerate}
\end{procedure}
It can be shown that, when the null hypothesis of symmetry is true, this procedure leads us to reject it with probability exactly equal to \(\alpha\). See \citet{R89,R90} and references therein for further details on randomization tests.

Unfortunately a randomization test of the kind just described is not feasible for our hypothesis testing problem because, since \(F\) and \(G\) are not known, we do not observe pairs \((U_i,V_i)\) drawn from \(C\). In their place, we observe rank pairs \((U_{ni},V_{ni})\) based on preliminary estimates of the margins \(F\) and \(G\), but these rank pairs do not satisfy the exact group invariance property \eqref{gninvar}. A naive application of Procedure \ref{rtest} to these rank pairs will not achieve exact size, and cannot be expected to achieve correct size asymptotically because the effect of estimating the margins upon the distribution of the test statistic is not properly accounted for. We instead propose a feasible randomization procedure which accounts for the estimation of margins and will be shown to deliver asymptotically valid inference.

Let \(W_n=W_n((U_{n1},V_{n1}),\ldots,(U_{nn},V_{nn}))\) be the statistic of interest. Our procedure for computing critical values for \(W_n\) is as follows.
\begin{procedure}[Feasible randomization test]\label{qrtest}\leavevmode
\begin{enumerate}
	\item Select a transform \(g_\tau\in\mathbf G_n\). Set
	\begin{equation*}
	\left((U_{n1}^\tau,V_{n1}^\tau),\ldots,(U_{nn}^\tau,V_{nn}^\tau)\right)=g_\tau\left((U_{n1},V_{n1}),\ldots,(U_{nn},V_{nn})\right).
	\end{equation*}
	\item Draw $n$ iid random variables $\eta_1,\ldots,\eta_n$ from the uniform distribution on $(0,1)$. For $i=1,\ldots,n$, set
	\begin{equation*}
	\check{U}_{ni}^\tau=U_{ni}^\tau-n^{-1}\eta_i,\quad\check{V}_{ni}^\tau=V_{ni}^\tau-n^{-1}\eta_i.
	\end{equation*}
	\item For \(i=1,\ldots,n\) compute
	\begin{equation*}
	\tilde{U}_{ni}^\tau=\frac{1}{n}\sum_{j=1}^n\mathbbm{1}\left(\check{U}_{nj}^\tau\leq \check{U}_{ni}^\tau\right),\quad\tilde{V}_{ni}^\tau=\frac{1}{n}\sum_{j=1}^n\mathbbm{1}\left(\check{V}_{nj}^\tau\leq \check{V}_{ni}^\tau\right).
	\end{equation*}
	\item Compute \(W_n^\tau=W_n((\tilde{U}_{n1}^\tau,\tilde{V}_{n1}^\tau),\ldots,(\tilde{U}_{nn}^\tau,\tilde{V}_{nn}^\tau))\).
	\item Repeat Steps 1--4 to compute \(W_n^\tau\) for each \(g_\tau\in\mathbf G_n\). Denote by \(W_n^{(1)}\leq W_n^{(2)}\leq\cdots\leq W_n^{(2^n)}\) the ordered values of these statistics. Let \(k=2^n-\lfloor2^n\alpha\rfloor\), where \(\lfloor\cdot\rfloor\) rounds down to the nearest integer. Determine the number \(M^+\) of \(W_n^\tau\)'s that are strictly greater than \(W_n^{(k)}\), and the number \(M^0\) that are equal to \(W_n^{(k)}\).
	\item Reject the null if \(W_n>W_n^{(k)}\). Reject the null with probability \((2^n\alpha-M^+)/M^0\) if \(W_n=W_n^{(k)}\). Do not reject the null if \(W_n<W_n^{(k)}\).
\end{enumerate}
\end{procedure}

Steps 2 and 3 of Procedure \ref{qrtest} are what distinguish it from a naive application of Procedure \ref{rtest} to the rank pairs \((U_{ni},V_{ni})\). Step 2 involves applying small random perturbations \(n^{-1}\eta_i\) to the transformed rank pairs \((U^\tau_{ni},V^\tau_{ni})\). Since the randomized statistic \(W_n^\tau\) depends only on the rank pairs of the transformed rank pairs \((U_{ni}^\tau,V_{ni}^\tau)\), and the perturbations \(n^{-1}\eta_i\) are smaller than the gap \(n^{-1}\) between consecutive ranks, the sole effect of the perturbations is to randomly break ties in ranks. This effect is asymptotically negligible (see Lemma \ref{lemeta} below) but we have found that it improves the performance of our procedure with very small sample sizes. It is immaterial that \(\eta_i\) is uniformly distributed; we could equivalently draw each $\eta_i$ from any continuous distribution with support contained in \((0,1)\).

The more important distinction between Procedures \ref{rtest} and \ref{qrtest} is Step 3 of Procedure \ref{qrtest}, in which \((\check{U}_{ni}^\tau,\check{V}_{ni}^\tau)\) is transformed to \((\tilde{U}^\tau_{ni},\tilde{V}^\tau_{ni})\). If we were to drop Step 3 in Procedure \ref{qrtest} and naively use \((U_{ni}^\tau,V_{ni}^\tau)\) or \((\check{U}_{ni}^\tau,\check{V}_{ni}^\tau)\) in place of \((\tilde{U}_{ni}^\tau,\tilde{V}_{ni}^\tau)\) in Step 4, our method of constructing critical values would not deliver asymptotically valid inference. We illustrate this claim in Figure \ref{naiveapprox}. In panel (a) we plot smoothed histograms of the simulated sampling distribution of the exchangeability test statistic \(S_n\), of the distribution of randomized versions of \(S_n\) obtained by naively applying Procedure \ref{rtest} to the rank pairs \((U_{ni},V_{ni})\), and of the distribution of proper randomized versions of \(S_n\) obtained by applying Procedure \ref{qrtest} to the rank pairs \((U_{ni},V_{ni})\). It is clear that the naive application of Procedure \ref{rtest} to the rank pairs \((U_{ni},V_{ni})\) provides a very poor approximation to the sampling distribution of \(S_n\), whereas the approximation provided by Procedure \ref{qrtest} is very good. Moreover, we found that the latter approximation does not change appreciably for this sample size if the perturbation in Step 2 of Procedure \ref{qrtest} is omitted. Panel (b) shows qualitatively similar results for the radial symmetry test statistic \(S_n'\). We obtained qualitatively similar results for the exchangeability test statistics \(R_n\) and \(T_n\), and the radial symmetry test statistics \(R_n'\) and \(T_n'\), hence we do not display them here.

\begin{figure}[t!]
	\begin{center}	
		\begin{tikzpicture}[scale=4]
		\tikzstyle{vertex}=[font=\small,circle,draw,fill=yellow!20]
		\tikzstyle{edge} = [font=\scriptsize,draw,thick,-]
		\draw[black, thick] (0,0) -- (0,1);
		\draw[black, thick, ->] (0,0) -- (1.1,0);
		\draw (0,0.3pt) -- (0,-1pt)
		node[anchor=north,font=\scriptsize] {$0$};
		\draw (.2,0.3pt) -- (.2,-1pt)
		node[anchor=north,font=\scriptsize] {$.1$};
		\draw (.4,0.3pt) -- (.4,-1pt)
		node[anchor=north,font=\scriptsize] {$.2$};
		\draw (.6,0.3pt) -- (.6,-1pt)
		node[anchor=north,font=\scriptsize] {$.3$};
		\draw (.8,0.3pt) -- (.8,-1pt)
		node[anchor=north,font=\scriptsize] {$.4$};
		\draw (1,0.3pt) -- (1,-1pt)
		node[anchor=north,font=\scriptsize] {$.5$};
		\draw (0.3pt,0) -- (-0.8pt,0);
		\draw (0.3pt,{2/7}) -- (-0.8pt,{2/7});
		\draw (0.3pt,{4/7}) -- (-0.8pt,{4/7});
		\draw (0.3pt,{6/7}) -- (-0.8pt,{6/7});
		\node[left,font=\scriptsize] at (0,0) {$0$};
		\node[left,font=\scriptsize] at (0,{2/7}) {$10$};
		\node[left,font=\scriptsize] at (0,{4/7}) {$20$};
		\node[left,font=\scriptsize] at (0,{6/7}) {$30$};
		\draw[red,thick] plot[smooth] file {EXCtrue.txt};	
		\draw[green!60!black,dashed,thick] plot[smooth] file {EXCfeasible.txt};	
		\draw[blue,dotted,thick] plot[smooth] file {EXCnaive.txt};		
		\node[font=\small] at (0.5,-0.4) {(a) Exchangeability statistic ($S_n$)};
		\end{tikzpicture}
		\,\,\,\,
		\begin{tikzpicture}[scale=4]
		\tikzstyle{vertex}=[font=\small,circle,draw,fill=yellow!20]
		\tikzstyle{edge} = [font=\scriptsize,draw,thick,-]
		\draw[black, thick] (0,0) -- (0,1);
		\draw[black, thick, ->] (0,0) -- (1.1,0);
		\draw (0,0.3pt) -- (0,-1pt)
		node[anchor=north,font=\scriptsize] {$0$};
		\draw (.2,0.3pt) -- (.2,-1pt)
		node[anchor=north,font=\scriptsize] {$.1$};
		\draw (.4,0.3pt) -- (.4,-1pt)
		node[anchor=north,font=\scriptsize] {$.2$};
		\draw (.6,0.3pt) -- (.6,-1pt)
		node[anchor=north,font=\scriptsize] {$.3$};
		\draw (.8,0.3pt) -- (.8,-1pt)
		node[anchor=north,font=\scriptsize] {$.4$};
		\draw (1,0.3pt) -- (1,-1pt)
		node[anchor=north,font=\scriptsize] {$.5$};
		\draw (0.3pt,0) -- (-0.8pt,0);
		\draw (0.3pt,{2/7}) -- (-0.8pt,{2/7});
		\draw (0.3pt,{4/7}) -- (-0.8pt,{4/7});
		\draw (0.3pt,{6/7}) -- (-0.8pt,{6/7});
		\node[left,font=\scriptsize] at (0,0) {$0$};
		\node[left,font=\scriptsize] at (0,{2/7}) {$10$};
		\node[left,font=\scriptsize] at (0,{4/7}) {$20$};
		\node[left,font=\scriptsize] at (0,{6/7}) {$30$};
		\draw[red,thick] plot[smooth] file {RADtrue.txt};	
		\draw[green!60!black,dashed,thick] plot[smooth] file {RADfeasible.txt};	
		\draw[blue,dotted,thick] plot[smooth] file {RADnaive.txt};			
		\node[font=\small] at (0.5,-0.4) {(b) Radial symmetry statistic ($S_n'$)};
		\end{tikzpicture}
	\end{center}
	\caption{Smoothed histograms of the sampling distributions of \(S_n\) and \(S_n'\) (\textcolor{red}{solid} lines), of the approximating distributions obtained by naively applying Procedure \ref{rtest} to the rank pairs \((U_{ni},V_{ni})\) (\textcolor{blue}{dotted} lines), and of the approximating distributions obtained by applying Procedure \ref{qrtest} to the rank pairs \((U_{ni},V_{ni})\) (\textcolor{green!60!black}{dashed} lines). The approximation is done for a sample size of \(n=100\) and with \(C\) chosen to be the product copula, so that the null hypothesis of symmetry is satisfied.}
	\label{naiveapprox}
\end{figure}
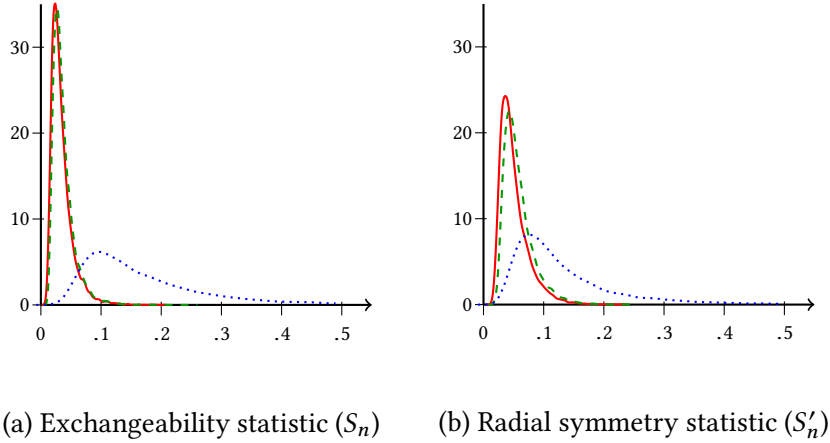

To study the asymptotic properties of tests based on Procedure \ref{qrtest} it will be useful to consider the behavior of \(W_n^\tau\) when \(\tau\) is drawn randomly from \(\mathbf G_n\). Suppose we were to choose as \(\tau\) an \(n\)-tuple of Bernoulli random variables each taking the values zero and one with equal probabilities, independent of one another and of the data. The corresponding transform \(g_\tau\) would then be a random draw from the uniform distribution over \(\mathbf G_n\), and the critical value \(W_n^{(k)}\) computed in step 4 of Procedure \ref{qrtest} would be the conditional \((1-\alpha)\)-quantile of \(W_n^\tau\) given the data. That is, \(W_n^{(k)}=Q_n(1-\alpha)\), where
\begin{equation}\label{Qn}
Q_n(u)=\inf\left\{x\in\mathbf R:\pr(W^\tau_n\leq x\,|\,(X_1,Y_1),\ldots,(X_n,Y_n))\geq u\right\}.
\end{equation}
We will show in Section \ref{secasym} that, when the null of symmetry is satisfied, the conditional law of \(W_n^\tau\) given the data asymptotically approximates the unconditional law of the test statistic \(W_n\). Consequently, our critical value \(W_n^{(k)}\) approximates the \((1-\alpha)\)-quantile of the law of \(W_n\) under the null, as desired.

In practice it may be computationally burdensome to evaluate all \(2^n\) statistics \(W_n^\tau\) corresponding to the \(2^n\) transforms \(g_\tau\in\mathbf G_n\), even with modest sample sizes. Instead of repeating steps 1--4 of Procedure \ref{qrtest} \(2^n\) times, we may repeat them some large number of times \(N\). Each time we select a transform \(g_\tau\in\mathbf G_n\) in Step 1, we should do so at random and with replacement, choosing each possible transform with equal probabilities. This amounts to generating \(\tau\) as an independent \(n\)-tuple of Bernoulli random variables, each taking the values zero and one with equal probabilities. In Step 5 of Procedure \ref{qrtest} we then set our critical value equal to the \(k^\text{th}\) smallest of the \(N\) computed statistics, where \(k=N-\lfloor N\alpha\rfloor\). So long as \(N\) is large, our critical value should be close to the conditional \((1-\alpha)\)-quantile of \(W_n^\tau\) given the data.

Step 6 in Procedure \ref{qrtest} entails randomizing between rejection and nonrejection when the test statistic $W_n$ and critical value $W_n^{(k)}$ are equal. Applied researchers may be reluctant to randomize in this way, and instead prefer to only reject the null when the test statistic strictly exceeds the critical value. In numerical simulations discussed in Section \ref{secsim} we found that the effect of always failing to reject, rather than randomizing, when the test statistic and critical value are equal is negligible when using the exchangeability test statistics $R_n$ and $S_n$ and the radial symmetry test statistics $R_n'$ and $S_n'$, even at sample sizes as small as $n=30$. On the other hand, choosing not to randomize can have a noticeable impact when using the statistics $T_n$ and $T_n'$. This may be because, holding $W_n$ fixed, the range of values taken by $W_n^\tau$ is much coarser when we construct $W_n$ using the uniform norm, leading to $W_n$ being exactly equal to its critical value with greater frequency. Although this discrepancy diminishes at larger sample sizes, applied researchers unwilling to randomize may prefer not to use the statistics $T_n$ and $T_n'$ in small samples.

\section{Asymptotic properties}\label{secasym}

\subsection{Conditional weak convergence of randomized statistics}

Though the procedure we have proposed may be viewed as a feasible version of a randomization test, we will study it using techniques most often used to demonstrate the asymptotic validity of bootstrap tests. Such demonstrations generally hinge upon the conditional law of a bootstrapped statistic converging in a suitable sense to a target distribution. The conditional law we refer to here is the law obtained by holding the data fixed and allowing the random weights used to generate the bootstrapped statistic to vary. The analysis of our procedure will be similar, in that it hinges upon suitable convergence of the conditional law of the randomized statistic \(W_n^\tau\). However in this case the source of random variation is not a collection of bootstrap weights, but rather a random \(n\)-tuple \(\tau=(\tau_1,\ldots,\tau_n)\) that indexes transforms in the group \(\mathbf G_n\).

Throughout this section, \(\tau\) represents an \(n\)-tuple of independent Bernoulli random variables each taking the values zero and one with equal probabilities, jointly independent of the data \((X_1,Y_1),\ldots,(X_n,Y_n)\). For \(n\in\mathbf N\), \(\xi^\tau_n\) is an element of a metric space \(\mathbf D\) depending on the data and \(\tau\), and \(\xi_n\) is an element of \(\mathbf D\) depending on the data but not on \(\tau\). The following notion of convergence is from \citet[pp.\ 19--20]{K08}.
\begin{definition}\label{cwcdef}
	If \(\xi\) is a tight random element of \(\mathbf D\) then we say that \(\xi^\tau_n\) weakly converges to \(\xi\) conditional on the data in probability, and write \(\xi^\tau_n\cwconv\xi\), if
	\begin{enumerate}
		\item \(\sup_{f\in\mathrm{BL}_1(\mathbf D)}\vert\ex_\tau f(\xi^\tau_n)-\ex f(\xi)\vert\to0\) in outer probability and
		\item \(\ex_\tau f(\xi^\tau_n)^\ast-\ex_\tau f(\xi^\tau_n)_\ast\to0\) in probability for every \(f\in\mathrm{BL}_1(\mathbf D)\),
	\end{enumerate}
	where \(\mathrm{BL}_1(\mathbf D)\) is the set of real Lipschitz functions on \(\mathbf D\) with level and Lipschitz constant bounded by one, \(\ex_\tau\) is expectation over \(\tau\) holding the data fixed, and \(f(\xi^\tau_n)^\ast\) and \(f(\xi^\tau_n)_\ast\) are the minimal measurable majorant and maximal measurable minorant of \(f(\xi^\tau_n)\) with respect to the data and random index jointly.
\end{definition}
The second condition in Definition \ref{cwcdef}, which asserts a form of conditional asymptotic measurability, allows us to handle cases where \(\xi^\tau_n\) is not Borel measurable. While this level of generality can be useful in other contexts, whenever the symbols \(\rightsquigarrow\) or \(\cwconv\) are used in this paper, each member of the convergent sequence will in fact be Borel measurable. This is true even in the nonseparable space \(\mathbf D=\ell^\infty([0,1]^2)\). To see why, note that the empirical copula \(C_n\) is uniquely determined by the \(n^2\) coordinate projections \(C_n(i/n,j/n)\), \(i,j=1,\ldots,n\), and that each of these projections can only take the values \(0,n^{-1},2n^{-1},\ldots,1\). Thus \(C_n\) can take only finitely many values in \(\ell^\infty([0,1]^2)\), and since the projections \(C_n(i/n,j/n)\) are random variables, \(C_n\) must be a simple map into \(\ell^\infty([0,1]^2)\). Similarly, \(D_n\) is a simple map into \(\ell^\infty([0,1]^2)\). The weakly convergent or conditionally weakly convergent sequences we consider in this paper are all simple, hence Borel measurable.

Theorems \ref{exthm} and \ref{rsthm} below establish that the randomized statistic \(W_n^\tau\) satisfies \(W_n^\tau\cwconv W\), where the law of \(W\) coincides with the weak limit of \(W_n\) when the null of symmetry is satisfied. From this conditional weak convergence and Lemma 10.11 of \citet{K08} it follows that
\begin{equation}\label{fslem0}
\pr(W_n^\tau\leq c\,|\,(X_1,Y_1),\ldots,(X_n,Y_n))\to\pr(W\leq c)
\end{equation}
in probability for all continuity points \(c\) of the cdf of \(W\). Consequently, in view of \eqref{Qn}, our feasible randomization tests control size asymptotically and are consistent against arbitrary violations of symmetry.

Recent work on the asymptotic properties of randomization tests by \citet{CR13,CR16a,CR16b} establishes conditional weak convergence of their randomization distributions by verifying an equivalent condition of \citet{H52}. Let $\tilde{W}_n(\cdot)$ be the map from randomized rank-pairs $(U_{n1}^\tau,V_{n1}^\tau),\ldots,(U_{nn}^\tau,V_{nn}^\tau)$ to randomized test statistic $W_n^\tau$, so that
\begin{equation}
W_n^\tau=\tilde{W}_n(g_\tau((U_{n1},V_{n1}),\ldots,(U_{nn},V_{nn}))).
\end{equation}
The conditional probability in \eqref{fslem0} is equal to
\begin{equation}
\frac{1}{2^n}\sum_{g\in\mathbf G_n}\mathbbm{1}(\tilde{W}_n(g((U_{n1},V_{n1}),\ldots,(U_{nn},V_{nn})))\leq c),
\end{equation}
and corresponds to the randomization distribution defined in Equation (5.1) of \citet{CR13}. Theorem 5.1 of \citet{CR13} therefore implies that \eqref{fslem0} is satisfied if and only if
\begin{equation}\label{hoeffding}
(W_n^\tau,W_n^{\tau'})\rightsquigarrow(W,W'),
\end{equation}
where \(W'\) and \(\tau'\) are independent copies of \(W\) and \(\tau\) respectively. Condition \eqref{hoeffding} is called Hoeffding's condition. We do not make use of Hoeffding's condition and instead verify \eqref{fslem0} directly, which we find to be more convenient.

The proofs of our main results rely on repeated applications of conditional versions of two fundamental results on weak convergence: the continuous mapping theorem and the delta method. We will state these here for convenience, and also because we require a version of the conditional delta method that is somewhat different to the usual statement. The following statement of the conditional continuous mapping theorem is Theorem 10.8 of \citet{K08}. We have dropped Kosorok's measurability condition on \(\tau\mapsto\xi_n^\tau\) because \(\tau\) takes only finitely many values.

\begin{thm}[Conditional continuous mapping theorem]
	Let \(\phi:\mathbf D\to\mathbf E\) be continuous at all points in \(\mathbf D_0\subset\mathbf D\), where \(\mathbf D\) and \(\mathbf E\) are Banach spaces and \(\mathbf D_0\) is closed. If \(\xi^\tau_n\cwconv\xi\) in \(\mathbf D\), where \(\xi\) is tight and concentrates on \(\mathbf D_0\), then \(\phi(\xi^\tau_n)\cwconv\phi(\xi)\) in \(\mathbf E\).
\end{thm}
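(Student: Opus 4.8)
The statement to be proved is Kosorok's conditional continuous mapping theorem. Since every sequence claimed to converge conditionally weakly in this paper consists of Borel measurable maps (as noted in the discussion preceding the statement), the distinction between inner and outer expectations is immaterial here and condition 2 of Definition \ref{cwcdef} holds automatically; it therefore suffices to verify condition 1, that $\sup_{h\in\mathrm{BL}_1(\mathbf E)}|\ex_\tau h(\phi(\xi^\tau_n))-\ex h(\phi(\xi))|\to0$ in probability. The plan is to deduce this from the ordinary continuous mapping theorem by a subsequence argument. A sequence of real random variables tends to zero in probability if and only if every subsequence has a further subsequence tending to zero almost surely, so it is enough to show that every subsequence of $(\xi^\tau_n)$ admits a further subsequence $(\xi^\tau_{n_k})$ along which $\sup_{h\in\mathrm{BL}_1(\mathbf E)}|\ex_\tau h(\phi(\xi^\tau_{n_k}))-\ex h(\phi(\xi))|\to0$ almost surely, the randomness being in the data alone.

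Applying the same subsequence characterization to the hypothesis $\xi^\tau_n\cwconv\xi$, one passes to a subsequence $(n_k)$ along which $\sup_{f\in\mathrm{BL}_1(\mathbf D)}|\ex_\tau f(\xi^\tau_{n_k})-\ex f(\xi)|\to0$ almost surely. Fix a realization of the data outside the exceptional null set. Along it, the (finitely supported) conditional law of $\xi^\tau_{n_k}$ given the data converges weakly, in the ordinary sense, to the law of $\xi$; that the bounded--Lipschitz criterion is equivalent to ordinary weak convergence here uses the tightness of $\xi$. Because $\xi$ concentrates on $\mathbf D_0$ and $\phi$ is continuous at every point of $\mathbf D_0$, the discontinuity set of $\phi$ is $\xi$-null, so the classical continuous mapping theorem — in the form permitting $\phi$ to be discontinuous off a set carrying full mass of the limit — shows that the conditional law of $\phi(\xi^\tau_{n_k})$ given the data converges weakly to the law of $\phi(\xi)$, equivalently $\sup_{h\in\mathrm{BL}_1(\mathbf E)}|\ex_\tau h(\phi(\xi^\tau_{n_k}))-\ex h(\phi(\xi))|\to0$ for this realization. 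As this holds for almost every realization, the required convergence along $(n_k)$ follows, and with it the theorem.

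An essentially equivalent but more self-contained route avoids subsequences by imitating the Portmanteau proof of the ordinary continuous mapping theorem: for closed $F\subseteq\mathbf E$ one has $\{\phi(\xi^\tau_n)\in F\}\subseteq\{\xi^\tau_n\in\overline{\phi^{-1}(F)}\}$; continuity of $\phi$ on $\mathbf D_0$ forces $\overline{\phi^{-1}(F)}\subseteq\phi^{-1}(F)\cup(\mathbf D\setminus\mathbf D_0)$ and hence $\pr(\xi\in\overline{\phi^{-1}(F)})\le\pr(\phi(\xi)\in F)$; and a conditional Portmanteau inequality for the closed set $\overline{\phi^{-1}(F)}$ — obtained from condition 1 of Definition \ref{cwcdef} by squeezing its indicator from above by the functions $x\mapsto(1-m\,d(x,\overline{\phi^{-1}(F)}))_{+}$, which lie in $\mathrm{BL}_1(\mathbf D)$ after division by $m$, and sending $m\to\infty$ with dominated convergence — then gives $\limsup_n\pr(\phi(\xi^\tau_n)\in F\mid(X_1,Y_1),\ldots,(X_n,Y_n))\le\pr(\phi(\xi)\in F)$ in probability for every closed $F$, from which condition 1 for $\phi(\xi^\tau_n)$ is recovered using tightness of $\phi(\xi)$ and an Arzel\`a--Ascoli net for $\mathrm{BL}_1(\mathbf E)$ over a compact set carrying most of its mass. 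Either way, the proof contains no single deep step; the main obstacle is bookkeeping — propagating ``in probability'' statements faithfully through the $\limsup$ over $n$ and the approximation limit, and, in the full generality of Definition \ref{cwcdef} where the maps need not be Borel, verifying the conditional asymptotic measurability condition 2 for the merely bounded and measurable (not Lipschitz, not even continuous) composition $h\circ\phi$, which needs its own sandwiching argument. These last complications do not arise here, since $C_n$, $D_n$ and every functional of them that we consider are simple maps, hence Borel.
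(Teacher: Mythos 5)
Your argument is essentially correct, but it is worth knowing that the paper does not prove this statement at all: it is quoted as Theorem 10.8 of Kosorok (2008), with the only original remark being that Kosorok's measurability condition on \(\tau\mapsto\xi_n^\tau\) can be dropped because \(\tau\) ranges over a finite set. What you have written is therefore a self-contained substitute for that citation rather than a variant of an in-paper argument. Your first route is the standard and cleanest one: reduce convergence in (outer) probability to almost-sure convergence along subsequences (via measurable majorants), note that for a fixed data realization the conditional law of \(\xi_{n_k}^\tau\) is a genuine, finitely supported Borel law, use tightness of \(\xi\) so that the bounded-Lipschitz criterion is equivalent to ordinary weak convergence, and then invoke the classical continuous mapping theorem for maps continuous off a limit-null set. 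This buys a transparent proof in exactly the setting the paper uses (all \(\xi_n^\tau\) simple, hence Borel), at the cost of not covering the full generality in which the theorem is stated: as you yourself flag, condition 2 of Definition \ref{cwcdef} for \(\phi(\xi_n^\tau)\) is only ``automatic'' because of the paper's restriction to simple maps, whereas Kosorok's proof (which the paper leans on) handles nonmeasurable \(\xi_n^\tau\) and is why the measurability hypothesis appears there in the first place. Your second, portmanteau-style route is also viable but, as you note, recovering the uniform bounded-Lipschitz statement from per-set limsup inequalities in probability takes more bookkeeping than the subsequence argument; if you were to write one of the two out in full, the first is preferable.
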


The following statement of the conditional delta method is a version of Theorem 12.1 of \citet{K08}, often referred to as the delta method for the bootstrap. It is unusual in that we allow the weak limits \(\mathbb X_1\) and \(\mathbb X_2\) to differ (and not merely by a scalar multiple \(c\)). We have again dropped the measurability condition on \(\tau\mapsto\xi_n^\tau\).

\begin{thm}[Conditional delta method]\label{cdm}
	Let \(\mathbf D\) and \(\mathbf E\) be Banach spaces and let \(\phi:\mathbf D_\phi\subset\mathbf D\to\mathbf E\) be Hadamard differentiable at \(\mu\in\mathbf D_\phi\) tangentially to \(\mathbf D_0\subset\mathbf D\), with derivative \(\phi'_\mu\). Let \(\xi_n\) and \(\xi_n^\tau\) take values in \(\mathbf D_\phi\), and suppose that \(\sqrt{n}(\xi_n-\mu)\rightsquigarrow\mathbb X_1\) and \(\sqrt{n}(\xi_n^\tau-\xi_n)\cwconv\mathbb X_2\) in \(\mathbf D\), where \(\mathbb X_1\) and \(\mathbb X_2\) are tight and take values in \(\mathbf D_0\). Then \(\sqrt{n}(\phi(\xi_n^\tau)-\phi(\xi_n))\cwconv\phi'_\mu(\mathbb X_2)\) in \(\mathbf E\).
\end{thm}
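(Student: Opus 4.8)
The plan is to adapt the proof of the delta method for the bootstrap (Theorem 12.1 of \citet{K08}); the relaxation allowing $\mathbb X_1$ and $\mathbb X_2$ to be unrelated will cost essentially nothing. Write $Z_n=\sqrt n(\xi_n-\mu)$ and $\Delta_n=\sqrt n(\xi_n^\tau-\xi_n)$, so that $\xi_n=\mu+n^{-1/2}Z_n$ and $\xi_n^\tau=\mu+n^{-1/2}(Z_n+\Delta_n)$ lie in $\mathbf D_\phi$, with $Z_n\rightsquigarrow\mathbb X_1$ and $\Delta_n\cwconv\mathbb X_2$. For each $n$ define $g_n(a,b)=\sqrt n\bigl(\phi(\mu+n^{-1/2}(a+b))-\phi(\mu+n^{-1/2}a)\bigr)$ on the set of pairs $(a,b)$ for which both arguments of $\phi$ lie in $\mathbf D_\phi$. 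Then $\sqrt n(\phi(\xi_n^\tau)-\phi(\xi_n))=g_n(Z_n,\Delta_n)$, and the task reduces to showing $g_n(Z_n,\Delta_n)\cwconv\phi'_\mu(\mathbb X_2)$.

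The first step is the pointwise identity $g_n(a_n,b_n)\to\phi'_\mu(b)$ whenever $a_n\to a$ and $b_n\to b$ with $a,b\in\mathbf D_0$. This follows by writing $g_n(a_n,b_n)$ as the difference of $\sqrt n(\phi(\mu+n^{-1/2}(a_n+b_n))-\phi(\mu))$ and $\sqrt n(\phi(\mu+n^{-1/2}a_n)-\phi(\mu))$, applying Hadamard differentiability of $\phi$ at $\mu$ to each term -- the tangent vectors $a+b$ and $a$ both lie in $\mathbf D_0$ because $\mathbf D_0$ is a linear space -- and using linearity of $\phi'_\mu$ to get $\phi'_\mu(a+b)-\phi'_\mu(a)=\phi'_\mu(b)$. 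Two features of this limit are essential: it does not depend on the first argument $a$, and it is linear in the second argument $b$. It is exactly here that Kosorok's hypothesis -- that $\sqrt n(\xi_n^\tau-\xi_n)$ converge conditionally to a scalar multiple of $\mathbb X_1$ -- can be dropped: only the second slot survives to the limit, so nothing need be assumed about how $\mathbb X_2$ relates to $\mathbb X_1$.

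With the pointwise identity established, the conclusion is an extended continuous mapping argument in conditional form. Because the limit forgets the first slot, $Z_n$ need not converge conditionally in probability -- and in fact it does not, being conditionally a data-dependent constant that is convergent only in distribution -- it is enough that $Z_n$ be asymptotically tight with limit concentrated on $\mathbf D_0$, which holds since $Z_n\rightsquigarrow\mathbb X_1$ with $\mathbb X_1$ tight on $\mathbf D_0$. To make this rigorous I would invoke the almost sure representation theorems for weak and for conditional weak convergence from \citet{K08}: along an arbitrary subsequence, and then along a further subsequence, one constructs on a common probability space data-versions $\hat\xi_n$ and $\hat\xi_n^\tau$ with the prescribed data marginals and conditional laws such that $\sqrt n(\hat\xi_n-\mu)\to\hat{\mathbb X}_1$ outer almost surely, with $\hat{\mathbb X}_1$ a tight element of $\mathbf D_0$, and, conditionally on the data, $\sqrt n(\hat\xi_n^\tau-\hat\xi_n)\to\hat{\mathbb X}_2$ outer almost surely, with $\hat{\mathbb X}_2$ a tight element of $\mathbf D_0$ distributed as $\mathbb X_2$. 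Applying the pointwise identity pathwise with $a_n=\sqrt n(\hat\xi_n-\mu)\to\hat{\mathbb X}_1$ and $b_n=\sqrt n(\hat\xi_n^\tau-\hat\xi_n)\to\hat{\mathbb X}_2$ gives $g_n(a_n,b_n)\to\phi'_\mu(\hat{\mathbb X}_2)$ outer almost surely; since $\phi'_\mu$ is continuous on $\mathbf D_0$ the limit $\phi'_\mu(\mathbb X_2)$ is a tight random element, and reading the almost sure convergence along subsequences back through the representation yields $g_n(Z_n,\Delta_n)\cwconv\phi'_\mu(\mathbb X_2)$. The conditional asymptotic measurability required by Definition \ref{cwcdef} is automatic in every application in this paper, where the maps involved take only finitely many values.

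The step I expect to be the main obstacle is the joint almost sure representation: one must build a single probability space on which the unconditional convergence of $\xi_n$ and the conditional-in-probability convergence of $\xi_n^\tau$ simultaneously become outer almost sure statements, with the data coupled consistently across the two. This is needed precisely because one cannot argue pathwise directly -- $\xi_n$ converges only in distribution and $\xi_n^\tau$ only conditionally and in probability -- and one cannot linearize around the fixed point $\mu$, since $\sqrt n(\xi_n^\tau-\mu)=Z_n+\Delta_n$ has no fixed conditional limit law. Linearizing instead around the moving centre $\xi_n$ through $g_n$, whose limit discards $Z_n$, is what makes the argument go; the remaining ingredients -- two applications of Hadamard differentiability, linearity of $\phi'_\mu$, and the routine subsequence bookkeeping -- are standard.
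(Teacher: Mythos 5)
Your first step---the two-argument map \(g_n(a,b)=\sqrt{n}\bigl(\phi(\mu+n^{-1/2}(a+b))-\phi(\mu+n^{-1/2}a)\bigr)\), its pointwise limit \(\phi'_\mu(b)\), and the observation that only the second slot survives (which is exactly why \(\mathbb X_1\) and \(\mathbb X_2\) may be unrelated)---is sound and is consistent with the modification the paper makes to Kosorok's argument. The gap is in the convergence machinery you then invoke. Your whole argument rests on a ``joint almost sure representation'' that simultaneously turns the unconditional convergence \(\sqrt{n}(\xi_n-\mu)\rightsquigarrow\mathbb X_1\) and the conditional-in-probability convergence \(\sqrt{n}(\xi_n^\tau-\xi_n)\cwconv\mathbb X_2\) into outer almost sure statements on one space, ``with the data coupled consistently.'' No such representation theorem exists in \citet{K08} or in the standard references: almost sure representation is available for Hoffmann-J{\o}rgensen weak convergence, not for \(\cwconv\), and you give no construction. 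The difficulty is not cosmetic. Conditionally on a fixed data sequence, \(a_n=\sqrt{n}(\xi_n-\mu)\) is a deterministic sequence that in general does not converge at all, so the pathwise lemma \(g_n(a_n,b_n)\to\phi'_\mu(b)\) cannot be applied given the data; and if you instead build new versions \(\hat\xi_n\) for which \(\sqrt{n}(\hat\xi_n-\mu)\) converges outer a.s., you must also verify that the conditional law of \(\hat\xi_n^\tau\) given \(\hat\xi_n\) matches the original one and that conditional-a.s.\ convergence on the representation space translates back into the BL-metric-in-probability statement of Definition \ref{cwcdef} on the original space. Those verifications are precisely the content that is missing, and you flag the step yourself as the obstacle without resolving it.

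The proof the paper has in mind (Theorem 12.1 of \citet{K08} with \(c=1\)) avoids any representation of the conditional convergence. It first upgrades to joint \emph{unconditional} weak convergence of \(\bigl(\sqrt{n}(\xi_n-\mu),\sqrt{n}(\xi_n^\tau-\xi_n)\bigr)\) to independent copies \(\tilde{\mathbb X}_1\) and \(\tilde{\mathbb X}_2\)---this is exactly where the paper's modification ``independent copies of \(\mathbb X_1\) and \(\mathbb X_2\) respectively'' enters, and it is what your sketch has no counterpart for. With \(\sqrt{n}(\xi_n^\tau-\mu)\) then unconditionally tight with limit in \(\mathbf D_0\), the usual delta-method remainder argument gives \(\sqrt{n}(\phi(\xi_n^\tau)-\phi(\xi_n))-\phi'_\mu\bigl(\sqrt{n}(\xi_n^\tau-\xi_n)\bigr)\to0\) in outer probability, after which the conditional continuous mapping theorem applied to the continuous linear map \(\phi'_\mu\) gives \(\phi'_\mu(\sqrt{n}(\xi_n^\tau-\xi_n))\cwconv\phi'_\mu(\mathbb X_2)\), and a conditional Slutsky step absorbs the remainder. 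If you want to salvage your write-up, replace the purported joint representation with this chain; your \(g_n\) computation then serves only as the pointwise ingredient of the remainder step.
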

To obtain Theorem \ref{cdm} we set \(c=1\) in the proof of Theorem 12.1 of \citet{K08}, and when Kosorok introduces two independent copies \(\tilde{\mathbb X}_1\) and \(\tilde{\mathbb X}_2\) of \(\mathbb X\), we instead introduce two independent copies \(\tilde{\mathbb X}_1\) and \(\tilde{\mathbb X}_2\) of \(\mathbb X_1\) and \(\mathbb X_2\) respectively. It is crucial for us to allow the laws of \(\mathbb X_1\) and \(\mathbb X_2\) to differ. In the proofs of Lemmas \ref{exlem} and \ref{rslem} and Theorems \ref{exthm} and \ref{rsthm} below, we apply the conditional delta method even though the weak limits in \eqref{cwcexpf} and \eqref{uwcexpf}, in \eqref{product1} and \eqref{product2}, in \eqref{cwcrspf3} and \eqref{cwcrspf4}, and in \eqref{product1rs} and \eqref{product2rs}, differ.

Asymptotic justifications of bootstrap procedures typically appeal directly or indirectly to a multiplier central limit theorem. For instance, \citet{RS09} appeal directly to a multiplier central limit theorem to demonstrate the asymptotic validity of their procedure, whereas \citet{BD10} appeal to Theorem 2.6 of \citet{K08}, which is proved using a multiplier central limit theorem. This approach is less suitable for our problem, because the source of variation in the randomized statistic \(W_n^\tau\) is not a random \(n\)-tuple of independent bootstrap weights (i.e.\ multipliers), but rather a random \(n\)-tuple of transforms drawn independently from \(\mathbf G\). The following lemma plays the role of a multiplier central limit theorem in our analysis, in the sense that it shows how random variation in the draws from \(\mathbf G\) can provide a source of conditional weak convergence to a suitable Gaussian limit. It is proved in Section \ref{appx} by verifying the conditions of a functional central limit theorem of \citet{P90}.

\begin{lemma}\label{lemcwc}
	Let \(\pi^0\) and \(\pi^1\) be the pair of maps defined in either \eqref{expi} or \eqref{rspi}, and for \(i=1,\ldots,n\) and \((u,v)\in[0,1]^2\) let
	\begin{equation*}
	Z_{ni}(u,v)=\mathbbm{1}\left(\pi^{\tau_i}(U_{ni},V_{ni})\leq(u,v)\right)-\pr\left(\pi^{\tau_i}(U_{ni},V_{ni})\leq(u,v)\,\vert\,(U_{ni},V_{ni})\right).
	\end{equation*}
	Let \(\xi^\tau_n=n^{-1/2}\sum_{i=1}^nZ_{ni}\). Then \(\xi^\tau_n\cwconv\mathbb A\) in \(\ell^\infty([0,1]^2)\), where \(\mathbb A\) is centered and Gaussian with continuous sample paths. The covariance kernel of \(\mathbb A\) is given by
	\begin{align}\label{kerAex}
	\cov\left(\mathbb A(u,v),\mathbb A(u',v')\right))&=\frac{1}{4}C(u\wedge u',v\wedge v')+\frac{1}{4}C(v\wedge v',u\wedge u')\notag\\&\quad-\frac{1}{4}C(u\wedge v',v\wedge u')-\frac{1}{4}C(v\wedge u',u\wedge v')
	\end{align}
	if \(\pi^0\) and \(\pi^1\) are defined as in \eqref{expi}, or by
	\begin{align}\label{kerArs}
	\cov\left(\mathbb A(u,v),\mathbb A(u',v')\right))&=\frac{1}{4}C(u\wedge u',v\wedge v')+\frac{1}{4}C^\srv(u\wedge u',v\wedge v')\notag\\&\quad-\frac{1}{4}\mathrm{P}\left(1-u'<U\leq u,1-v'<V\leq v\right)\notag\\&\quad-\frac{1}{4}\mathrm{P}\left(1-u<U\leq u',1-v<V\leq v'\right)
	\end{align}
	if \(\pi^0\) and \(\pi^1\) are defined as in \eqref{rspi}.
\end{lemma}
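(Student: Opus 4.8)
The plan is to condition on the sample $(X_1,Y_1),\dots,(X_n,Y_n)$, treating it as a vector of fixed constants, and to apply the functional central limit theorem for triangular arrays of independent stochastic processes of \citet{P90}, with $\tau$ supplying all the randomness. The starting point is the algebraic reduction obtained by writing $\epsilon_i=2\tau_i-1$ and noting that $\mathbbm{1}(\tau_i=1)=(1+\epsilon_i)/2$ and $\mathbbm{1}(\tau_i=0)=(1-\epsilon_i)/2$, so that
\[
\mathbbm{1}\bigl(\pi^{\tau_i}(U_{ni},V_{ni})\leq(u,v)\bigr)=\frac12\sum_{t=0}^{1}\mathbbm{1}\bigl(\pi^{t}(U_{ni},V_{ni})\leq(u,v)\bigr)+\frac{\epsilon_i}{2}\,g_{ni}(u,v),
\]
where $g_{ni}(u,v)=\mathbbm{1}(\pi^{1}(U_{ni},V_{ni})\leq(u,v))-\mathbbm{1}(\pi^{0}(U_{ni},V_{ni})\leq(u,v))\in\{-1,0,1\}$ depends on the data but not on $\tau$. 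Since $\tau_i$ is independent of the data, the first term on the right is exactly the conditional mean subtracted in the definition of $Z_{ni}$, so $Z_{ni}(u,v)=\tfrac12\epsilon_i g_{ni}(u,v)$ and $\xi_n^\tau(u,v)=(2\sqrt n)^{-1}\sum_{i=1}^n\epsilon_i g_{ni}(u,v)$. Conditionally on the data, $\xi_n^\tau$ is therefore a normalized sum of the row-independent, conditionally centered processes $f_{ni}=(2\sqrt n)^{-1}\epsilon_i g_{ni}$, each uniformly bounded by $(2\sqrt n)^{-1}$.

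Next I would verify the hypotheses of Pollard's theorem for the array $\{f_{ni}\}$ conditionally on the data. The uniform envelope bound makes two of them immediate: $\sum_i\ex_\tau F_{ni}^2\leq1/4$ is bounded, and the Lindeberg-type condition holds because the envelopes vanish uniformly. For the limiting covariance, independence of the $\epsilon_i$ gives $\ex_\tau[\xi_n^\tau(u,v)\xi_n^\tau(u',v')]=(4n)^{-1}\sum_i g_{ni}(u,v)g_{ni}(u',v')$; expanding this product of indicator‑differences into four terms and using $\mathbbm{1}(A)\mathbbm{1}(B)=\mathbbm{1}(A\cap B)$ converts each term, after averaging over $i$, into a value of $C_n$, a value of $D_n$, or an empirical probability of a rectangle, evaluated at corner points built from $(u,v)$ and $(u',v')$. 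Since $\|C_n-C\|_\infty\to0$ and $\|D_n-C^\srv\|_\infty\to0$ almost surely (the latter via Lemma~\ref{gnlem} together with $C_n^\srv(u,v)=u+v-1+C_n(1-u,1-v)$), and since rectangle probabilities under the continuous law of $(U,V)$ are unaffected by whether the defining inequalities are strict, these empirical quantities converge almost surely and uniformly to the matching population quantities; collecting the four terms reproduces \eqref{kerAex} under \eqref{expi} and \eqref{kerArs} under \eqref{rspi}. By the same expansion, the intrinsic $L^2$ semimetric $\rho_n((u,v),(u',v'))^2=(4n)^{-1}\sum_i(g_{ni}(u,v)-g_{ni}(u',v'))^2$ equals, up to a uniformly $O(n^{-1})$ term, a fixed finite signed combination of $C_n$- and $D_n$-values, so it converges uniformly almost surely to the corresponding combination $\rho^2$ of $C$- and $C^\srv$-values; this $\rho$ is continuous on $[0,1]^2\times[0,1]^2$ and vanishes on the diagonal, whence $([0,1]^2,\rho)$ is totally bounded and $\rho(s_n,t_n)\to0$ forces $\rho_n(s_n,t_n)\to0$ on the almost sure event.

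It remains to check manageability of $\{f_{ni}\}$. Because $f_{ni}$ is a fixed scalar times the Rademacher variable $\epsilon_i$ times $g_{ni}$, and $g_{ni}$ is the difference of two members of the classes $\{(a,b)\mapsto\mathbbm{1}(\pi^{t}(a,b)\leq(u,v)):(u,v)\in[0,1]^2\}$, $t=0,1$—each of which consists of indicators of quadrants composed with the fixed coordinate swap or reflection, hence is a VC–subgraph class of index bounded independently of $n$—the array is manageable by the stability of manageability under sign changes and scaling established in \citet{P90}. Pollard's theorem then yields, for almost every realization of the data, $\xi_n^\tau\rightsquigarrow\mathbb A$ in $\ell^\infty([0,1]^2)$ as a weak limit in $\tau$, with $\mathbb A$ centered Gaussian, covariance kernel \eqref{kerAex} or \eqref{kerArs}, and $\rho$-uniformly continuous sample paths; since $\rho$ is dominated by a continuous function of Euclidean distance, those paths are continuous on $[0,1]^2$. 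Finally, each $\xi_n^\tau$ is a simple—hence Borel measurable—map into $\ell^\infty([0,1]^2)$, so the second requirement in Definition~\ref{cwcdef} is automatic, while the almost sure convergence $\sup_{f\in\mathrm{BL}_1}|\ex_\tau f(\xi_n^\tau)-\ex f(\mathbb A)|\to0$ just obtained is the first. Hence $\xi_n^\tau\cwconv\mathbb A$.

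I expect the manageability check to be the delicate step: one must pin down the correct underlying VC class, confirm that its index does not grow with $n$, and carry manageability through the Rademacher weighting and the $n^{-1/2}$ normalization. The covariance and semimetric identities are routine but bookkeeping-heavy, and the boundedness and Lindeberg conditions are essentially free once the $(2\sqrt n)^{-1}$ bound on the summands is in hand.
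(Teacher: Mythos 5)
Your proposal is correct and follows essentially the same route as the paper's proof: conditioning on the data, writing \(Z_{ni}\) as a Rademacher sign times half the difference of the two transformed indicators, and verifying the envelope, Lindeberg, covariance and semimetric conditions of Pollard's functional CLT, with the limit kernels \eqref{kerAex} and \eqref{kerArs} recovered from the a.s.\ consistency of \(C_n\), of \(D_n\) (Lemma \ref{gnlem}), and of the rank-based rectangle frequencies (Lemma \ref{empboxc}). The only departures are cosmetic: you obtain manageability from VC-subgraph stability under random signs and scaling rather than from monotonicity of the uncentered indicators in \((u,v)\) as in the paper, and you do not check the almost-measurable-Suslin condition in \citet{K08}'s statement of the theorem of \citet{P90}, which the paper verifies via separability on the grid \(\{(i/n,j/n):0\leq i,j\leq n\}\) and which is harmless here since your processes take only finitely many values.
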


\subsection{Asymptotic properties: exchangeability test}\label{exsasym}

In this subsection we let \(W_n=W_n((U_{n1},V_{n1}),\ldots,(U_{nn},V_{nn}))\) denote any of the statistics \(R_n\), \(S_n\) and \(T_n\), and let \(\pi^0\) and \(\pi^1\) be as defined in \eqref{expi}. Let \((\tilde{U}_{ni}^\tau,\tilde{V}_{ni}^\tau)\), \(i=1,\ldots,n\), be constructed as described in Steps 1--3 of Procedure \ref{qrtest}, with \(\tau\) an \(n\)-tuple of independent Bernoulli random variables each taking the values zero and one with equal probabilities. Let \(C_n^\tau\) be the random element of \(\ell^\infty([0,1]^2)\) given by
\begin{equation}\label{ctauex}
C_n^\tau(u,v)=\frac{1}{n}\sum_{i=1}^n\mathbbm{1}\left(\tilde{U}_{ni}^\tau\leq u,\tilde{V}_{ni}^\tau\leq v\right).
\end{equation}
The randomized statistic \(W_n^\tau=W_n((\tilde{U}^\tau_{n1},\tilde{V}^\tau_{n1}),\ldots,(\tilde{U}^\tau_{nn},\tilde{V}^\tau_{nn}))\) can be expressed in terms of \(C_n^\tau\): we simply substitute \(C_n^\tau\) for \(C_n\) in the relevant formula from \eqref{rref}--\eqref{tref}. The following lemma describes the conditional asymptotic behavior of \(C_n^\tau\).

\begin{lemma}\label{exlem}
	Suppose that \(C\) is regular. Then
	\begin{equation}\label{excwc2}
	\sqrt{n}\left(C_n^\tau-\frac{1}{2}(C_n+C_n^\top)\right)\cwconv\mathbb D,
	\end{equation}
	where \(\mathbb D\) can be written as
	\begin{equation}\label{exd}
	\mathbb D(u,v)=\mathbb A(u,v)-\frac{1}{2}(\dot{C}_1(u,v)+\dot{C}_2(v,u))\mathbb A(u,1)-\frac{1}{2}(\dot{C}_2(u,v)+\dot{C}_1(v,u))\mathbb A(1,v),%\quad (u,v)\in[0,1]^2,
	\end{equation}
	and \(\mathbb A\) is a centered Gaussian random element of \(\ell^\infty([0,1]^2)\) with continuous sample paths and covariance kernel given by \eqref{kerAex}.
\end{lemma}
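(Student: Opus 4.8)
The plan is to exhibit $C_n^\tau$ as, up to asymptotically negligible terms, the image under the Deheuvels copula map of an empirical cumulative distribution function to which Lemma \ref{lemcwc} directly applies, and then to push the conditional weak convergence through that map with the conditional delta method, Theorem \ref{cdm}. Let $\pi^0,\pi^1$ be as in \eqref{expi} and $Z_{ni}$ as in Lemma \ref{lemcwc}, and let $\hat C_n^\tau(u,v)=n^{-1}\sum_{i=1}^n\mathbbm 1(\pi^{\tau_i}(U_{ni},V_{ni})\le(u,v))$ be the empirical cdf of the transformed rank pairs \emph{before} the perturbation and re-ranking of Steps 2--3 of Procedure \ref{qrtest}. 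Because each $\tau_i$ is uniform on $\{0,1\}$ and independent of the data, $\ex_\tau\mathbbm 1(\pi^{\tau_i}(U_{ni},V_{ni})\le(u,v))=\tfrac12\mathbbm 1(U_{ni}\le u,V_{ni}\le v)+\tfrac12\mathbbm 1(V_{ni}\le u,U_{ni}\le v)$, whence $\ex_\tau\hat C_n^\tau=\tfrac12(C_n+C_n^\top)$ and, by the definition of $Z_{ni}$, $\sqrt n\,(\hat C_n^\tau-\tfrac12(C_n+C_n^\top))=n^{-1/2}\sum_{i=1}^n Z_{ni}$. Hence Lemma \ref{lemcwc} gives $\sqrt n\,(\hat C_n^\tau-\tfrac12(C_n+C_n^\top))\cwconv\mathbb A$ at once.

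Second, I would relate $\hat C_n^\tau$ to what Procedure \ref{qrtest} actually computes. Write $\check H_n^\tau$ for the empirical cdf of the perturbed pairs $(\check U_{ni}^\tau,\check V_{ni}^\tau)$ of Step 2 and $\Phi$ for the map $H\mapsto H(F_H^\leftarrow(\cdot),G_H^\leftarrow(\cdot))$, where $F_H,G_H$ denote the margins of $H$. Since the perturbations $n^{-1}\eta_i$ are positive and smaller than the rank spacing $n^{-1}$ while $U_{ni}^\tau,V_{ni}^\tau$ lie on the grid $\{n^{-1},2n^{-1},\dots,1\}$ with multiplicity at most two, a short count shows $\|\check H_n^\tau-\hat C_n^\tau\|_\infty\le 4n^{-1}$ almost surely; this is essentially the content of Lemma \ref{lemeta}. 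Note also $\check U_{ni}^\tau=U_{ni}^\tau-n^{-1}\eta_i\in(0,1)$, so $\check H_n^\tau$ is a genuine bivariate cdf on $[0,1]^2$ lying in the domain of $\Phi$. The R\"uschendorf-type empirical copula $C_n^\tau$ of \eqref{ctauex}, being built from the ranks of $(\check U_{ni}^\tau,\check V_{ni}^\tau)$, agrees with the Deheuvels version $\Phi(\check H_n^\tau)$ up to $2n^{-1}$ almost surely by Lemma \ref{frwlem} applied to those pairs. Combining these facts, $\sqrt n\,(\check H_n^\tau-\tfrac12(C_n+C_n^\top))\cwconv\mathbb A$ and $C_n^\tau=\Phi(\check H_n^\tau)+O(n^{-1})$ uniformly.

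Third comes the delta method. Since $C$ is regular, so is $\bar C:=\tfrac12(C+C^\top)$, a copula with partial derivatives $\dot{\bar C}_1(u,v)=\tfrac12(\dot C_1(u,v)+\dot C_2(v,u))$ and $\dot{\bar C}_2(u,v)=\tfrac12(\dot C_2(u,v)+\dot C_1(v,u))$; hence, by the results of \citet{B11} and \citet{BV13} underlying the delta-method derivation of $\mathbb C_n\rightsquigarrow\mathbb C$, the copula map $\Phi$ is Hadamard differentiable at $\bar C$ tangentially to $C([0,1]^2)$ with derivative $(\Phi'_{\bar C}\alpha)(u,v)=\alpha(u,v)-\dot{\bar C}_1(u,v)\alpha(u,1)-\dot{\bar C}_2(u,v)\alpha(1,v)$. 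I would then apply Theorem \ref{cdm} with $\phi=\Phi$, $\xi_n^\tau=\check H_n^\tau$, $\xi_n=\tfrac12(C_n+C_n^\top)$, $\mu=\bar C$: the hypotheses hold because $\sqrt n\,(\xi_n-\bar C)=\tfrac12(\mathbb C_n+\mathbb C_n^\top)\rightsquigarrow\tfrac12(\mathbb C+\mathbb C^\top)$ by \citet{S12} and continuity of $\theta\mapsto\theta^\top$, while $\sqrt n\,(\xi_n^\tau-\xi_n)\cwconv\mathbb A$ by the previous step, and both limits have continuous sample paths. This yields $\sqrt n\,(\Phi(\check H_n^\tau)-\Phi(\tfrac12(C_n+C_n^\top)))\cwconv\Phi'_{\bar C}(\mathbb A)$. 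Because the margins of $\tfrac12(C_n+C_n^\top)$ lie within $n^{-1}$ of the identity, $\Phi(\tfrac12(C_n+C_n^\top))=\tfrac12(C_n+C_n^\top)+O(n^{-1})$ uniformly; combining this with $C_n^\tau=\Phi(\check H_n^\tau)+O(n^{-1})$ and a conditional Slutsky argument gives $\sqrt n\,(C_n^\tau-\tfrac12(C_n+C_n^\top))\cwconv\Phi'_{\bar C}(\mathbb A)$. Reading off the derivative shows $\Phi'_{\bar C}(\mathbb A)$ equals the $\mathbb D$ of \eqref{exd}, and since $\Phi'_{\bar C}$ is bounded and linear while $\mathbb A$ is centered Gaussian with continuous paths by Lemma \ref{lemcwc}, so is $\mathbb D$. (When $C$ is additionally exchangeable, $\bar C=C$ and $\mathbb D$ stands to $\mathbb A$ exactly as $\mathbb C$ stands to the Brownian bridge $\mathbb B$ in \eqref{cdef1}.)

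The step I expect to be the main obstacle is making the ``$O(n^{-1})$'' reductions rigorous as statements about $\cwconv$: one must verify that replacing $C_n^\tau$ by $\Phi(\check H_n^\tau)$, $\check H_n^\tau$ by $\hat C_n^\tau$, and $\Phi(\tfrac12(C_n+C_n^\top))$ by $\tfrac12(C_n+C_n^\top)$ each perturbs the $\sqrt n$-scaled process by an amount that is $o(1)$ uniformly in $(u,v)$ \emph{and} in $\tau$, so that conditional weak convergence is preserved (a conditional analogue of Slutsky's lemma), and one must confirm both that $\check H_n^\tau$ and $\tfrac12(C_n+C_n^\top)$ stay in the domain of $\Phi$ and that the tangential Hadamard differentiability of $\Phi$ at the possibly non-exchangeable regular copula $\bar C$ is available in the stated form. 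By contrast, the central-limit ingredient is not an obstacle here, being delivered ready-made by Lemma \ref{lemcwc}.
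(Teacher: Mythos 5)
Your proposal is correct and follows essentially the same route as the paper's proof: identify $\sqrt n\,(\tilde C_n^\tau-\tfrac12(C_n+C_n^\top))$ with $n^{-1/2}\sum_i Z_{ni}$ so that Lemma \ref{lemcwc} applies, absorb the perturbation via Lemma \ref{lemeta}, apply the conditional delta method with the B\"ucher--Volgushev Hadamard differentiability of $\Phi$ at the regular copula $\tfrac12(C+C^\top)$ (with the two distinct limits $\tfrac12(\mathbb C+\mathbb C^\top)$ and $\mathbb A$), and then pass back to $C_n^\tau$ and $\tfrac12(C_n+C_n^\top)$ using Lemma \ref{frwlem} and the $O(n^{-1})$ margin bounds. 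The only slight imprecision is that the differentiability of $\Phi$ holds tangentially to $\mathbf D_0$ rather than to all of $C([0,1]^2)$, which is harmless here since $\mathbb A$ and $\tfrac12(\mathbb C+\mathbb C^\top)$ concentrate on $\mathbf D_0$.
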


From Lemma \ref{exlem} and the conditional continuous mapping theorem we have
\begin{equation}\label{cwccmct}
\sqrt{n}\left(C_n^\tau-C_n^{\tau\top}\right)\cwconv\mathbb D-\mathbb D^\top.
\end{equation}
On the other hand, it is apparent from \eqref{rreflim}--\eqref{treflim} that it is the law of \(\mathbb C-\mathbb C^\top\) which we wish to approximate. In fact, Lemma \ref{exlem2} below establishes that, when the null of exchangeability is satisfied, \(\mathbb C-\mathbb C^\top\) and \(\mathbb D-\mathbb D^\top\) are equal in law. This leads us to our main result on our procedure for testing copula exchangeability.

\begin{thm}\label{exthm}
	Let \(W\) be the real valued random variable given by
	\begin{equation*}
	W=
	\begin{cases}
	\int_0^1\int_0^1\left(\mathbb{D}(u,v)-\mathbb{D}(v,u)\right)^2\mathrm{d}u\mathrm{d}v&\text{if }W_n=R_n\\
	\frac{1}{2}\int_0^1\int_0^1\left(\mathbb{D}(u,v)-\mathbb{D}(v,u)\right)^2(C(\mathrm{d}u,\mathrm{d}v)+C(\mathrm{d}v,\mathrm{d}u))&\text{if }W_n=S_n\\
	\sup_{(u,v)\in[0,1]^2}\left\vert\mathbb{D}(u,v)-\mathbb{D}(v,u)\right\vert&\text{if }W_n=T_n.
	\end{cases}
	\end{equation*}
	If \(C\) is regular then \(W_n^\tau\cwconv W\). If also \(C=C^\top\) then \(\mathbb C-\mathbb C^\top\eql\mathbb D-\mathbb D^\top\) and thus the law of \(W\) is the relevant weak limit from \eqref{rreflim}--\eqref{treflim}.
\end{thm}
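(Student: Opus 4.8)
The plan is to reduce all three cases of the conditional weak convergence $W_n^\tau\cwconv W$ to the single conditional weak convergence already recorded in \eqref{cwccmct}, namely $\sqrt n(C_n^\tau-C_n^{\tau\top})\cwconv\mathbb D-\mathbb D^\top$ in $\ell^\infty([0,1]^2)$, and then, under the extra hypothesis $C=C^\top$, to read off the distributional identity from Lemma \ref{exlem2}. As a preliminary I would note that $\mathbb D-\mathbb D^\top$ is a tight, centered Gaussian element of $\ell^\infty([0,1]^2)$ with continuous sample paths, being a bounded linear image of $\mathbb D$ (tight with continuous paths by Lemma \ref{exlem}); hence the tightness and concentration hypotheses of the conditional continuous mapping theorem will be available throughout, with $\mathbf D_0=\ell^\infty([0,1]^2)$. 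The sequences in play are all simple maps, so the measurability subtleties flagged in Definition \ref{cwcdef} do not arise, as noted above.

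For $W_n=R_n$ and $W_n=T_n$ I expect the argument to be immediate. Writing $\mathbb E_n^\tau:=\sqrt n(C_n^\tau-C_n^{\tau\top})$, substituting $C_n^\tau$ for $C_n$ in \eqref{rref} and \eqref{tref} gives $R_n^\tau=\int_0^1\int_0^1\mathbb E_n^\tau(u,v)^2\,\mathrm du\,\mathrm dv$ and $T_n^\tau=\sup_{(u,v)\in[0,1]^2}\lvert\mathbb E_n^\tau(u,v)\rvert$, and the functionals $\theta\mapsto\int_0^1\int_0^1\theta(u,v)^2\,\mathrm du\,\mathrm dv$ and $\theta\mapsto\sup_{(u,v)}\lvert\theta(u,v)\rvert$ are continuous from $\ell^\infty([0,1]^2)$ to $\mathbf R$. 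Applying the conditional continuous mapping theorem to \eqref{cwccmct} then yields exactly the limits claimed for $R_n$ and $T_n$.

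The case $W_n=S_n$ is the one that needs care, since the integrator in \eqref{sref} is itself randomized: $S_n^\tau=\int_{[0,1]^2}\mathbb E_n^\tau(u,v)^2\,C_n^\tau(\mathrm du,\mathrm dv)$. First I would use the conditional continuous mapping theorem to pass from \eqref{cwccmct} to $(\mathbb E_n^\tau)^2\cwconv(\mathbb D-\mathbb D^\top)^2$, a limit with continuous paths. Next I would argue that the integrator converges uniformly, conditionally on the data in probability, to the continuous function $\tfrac12(C+C^\top)$: by \eqref{excwc2} the rescaled deviation $\sqrt n(C_n^\tau-\tfrac12(C_n+C_n^\top))$ is conditionally tight, so $C_n^\tau-\tfrac12(C_n+C_n^\top)\to0$ uniformly in the conditional sense, while $\tfrac12(C_n+C_n^\top)\to\tfrac12(C+C^\top)$ uniformly almost surely by the Glivenko--Cantelli theorem. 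Because this second limit is nonrandom, the pair $((\mathbb E_n^\tau)^2,C_n^\tau)$ converges jointly, conditionally in probability, to $((\mathbb D-\mathbb D^\top)^2,\tfrac12(C+C^\top))$. A final application of the conditional continuous mapping theorem, now to the map $(h,F)\mapsto\int_{[0,1]^2}h\,\mathrm dF$ — which is continuous at pairs of continuous arguments, since uniform convergence of the integrands together with weak convergence of the induced probability measures passes to the integral — then gives $S_n^\tau\cwconv\int_{[0,1]^2}(\mathbb D(u,v)-\mathbb D(v,u))^2\cdot\tfrac12(C(\mathrm du,\mathrm dv)+C(\mathrm dv,\mathrm du))$, which is the stated $W$. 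I expect this to be the main obstacle: the continuity of the integration functional and the passage from marginal to joint conditional weak convergence (legitimate precisely because one coordinate has a nonrandom limit) must be made rigorous within the conditional framework. The unconditional analogue is essentially the argument of \citet[Prop.\ 3]{GNQ12}, and I would most likely isolate the combined statement as a supplementary lemma in the Appendix.

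For the final assertion, assume $C=C^\top$. I would invoke Lemma \ref{exlem2} for $\mathbb C-\mathbb C^\top\eql\mathbb D-\mathbb D^\top$; its proof compares covariance kernels — both differences are centered Gaussian with continuous paths, so equality in law reduces to equality of kernels, the quadratic terms $C(\cdot,\cdot)C(\cdot,\cdot)$ arising in \eqref{cdef1}--\eqref{cdef2} cancel in $\mathbb C-\mathbb C^\top$ when $C=C^\top$, and the surviving terms match those produced by \eqref{exd} and \eqref{kerAex} once one uses $\dot C_1(v,u)=\dot C_2(u,v)$ and $C^\top=C$. Finally, $C=C^\top$ makes $\tfrac12(C(\mathrm du,\mathrm dv)+C(\mathrm dv,\mathrm du))$ equal to $C(\mathrm du,\mathrm dv)$, so each of the three formulas defining $W$ becomes the very continuous functional of $\mathbb D-\mathbb D^\top$ that appears, with $\mathbb C-\mathbb C^\top$ in its place, in \eqref{rreflim}--\eqref{treflim}; combined with $\mathbb C-\mathbb C^\top\eql\mathbb D-\mathbb D^\top$ this shows the law of $W$ is the relevant weak limit.
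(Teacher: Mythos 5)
Your handling of $R_n$, $T_n$, and the final claim coincides with the paper's: the first two follow from the conditional continuous mapping theorem applied to \eqref{cwccmct}, and the last is exactly an invocation of Lemma \ref{exlem2} together with the observation that $C=C^\top$ collapses $\tfrac12(C(\mathrm du,\mathrm dv)+C(\mathrm dv,\mathrm du))$ to $C(\mathrm du,\mathrm dv)$. For $S_n$, however, you take a genuinely different route. The paper does not argue via Slutsky plus continuity of integration: it forms the pair of convergences \eqref{product1}--\eqref{product2} and applies the conditional delta method (Theorem \ref{cdm}, in the generalized form allowing the two weak limits to differ, which is precisely why that generalization is stated) to the integration functional $(A,B)\mapsto\iint A\,\mathrm dB$, whose Hadamard differentiability is imported from Lemma 4.3 of \citet{CI10}, exactly as in \citet[p.\ 831]{GNQ12}. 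The paper's route outsources all the analytic delicacy to that off-the-shelf differentiability lemma; your route is more elementary in spirit, but the two steps you yourself flag are exactly where it is not yet a proof. The Slutsky-type upgrade to joint conditional weak convergence when one coordinate has a deterministic limit is true but must be established in the $\cwconv$ sense, since no such statement is available in the paper. More importantly, the map $(h,F)\mapsto\int h\,\mathrm dF$ is not defined, let alone continuous, on the ambient space $\ell^\infty([0,1]^2)\times\ell^\infty([0,1]^2)$, whereas the conditional continuous mapping theorem as quoted requires a map on all of $\mathbf D$ that is continuous at every point of a closed $\mathbf D_0$; extending the functional by zero off the set of cdfs destroys continuity at the limit point $((\mathbb D-\mathbb D^\top)^2,\tfrac12(C+C^\top))$. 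You would therefore need either a subset version of the conditional CMT or a direct estimate within Definition \ref{cwcdef}, e.g.\ $\bigl\vert\int h_n^\tau\,\mathrm dF_n^\tau-\int h\,\mathrm dF_0\bigr\vert\le\Vert h_n^\tau-h\Vert_\infty+\bigl\vert\int h\,\mathrm d(F_n^\tau-F_0)\bigr\vert$, with the second term controlled by weak convergence of the probability measures induced by the cdfs $F_n^\tau$ because $h$ is continuous. With either repair your argument delivers the same limit $\tfrac12\iint(\mathbb D-\mathbb D^\top)^2\,\mathrm d(C+C^\top)$ as the paper's delta-method step, so the proposal is a viable alternative, but as written the $S_n$ case remains a sketch at its crucial point.
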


We saw in Figure \ref{naiveapprox} that the naive application of Procedure \ref{rtest} to the pairs \((U_{ni},V_{ni})\) does not provide an acceptable approximation to the sampling distribution of our test statistic. To see why, let \(\tilde{C}_n^\tau\) be given by
\begin{equation}\label{Ftauex}
\tilde{C}_n^\tau(u,v)=n^{-1}\sum_{i=1}^n\mathbbm{1}(U_{ni}^\tau\leq u,V_{ni}^\tau\leq v),
\end{equation}
where \((U_{ni}^\tau,V_{ni}^\tau)=\pi^{\tau_i}(U_{ni},V_{ni})\) and \(\pi^0\) and \(\pi^1\) are defined as in \eqref{expi}. Naive application of Procedure \ref{rtest} to the rank pairs \((U_{ni},V_{ni})\) amounts to using \(\tilde{C}_n^\tau\) in place of \(C_n^\tau\) in our feasible randomization procedure. By applying the conditional continuous mapping theorem to the convergence \eqref{cwcexpf} established in the proof of Lemma \ref{exlem}, we find that
\begin{equation}\label{cwcfmft}
\sqrt{n}(\tilde{C}_n^\tau-\tilde{C}_n^{\tau\top})\cwconv\mathbb A-\mathbb A^\top.
\end{equation}
Comparing \eqref{cwccmct} and \eqref{cwcfmft} we see that using \(\tilde{C}_n^\tau\) in place of \(C_n^\tau\) leads to an invalid approximation of the weak limit \(\mathbb C-\mathbb C^\top\eql\mathbb D-\mathbb D^\top\) appearing in \eqref{rreflim}--\eqref{treflim}. The problem is that \(\mathbb A\) differs from \(\mathbb D\) by the sum of the latter two terms on the right hand side of equality \eqref{exd}. An analogous problem confounds the naive application of Procedure \ref{rtest} in the context of radial symmetry testing.

\subsection{Asymptotic properties: radial symmetry test}\label{rssasym}

In this subsection we let \(W_n=W_n((U_{n1},V_{n1}),\ldots,(U_{nn},V_{nn}))\) denote any of the statistics \(R_n'\), \(S_n'\) and \(T_n'\), and let \(\pi^0\) and \(\pi^1\) be as defined in \eqref{rspi}. We let \(C_n^\tau\) be defined in the same fashion as in the previous subsection, and let \(D_n^\tau\) be the random element of \(\ell^\infty([0,1]^2)\) given by
\begin{equation*}
D_n^\tau(u,v)=\frac{1}{n}\sum_{i=1}^n\mathbbm{1}\left(1-\tilde{U}_{ni}^\tau\leq u,1-\tilde{V}_{ni}^\tau\leq v\right).
\end{equation*}
The randomized statistic \(W_n^\tau=W_n((\tilde{U}^\tau_{n1},\tilde{V}^\tau_{n1}),\ldots,(\tilde{U}^\tau_{nn},\tilde{V}^\tau_{nn}))\) can be expressed in terms of \(C_n^\tau\) and \(D_n^\tau\): we simply substitute \(C_n^\tau\) for \(C_n\) and \(D_n^\tau\) for \(D_n\) in the relevant formula from \eqref{rrad}--\eqref{trad}. The following lemma, analogous to Lemma \ref{exlem}, describes the conditional asymptotic behavior of \(C_n^\tau\).

\begin{lemma}\label{rslem}
	Suppose that \(C\) is regular. Then
	\begin{equation}\label{rscwc2}
	\sqrt{n}\left(C_n^\tau-\frac{1}{2}(C_n+D_n)\right)\cwconv\mathbb D,
	\end{equation}
	where \(\mathbb D\) can be written as
	\begin{align}
	\mathbb D(u,v)&=\mathbb A(u,v)-\frac{1}{2}\left(\dot{C}_1(u,v)+1-\dot{C}_1(1-u,1-v)\right)\mathbb A(u,1)\notag\\&\quad-\frac{1}{2}\left(\dot{C}_2(u,v)+1-\dot{C}_2(1-u,1-v)\right)\mathbb A(1,v),
	\end{align}
	and \(\mathbb A\) is a centered Gaussian random element of \(\ell^\infty([0,1]^2)\) with continuous sample paths and covariance kernel given by \eqref{kerArs}.
\end{lemma}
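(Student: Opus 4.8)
The argument follows the template of the proof of Lemma~\ref{exlem}, with the rotation \(\pi^1(u,v)=(1-u,1-v)\) of \eqref{rspi} in place of the transposition of \eqref{expi}. The key observation is that the centering \(\tfrac12(C_n+D_n)\) in \eqref{rscwc2} is exactly the conditional mean appearing in Lemma~\ref{lemcwc}: with \(\pi^0,\pi^1\) as in \eqref{rspi} one has \(\pr(\pi^{\tau_i}(U_{ni},V_{ni})\leq(u,v)\,\vert\,(U_{ni},V_{ni}))=\tfrac12\mathbbm{1}(U_{ni}\leq u,V_{ni}\leq v)+\tfrac12\mathbbm{1}(1-U_{ni}\leq u,1-V_{ni}\leq v)\). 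Writing \(H_n^\tau(u,v)=n^{-1}\sum_{i=1}^n\mathbbm{1}(\pi^{\tau_i}(U_{ni},V_{ni})\leq(u,v))\) for the joint empirical distribution function of the transformed rank pairs, Lemma~\ref{lemcwc} therefore gives \(\sqrt{n}(H_n^\tau-\tfrac12(C_n+D_n))=\xi_n^\tau\cwconv\mathbb{A}\) in \(\ell^\infty([0,1]^2)\), with \(\mathbb{A}\) centered Gaussian, having continuous sample paths and covariance kernel \eqref{kerArs}. Because the perturbation in Step~2 of Procedure~\ref{qrtest} breaks all ties almost surely, the empirical copula \(C_n^\tau\) of \eqref{ctauex} is the R\"uschendorf empirical copula of the perturbed transformed pairs \((\check{U}_{ni}^\tau,\check{V}_{ni}^\tau)\), and so by the reasoning behind Lemmas~\ref{frwlem} and \ref{gnlem} it agrees up to a uniform \(O(n^{-1})\) remainder with \(\phi(\check{H}_n^\tau)\), where \(\phi\) sends a bivariate distribution function \(H\) to \((u,v)\mapsto H(F_H^\leftarrow(u),G_H^\leftarrow(v))\) and \(\check{H}_n^\tau\) is the joint empirical distribution function of \((\check{U}_{ni}^\tau,\check{V}_{ni}^\tau)\). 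Since Lemma~\ref{lemeta} makes \(\sqrt{n}(\check{H}_n^\tau-H_n^\tau)\) conditionally negligible, we also get \(\sqrt{n}(\check{H}_n^\tau-\tfrac12(C_n+D_n))\cwconv\mathbb{A}\).

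The plan is then to apply the conditional delta method (Theorem~\ref{cdm}) to \(\phi\) at \(\mu=\tfrac12(C+C^\srv)\), with \(\xi_n=\tfrac12(C_n+D_n)\) and \(\xi_n^\tau=\check{H}_n^\tau\). Three ingredients are needed. First, \(\mu\) must be a regular copula: it is a convex combination of \(C\) and \(C^\srv\), and \(C^\srv\) inherits regularity from \(C\) because its partial derivatives are \(1-\dot{C}_1(1-u,1-v)\) and \(1-\dot{C}_2(1-u,1-v)\); hence \(\phi\) is Hadamard differentiable at \(\mu\) tangentially to the continuous functions, by \citet{S12,BV13}, with derivative \(\phi_\mu'(\alpha)(u,v)=\alpha(u,v)-\dot\mu_1(u,v)\alpha(u,1)-\dot\mu_2(u,v)\alpha(1,v)\). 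Second, we need \(\sqrt{n}(\xi_n-\mu)\rightsquigarrow\mathbb{X}_1\) for a tight Gaussian \(\mathbb{X}_1\) with continuous sample paths; this holds because \(\mathbb{C}_n=\sqrt{n}(C_n-C)\) and \(\sqrt{n}(D_n-C^\srv)\) converge jointly and weakly by \citet{S12} applied to the original and to the reflected samples (cf.\ \citet{BC12}), and only the existence and tightness of \(\mathbb{X}_1\) is used here. Third, we have \(\sqrt{n}(\xi_n^\tau-\xi_n)\cwconv\mathbb{A}\), as established above. Since \(\mathbb{X}_1\) and \(\mathbb{A}\) have different covariance kernels, this is precisely the situation for which the non-standard form of Theorem~\ref{cdm} is designed, and it yields \(\sqrt{n}(\phi(\check{H}_n^\tau)-\phi(\xi_n))\cwconv\phi_\mu'(\mathbb{A})\).

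It then remains to unwind the two copula maps. Because the margins of \(\tfrac12(C_n+D_n)\) lie within \(O(n^{-1})\) of the identity, \(\phi(\xi_n)\) differs from \(\xi_n\) by \(O(n^{-1})\) uniformly; combined with the \(O(n^{-1})\) gap between \(C_n^\tau\) and \(\phi(\check{H}_n^\tau)\), a conditional Slutsky argument gives \(\sqrt{n}(C_n^\tau-\tfrac12(C_n+D_n))\cwconv\phi_\mu'(\mathbb{A})\). Finally, computing \(\dot\mu_1(u,v)=\tfrac12(\dot{C}_1(u,v)+1-\dot{C}_1(1-u,1-v))\) and \(\dot\mu_2(u,v)=\tfrac12(\dot{C}_2(u,v)+1-\dot{C}_2(1-u,1-v))\) identifies \(\phi_\mu'(\mathbb{A})\) with the process \(\mathbb{D}\) in the statement, completing the proof.

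I expect the main obstacle to be the bookkeeping inside the delta-method step: verifying that \(\phi\) applied to the randomized and perturbed empirical distribution function really reproduces \(C_n^\tau\) up to a conditionally negligible remainder (this is where Lemma~\ref{lemeta} and the R\"uschendorf--Deheuvels comparison enter), confirming that both \(\mathbb{A}\) and \(\mathbb{X}_1\) concentrate on the continuous functions so that the tangential Hadamard differentiability of \(\phi\) can be used, and checking that \(\mu=\tfrac12(C+C^\srv)\) is regular so that \(\phi\) is differentiable at the right point. No fresh probabilistic input beyond Lemma~\ref{lemcwc} and these negligibility estimates is required; in particular, the covariance structure of \(\mathbb{A}\) is already supplied by Lemma~\ref{lemcwc}.
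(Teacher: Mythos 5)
Your proposal follows essentially the same route as the paper's own proof: identify \(\tfrac12(C_n+D_n)\) as the conditional mean in Lemma \ref{lemcwc}, pass to the perturbed pairs via Lemma \ref{lemeta}, apply the conditional delta method (Theorem \ref{cdm}) to the copula map \(\Phi\) at \(\mu=\tfrac12(C+C^\srv)\) using B\"ucher--Volgushev differentiability, and then remove the two \(O(n^{-1})\) discrepancies (R\"uschendorf vs.\ Deheuvels via Lemma \ref{frwlem}, and \(\Phi(\tfrac12(C_n+D_n))\) vs.\ \(\tfrac12(C_n+D_n)\)), finally computing \(\dot\mu_1,\dot\mu_2\) to identify the limit with \(\mathbb D\). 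Your slightly different justification of \(\sqrt{n}(\tfrac12(C_n+D_n)-\mu)\rightsquigarrow\tfrac12(\mathbb C+\mathbb C^\rot)\) (joint convergence for the original and reflected samples, rather than the paper's use of Lemma \ref{gnlem} plus the continuous mapping theorem applied to \(\mathbb C_n\) alone) is immaterial, since only tightness and concentration on \(\mathbf D_0\) are used.

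There is, however, one concrete step you elide, and it is exactly the point at which this lemma requires more than the template of Lemma \ref{exlem}. Theorem \ref{cdm} requires \(\xi_n\) and \(\xi_n^\tau\) to take values in the domain \(\mathbf D_\Phi\) of bivariate cdfs with margins grounded at zero, and neither \(\tfrac12(C_n+D_n)\) nor your \(\check H_n^\tau\) (the paper's \(\check C_n^\tau\)) does: since exactly one \(U_{ni}\) and one \(V_{ni}\) equal \(1\) a.s., the rotated pairs give \(D_n(1,0)=D_n(0,1)=n^{-1}\) a.s., and after the Step 2 perturbation the rotated coordinates can be negative, so \(\check C_n^\tau(1,0)\) and \(\check C_n^\tau(0,1)\) can equal \(n^{-1}\). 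B\"ucher and Volgushev's Hadamard differentiability of \(\Phi\) is stated on \(\mathbf D_\Phi\), so the delta method cannot be invoked directly for these cdfs. The paper resolves this by introducing the truncation maps \(\Lambda_n(H)(u,v)=\mathbbm{1}(u\wedge v\geq n^{-1})H(u,v)\), applying the conditional delta method to \(\Lambda_n(\check C_n^\tau)\) and \(\Lambda_n(\tfrac12(C_n+D_n))\), and checking that the four resulting substitutions (before and after applying \(\Phi\)) are uniformly \(O(n^{-1})\) a.s. The fix is routine, but without it (or an equivalent device) your invocation of Theorem \ref{cdm} does not satisfy its hypotheses; note this issue does not arise in Lemma \ref{exlem}, where transposition leaves the margins grounded at zero, which is why following that template verbatim leaves a gap here.
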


Note that the random element \(\mathbb D\) appearing in the statement of Lemma \ref{rslem} is not the same as the random element \(\mathbb D\) appearing in the statement of Lemma \ref{exlem}.
It is apparent from \eqref{rradlim}--\eqref{tradlim} that the law of \(\mathbb C-\mathbb C^\rot\) (and also \(C\), if \(W_n=S_n'\)) determines the null limit distribution of \(W_n\). Lemma \ref{rslem2} below establishes that, when the null of radial symmetry is satisfied, \(\mathbb C-\mathbb C^\rot\) and \(\mathbb D-\mathbb D^\rot\) are equal in law. This leads us to our main result on our procedure for testing copula radial symmetry.

\begin{thm}\label{rsthm}
	Let \(W\) be the real valued random variable given by
	\begin{equation*}
	W=
	\begin{cases}
	\int_0^1\int_0^1\left(\mathbb{D}(u,v)-\mathbb{D}^\rot(u,v)\right)^2\mathrm{d}u\mathrm{d}v&\text{if }W_n=R_n'\\
	\frac{1}{2}\int_0^1\int_0^1\left(\mathbb{D}(u,v)-\mathbb{D}^\rot(u,v)\right)^2(C(\mathrm{d}u,\mathrm{d}v)+C^\srv(\mathrm{d}u,\mathrm{d}v))&\text{if }W_n=S_n'\\
	\sup_{(u,v)\in[0,1]^2}\left\vert\mathbb{D}(u,v)-\mathbb{D}^\rot(u,v)\right\vert&\text{if }W_n=T_n'.
	\end{cases}
	\end{equation*}
	If \(C\) is regular then \(W_n^\tau\cwconv W\). If also \(C=C^\srv\) then \(\mathbb C-\mathbb C^\rot\eql\mathbb D-\mathbb D^\rot\) and thus the law of \(W\) is the relevant weak limit from \eqref{rradlim}--\eqref{tradlim}.
\end{thm}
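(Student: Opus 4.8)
The plan is to follow the route of the proof of Theorem~\ref{exthm}, now starting from Lemma~\ref{rslem} in place of Lemma~\ref{exlem}, bringing the reflected object $D_n^\tau$ into play, and closing with Lemma~\ref{rslem2} in place of Lemma~\ref{exlem2}. The only genuinely new ingredient is the handling of $D_n^\tau$ and its relationship to $C_n^\tau$. First I would record two deterministic estimates of Lemma~\ref{gnlem} type. Applied to the rank pairs that lemma already gives $\sup_{(u,v)\in[0,1]^2}|D_n(u,v)-C_n^\srv(u,v)|\le 4n^{-1}$ a.s.; and since, after the tie-breaking perturbation of Step~2 of Procedure~\ref{qrtest}, the randomized rank pairs $(\tilde U_{ni}^\tau,\tilde V_{ni}^\tau)$ have distinct coordinates within each margin, the same argument gives $\sup_{(u,v)\in[0,1]^2}|D_n^\tau(u,v)-(C_n^\tau)^\srv(u,v)|\le 4n^{-1}$ a.s. Because $\theta\mapsto\theta^\srv$ is a sup-norm isometry, the first estimate shows in addition that the centering $\tfrac12(C_n+D_n)$ lies within $O(n^{-1})$ of its own survival transform.

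The second step is to transfer the conditional weak convergence from $C_n^\tau$ to $C_n^\tau-D_n^\tau$. Lemma~\ref{rslem} gives $\zeta_n:=\sqrt n\bigl(C_n^\tau-\tfrac12(C_n+D_n)\bigr)\cwconv\mathbb D$ in $\ell^\infty([0,1]^2)$. Writing $\theta^\srv$ as $\theta^\rot$ plus the fixed function $(u,v)\mapsto u+v-1$, so that the affine map $\theta\mapsto\theta^\srv$ has linear part $\theta\mapsto\theta^\rot$, the two estimates above give $\sqrt n\bigl(D_n^\tau-\tfrac12(C_n+D_n)\bigr)=\zeta_n^\rot+o(1)$ uniformly, the divergent contributions $\sqrt n(u+v-1)$ arising from reflecting $C_n^\tau$ and from reflecting $\tfrac12(C_n+D_n)$ cancelling. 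Hence $\bigl(\zeta_n,\sqrt n(D_n^\tau-\tfrac12(C_n+D_n))\bigr)$ lies within $o(1)$ of $(\zeta_n,\zeta_n^\rot)$, and applying the conditional continuous mapping theorem first to the continuous linear map $\theta\mapsto(\theta,\theta^\rot)$ and then to the difference map yields
\[
\sqrt n\bigl(C_n^\tau-D_n^\tau\bigr)\cwconv\mathbb D-\mathbb D^\rot\quad\text{in }\ell^\infty([0,1]^2).
\]

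From this process-level statement the three cases follow much as \eqref{rradlim}--\eqref{tradlim} follow from $\mathbb C_n\convl\mathbb C$. When $W_n=R_n'$ the randomized statistic $W_n^\tau$ equals $\int_0^1\int_0^1\bigl(\sqrt n(C_n^\tau-D_n^\tau)\bigr)^2\,\mathrm{d}u\,\mathrm{d}v$ by \eqref{rrad}, so the conditional continuous mapping theorem applied to $\theta\mapsto\int_0^1\int_0^1\theta(u,v)^2\,\mathrm{d}u\,\mathrm{d}v$ gives $W_n^\tau\cwconv\int_0^1\int_0^1(\mathbb D-\mathbb D^\rot)^2\,\mathrm{d}u\,\mathrm{d}v$; the case $W_n=T_n'$ is identical with $\theta\mapsto\|\theta\|_\infty$. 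When $W_n=S_n'$ the integrator $C_n^\tau$ must also be controlled: since $\tfrac12(C_n+D_n)\to\tfrac12(C+C^\srv)$ uniformly a.s.\ (by uniform consistency of the empirical copula and Lemma~\ref{gnlem}) while $\zeta_n$ is $O(1)$ conditionally in probability, one gets $C_n^\tau\to\tfrac12(C+C^\srv)$ uniformly conditional on the data in probability, so $C_n^\tau(\mathrm{d}u,\mathrm{d}v)$ converges weakly as a random measure to the continuous cdf $\tfrac12(C+C^\srv)$; combining this with the previous display and a conditional version of the continuity argument for $(\theta,\mu)\mapsto\int_0^1\int_0^1\theta^2\,\mathrm{d}\mu$ used by \citet{GNQ12} and \citet{GN14}---an argument into which the conditional delta method of Theorem~\ref{cdm}, with differing tangential limits, naturally enters---yields $W_n^\tau\cwconv\tfrac12\int_0^1\int_0^1(\mathbb D-\mathbb D^\rot)^2\bigl(C(\mathrm{d}u,\mathrm{d}v)+C^\srv(\mathrm{d}u,\mathrm{d}v)\bigr)$. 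In each case the conditional limit is the stated $W$.

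For the second assertion, when $C=C^\srv$ we have $\tfrac12(C+C^\srv)=C$, so the $S_n'$ limit reduces to $\int_0^1\int_0^1(\mathbb D-\mathbb D^\rot)^2 C(\mathrm{d}u,\mathrm{d}v)$; Lemma~\ref{rslem2} gives $\mathbb C-\mathbb C^\rot\eql\mathbb D-\mathbb D^\rot$, and since these are centered Gaussian processes sharing a law while $C$ is a fixed measure, the equality passes to the functionals $\int\int(\cdot)^2\,\mathrm{d}u\,\mathrm{d}v$, $\int\int(\cdot)^2 C(\mathrm{d}u,\mathrm{d}v)$ and $\|\cdot\|_\infty$, so the law of $W$ matches the weak limit in \eqref{rradlim}, \eqref{sradlim} and \eqref{tradlim} respectively. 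The main obstacle I anticipate is the bookkeeping in the second step: one must verify that reflecting $C_n^\tau$ and reflecting the common centering $\tfrac12(C_n+D_n)$ produce the same divergent $\sqrt n(u+v-1)$ term, so that it cancels and the survival/rotation operations may legitimately be pushed through the conditional continuous mapping theorem---equivalently, that $\tfrac12(C_n+D_n)$ is approximately radially symmetric up to $O(n^{-1})$ a.s.---the alternative, perhaps the route the paper takes, being to treat $D_n^\tau$ directly as a rank-based empirical object via the conditional delta method, in which case the two tangential limits genuinely differ. A secondary difficulty is the integrator argument for $S_n'$, which relies on the regularity of $C$ to make $\tfrac12(C+C^\srv)$ a continuous cdf, on a portmanteau-type argument for joint continuity of the bilinear integral functional, and on a Slutsky-type statement for conditional weak convergence to combine $\sqrt n(C_n^\tau-D_n^\tau)\cwconv\mathbb D-\mathbb D^\rot$ with the convergence of $C_n^\tau$ to a constant.
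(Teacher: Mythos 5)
Your proposal is correct and follows essentially the same route as the paper: the $O(n^{-1})$ bounds of Lemma \ref{gnlem} type (the paper packages your cancellation bookkeeping into the single inequality relating $C_n^\tau-D_n^\tau$ to the centered process minus its rotation), the conditional continuous mapping theorem to obtain $\sqrt{n}(C_n^\tau-D_n^\tau)\cwconv\mathbb D-\mathbb D^\rot$ and to dispose of $R_n'$ and $T_n'$, the conditional delta method with the integration functional of \citet{GNQ12} for $S_n'$ via joint convergence of $(\sqrt{n}(C_n^\tau-D_n^\tau)^2,C_n^\tau)$ together with \eqref{cwcrspf2}, and Lemma \ref{rslem2} for the second assertion. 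The only cosmetic difference is that for $S_n'$ the paper works purely through that product-space delta-method argument (mirroring Theorem \ref{exthm}) rather than your alternative random-measure/Slutsky sketch, which you anyway recognize as reducing to the same machinery.
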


\section{Numerical simulations}\label{secsim}

\subsection{Size control}\label{secsize}

In Table \ref{nulltab} we report the results of numerical simulations used to investigate the null rejection rates obtained using our feasible randomization procedure (i.e., Procedure \ref{qrtest}) at small sample sizes. Rejection frequencies were computed over 10000 experimental replications for the sample sizes \(n=30\) and \(n=50\) at the nominal levels \(\alpha=0.05\) and \(\alpha=0.1\). Alongside the rejection rates for our feasible randomization procedure, we report rejection rates for the infeasible randomization procedure (Procedure \ref{rtest}) and for the multiplier bootstrap used by \citet{GNQ12} for testing exchangeability and by \citet{GN14} for testing radial symmetry. In each replication, critical values were computed using \(N=250\) random draws from \(\mathbf G_n\) for the randomization procedures and \(N=250\) repetitions for the multiplier bootstrap.

\begin{table}[p]
\vspace*{-3\baselineskip}
	{\footnotesize
		\begin{center}
			\begin{tabular}{c|c|ll|ccc|ccc|ccc}
				\hline\hline
				\multicolumn{1}{c}{}&\multicolumn{1}{c}{}&\multirow{2}{*}{Copula} & \multirow{2}{*}{$n$}&\multicolumn{3}{c}{\thead{Feasible\\ randomization}} &\multicolumn{3}{c}{\thead{Infeasible\\ randomization}} &\multicolumn{3}{c}{Multiplier bootstrap}\\
				\cline{5-13}
				\multicolumn{1}{c}{}&\multicolumn{1}{c}{}&&& $R_{n}$ & $S_{n}$ & $T_{n}$ & $R_{n}$ & $S_{n}$ & $T_{n}$ & $R_{n}$ & $S_{n}$ & $T_{n}$ \\
				\hline\hline
				\multirow{20}{*}{\rotatebox[origin=c]{90}{Exchangeability tests}}&\multirow{10}{*}{\rotatebox[origin=c]{90}{$\alpha=0.05$}}&\multirow{2}{*}{Gauss}&\multirow{1}{*}{30} &     0.064&     0.068&     0.067&     0.053&     0.051&     0.052&     0.014&     0.032&     0.003\\
				\cline{5-13}
				&&&\multirow{1}{*}{50}&     0.060&     0.064&     0.057&     0.048&     0.050&     0.051&     0.015&     0.035&     0.004\\
				\cline{3-13}
				&&\multirow{2}{*}{Clayton}&\multirow{1}{*}{30}&     0.069&     0.063&     0.062&     0.054&     0.053&     0.051&     0.016&     0.031&     0.002\\
				
				\cline{5-13}
				&&&\multirow{1}{*}{50}&     0.053&     0.056&     0.054&     0.053&     0.051&     0.053&     0.013&     0.025&     0.004\\
				\cline{3-13}
				&&\multirow{2}{*}{Gumbel}&\multirow{1}{*}{30}&     0.065&     0.062&     0.059&     0.056&     0.056&     0.054&     0.013&     0.028&     0.002\\
				\cline{5-13}
				&&&\multirow{1}{*}{50}&     0.060&     0.055&     0.060&     0.048&     0.047&     0.049&     0.012&     0.025&     0.005\\
				\cline{3-13}
				&&\multirow{2}{*}{Student}&\multirow{1}{*}{30}&     0.068&     0.059&     0.057&     0.050&     0.051&     0.050&     0.017&     0.031&     0.002\\
				\cline{5-13}
				&&&\multirow{1}{*}{50}&     0.058&     0.057&     0.050&     0.057&     0.058&     0.051&     0.018&     0.033&     0.002\\
				\cline{3-13}
				&&\multirow{2}{*}{Frank}&\multirow{1}{*}{30}&     0.065&     0.060&     0.061&     0.047&     0.050&     0.051&     0.012&     0.032&     0.003\\
				\cline{5-13}
				&&&\multirow{1}{*}{50}&     0.059&     0.058&     0.055&     0.052&     0.053&     0.050&     0.014&     0.026&     0.004\\
				\cline{2-13}
				&\multirow{10}{*}{\rotatebox[origin=c]{90}{$\alpha=0.1$}}&\multirow{2}{*}{Gauss}&\multirow{1}{*}{30} &     0.122&     0.129&     0.123&     0.104&     0.103&     0.103&     0.055&     0.071&     0.016\\
				\cline{5-13}
				&&&\multirow{1}{*}{50}&     0.113&     0.113&     0.109&     0.102&     0.097&     0.099&     0.053&     0.075&     0.016\\
				\cline{3-13}
				&&\multirow{2}{*}{Clayton}&\multirow{1}{*}{30}&     0.115&     0.117&     0.111&     0.103&     0.104&     0.103&     0.050&     0.067&     0.017\\
				\cline{5-13}
				&&&\multirow{1}{*}{50}&     0.106&     0.111&     0.109&     0.105&     0.100&     0.100&     0.049&     0.070&     0.016\\
				\cline{3-13}
				&&\multirow{2}{*}{Gumbel}&\multirow{1}{*}{30}&     0.113&     0.113&     0.113&     0.106&     0.105&     0.108&     0.062&     0.086&     0.016\\
				\cline{5-13}
				&&&\multirow{1}{*}{50}&     0.113&     0.107&     0.112&     0.098&     0.099&     0.099&     0.058&     0.079&     0.015\\
				\cline{3-13}
				&&\multirow{2}{*}{Student}&\multirow{1}{*}{30}&     0.115&     0.114&     0.108&     0.100&     0.096&     0.099&     0.073&     0.092&     0.017\\
				\cline{5-13}
				&&&\multirow{1}{*}{50}&     0.116&     0.106&     0.105&     0.105&     0.106&     0.103&     0.059&     0.074&     0.018\\
				\cline{3-13}
				&&\multirow{2}{*}{Frank}&\multirow{1}{*}{30}&     0.103&     0.118&     0.113&     0.098&     0.100&     0.100&     0.061&     0.087&     0.017\\
				\cline{5-13}
				&&&\multirow{1}{*}{50}&     0.117&     0.113&     0.109&     0.101&     0.100&     0.097&     0.057&     0.074&     0.019\\
				\hline\hline
				
				\multicolumn{1}{c}{}&\multicolumn{1}{c}{}&&&\(R_n'\)&\(S_n'\)&\(T_n'\)&\(R_n'\)&\(S_n'\)&\(T_n'\)&\(R_n'\)&\(S_n'\)&\(T_n'\)\\
				\hline
				\multirow{20}{*}{\rotatebox[origin=c]{90}{Radial symmetry tests}}&\multirow{10}{*}{\rotatebox[origin=c]{90}{$\alpha=0.05$}}&\multirow{2}{*}{Gauss}&\multirow{1}{*}{30} &     0.051&     0.056&     0.059&     0.047&     0.046&     0.044&     0.005&     0.012&     0.006\\
				\cline{5-13}
				&&&\multirow{1}{*}{50}&     0.050&     0.052&     0.056&     0.047&     0.049&     0.049&     0.018&     0.025&     0.029\\
				\cline{3-13}
				&&\multirow{2}{*}{Student}&\multirow{1}{*}{30}&     0.057&     0.055&     0.054&     0.055&     0.055&     0.054&     0.009&     0.020&     0.005\\
				\cline{5-13}
				&&&\multirow{1}{*}{50}&     0.059&     0.059&     0.059&     0.044&     0.044&     0.049&     0.028&     0.032&     0.027\\
				\cline{3-13}
				&&\multirow{2}{*}{Frank}&\multirow{1}{*}{30}&     0.054&     0.055&     0.062&     0.050&     0.049&     0.051&     0.007&     0.008&     0.004\\
				\cline{5-13}
				&&&\multirow{1}{*}{50}&     0.052&     0.051&     0.054&     0.054&     0.053&     0.054&     0.024&     0.037&     0.037\\
				\cline{3-13}
				&&\multirow{2}{*}{Plackett}&\multirow{1}{*}{30}&     0.059&     0.058&     0.060&     0.053&     0.052&     0.053&     0.007&     0.018&     0.007\\
				\cline{5-13}
				&&&\multirow{1}{*}{50}&     0.048&     0.048&     0.054&     0.052&     0.052&     0.051&     0.024&     0.031&     0.031\\
				\cline{3-13}
				&&\multirow{2}{*}{Cauchy}&\multirow{1}{*}{30}&     0.077&     0.068&     0.061&     0.052&     0.055&     0.051&     0.019&     0.026&     0.002\\
				\cline{5-13}
				&&&\multirow{1}{*}{50}&     0.068&     0.066&     0.057&     0.054&     0.055&     0.052&     0.036&     0.040&     0.033\\
				\cline{2-13}
				&\multirow{10}{*}{\rotatebox[origin=c]{90}{$\alpha=0.1$}}&\multirow{2}{*}{Gauss}&\multirow{1}{*}{30} &     0.106&     0.112&     0.112&     0.098&     0.100&     0.098&     0.017&     0.033&     0.021\\
				\cline{5-13}
				&&&\multirow{1}{*}{50}&     0.101&     0.105&     0.107&     0.098&     0.099&     0.099&     0.047&     0.067&     0.076\\
				\cline{3-13}
				&&\multirow{2}{*}{Student}&\multirow{1}{*}{30}&     0.106&     0.108&     0.107&     0.107&     0.103&     0.105&     0.022&     0.056&     0.024\\
				\cline{5-13}
				&&&\multirow{1}{*}{50}&     0.113&     0.113&     0.112&     0.098&     0.099&     0.096&     0.059&     0.086&     0.066\\
				\cline{3-13}
				&&\multirow{2}{*}{Frank}&\multirow{1}{*}{30}&     0.110&     0.110&     0.106&     0.100&     0.101&     0.103&     0.008&     0.037&     0.016\\
				\cline{5-13}
				&&&\multirow{1}{*}{50}&     0.104&     0.104&     0.108&     0.108&     0.107&     0.103&     0.062&     0.076&     0.086\\
				\cline{3-13}
				&&\multirow{2}{*}{Plackett}&\multirow{1}{*}{30}&     0.110&     0.108&     0.107&     0.101&     0.102&     0.099&     0.013&     0.039&     0.022\\
				\cline{5-13}
				&&&\multirow{1}{*}{50}&     0.098&     0.098&     0.099&     0.099&     0.095&     0.101&     0.056&     0.073&     0.079\\
				\cline{3-13}
				&&\multirow{2}{*}{Cauchy}&\multirow{1}{*}{30}&     0.134&     0.125&     0.111&     0.103&     0.104&     0.098&     0.030&     0.069&     0.020\\
				\cline{5-13}
				&&&\multirow{1}{*}{50}&     0.124&     0.120&     0.108&     0.110&     0.111&     0.108&     0.093&     0.097&     0.053\\
				\hline\hline
			\end{tabular}
		\end{center}
		\caption{Null rejection rates: exchangeability and radial symmetry tests using feasible randomization, infeasible randomization, and the multiplier bootstrap.}\label{nulltab}
	}

\end{table}

The top half of Table \ref{nulltab} displays results for testing the null of exchangeability using the statistics \(R_n\), \(S_n\) and \(T_n\). Results are reported for five exchangeable copulas: the Gaussian, Clayton, Gumbel, Student and Frank copulas, parameterized to have a rank correlation of 0.5 (and with 3 degrees of freedom for the Student copula). The first three columns of rejection rates correspond to our feasible randomization procedure. These rates are slightly greater than, but very close to, the nominal level, especially at the larger sample size \(n=50\). There is no appreciable difference between the rejection rates obtained using the three statistics \(R_n\), \(S_n\) and \(T_n\). The next three columns of rejection rates correspond to infeasible randomization applied directly to the unobserved pairs \((U_i,V_i)\). This testing procedure is exact, and indeed we see that the computed rejection rates are extremely close to the nominal level, differing only due to Monte Carlo error. The final three columns of rejection rates correspond to the multiplier bootstrap. These rejection rates are much lower than the nominal level. The lowest rejection rates are obtained using the \(T_n\) statistic, and the largest with the \(S_n\) statistic. Simulations reported by \citet[Table 1]{GNQ12} indicate that excessive conservatism may continue to be an issue at sample sizes as large as \(n=250\).

The bottom half of Table \ref{nulltab} displays results for testing the null of radial symmetry using the statistics \(R'_n\), \(S'_n\) and \(T'_n\). Results are reported for five radially symmetric copulas: the Gaussian, Student, Frank, Plackett and Cauchy copulas, parameterized to have a rank correlation of 0.5 (and with 3 degrees of freedom for the Student copula). Qualitatively, the results are similar to those reported for the exchangeability tests. Our feasible randomization procedure tends to overreject a bit more with the Cauchy copula than with the other copulas considered, especially with the statistic \(R_n'\) at the smaller sample size \(n=30\). Again, the rejection rates based on the multiplier bootstrap tend to be much lower than the nominal level, and simulations reported by \citet[Table 1]{GN14} indicate that the issue may persist at sample sizes as large as \(n=500\).

In addition to the simulations reported in Table \ref{nulltab}, we ran further simulations to investigate the null rejection rates obtained when Procedure \ref{rtest} is naively applied to the rank pairs \((U_{ni},V_{ni})\). We computed a rejection rate of zero for all configurations listed in Table \ref{nulltab}. This may be unsurprising in view of Figure \ref{naiveapprox}, where we saw that the upper quantiles of the sampling distributions of \(S_n\) and \(S_n'\) when \(C\) is the product copula lie far to the left of those in the (poorly) approximating distributions obtained through a naive application of Procedure \ref{rtest}.

As discussed in Section \ref{secresample}, applied researchers may be unwilling to randomize between rejection and nonrejection when the test statistic and critical value are equal, and instead prefer to only reject when the test statistic strictly exceeds the critical value. To investigate the impact of such a decision rule, we recomputed the rejection rates for the feasible and infeasible randomization tests in Table \ref{nulltab}, recording rejections only when a test statistic strictly exceeded the critical value. We found that the rejection rates using the $R_n$, $S_n$, $R_n'$ and $S_n'$ statistics were essentially unchanged. However, the rejection rates using the $T_n$ and $T_n'$ statistics dropped to about 0.04 with nominal level $\alpha=0.05$, and to about 0.08 with nominal level $\alpha=0.1$.

\subsection{Power}

In this section we report numerical evidence on the power of symmetry tests based on feasible randomization and on the multiplier bootstrap. The main finding is that power is greater with feasible randomization. The improved power appears to be driven by our finding in Section \ref{secsize} that feasible randomization delivers a rejection rate close to the nominal level even at small sample sizes, whereas the multiplier bootstrap produces excessively conservative tests. We focus below on sample sizes \(n=50,100,250\). The power advantage of tests based on feasible randomization appears to be smaller at larger sample sizes. This is not unexpected in view of results of \citet{R89}, who showed the asymptotic equivalence of randomization and bootstrap procedures in fairly general settings. Note, however, that simulation results reported by \citet{GN14} indicate that the excessive conservatism induced by the multiplier bootstrap may persist at sample sizes as large as \(n=500\).

In Table \ref{altextab} we report the results of numerical simulations used to investigate the power of exchangeability tests based on our feasible randomization procedure or on the multiplier bootstrap at small (\(n=50\)) and medium (\(n=100\)) sample sizes, with nominal level \(\alpha=0.05\). As in the previous subsection, we computed rejection frequencies over 10000 experimental replications, and in each replication computed critical values using \(N=250\) random draws from the group \(\mathbf G_n\) or bootstrap repetitions.

\begin{sidewaystable}[p]
	{\footnotesize
		\begin{center}
			\begin{tabular}{lll|ccc|ccc|ccc|ccc}
				\hline\hline
				\multirow{3}{*}{Copula} &\multirow{3}{*}{$\delta$} & \multirow{3}{*}{$\tau$}&\multicolumn{6}{c|}{$n=50$}&\multicolumn{6}{|c}{$n=100$}\\
				&&&\multicolumn{3}{c}{Feasible randomization}&\multicolumn{3}{c|}{Multiplier bootstrap}&\multicolumn{3}{|c}{Feasible randomization}&\multicolumn{3}{c}{Multiplier bootstrap}\\
				\cline{4-15}
				&&&$R_{n}$&$S_{n}$&$T_{n}$  &  $R_{n}$&$S_{n}$&$T_{n}$&$R_{n}$&$S_{n}$&$T_{n}$  &  $R_{n}$&$S_{n}$&$T_{n}$\\
				\hline\hline
				\multirow{9}{*}{K-Gauss}& \multirow{3}{*}{\(\delta=0.25\)}&0.5	&0.093	&0.082	&0.084	&0.027	&0.043	&0.008&0.146	&0.123	&0.085	&0.073&0.070	&0.070\\
				\multirow{9}{*}{}&\multirow{3}{*}{}&0.7	&0.267	&0.236	&0.140	&0.076	&0.141	&0.007&0.616	&0.555	&0.302	&0.359&0.414	&0.229\\
				\multirow{9}{*}{}&\multirow{3}{*}{}&0.9	&0.929	&0.874	&0.515	&0.420	&0.793	&0.006&0.997	&0.998	&0.917	&0.921 &0.992	&0.665\\
				\cline{4-15}
				\multirow{9}{*}{}&\multirow{3}{*}{\(\delta=0.5\)}&0.5	&0.122	&0.105	&0.089	&0.068	&0.077	&0.007&0.266	&0.231	&0.145	&0.138 &0.154	&0.095\\
				\multirow{9}{*}{}&\multirow{3}{*}{}&0.7	&0.482	&0.434	&0.263	&0.252	&0.283	&0.020&0.836	&0.803	&0.520	&0.729	&0.744	&0.435\\
				\multirow{9}{*}{}&\multirow{3}{*}{}&0.9	&0.941	&0.934	&0.649	&0.753	&0.893	&0.037&1.000	&1.000	&0.972	&0.998	&1.000	&0.855\\
				\cline{4-15}
				\multirow{9}{*}{}& \multirow{3}{*}{\(\delta=0.75\)}&0.5	&0.109	&0.095	&0.082	&0.074	&0.069	&0.006&0.208	&0.205	&0.153	&0.118	&0.131	&0.123\\
				\multirow{9}{*}{}&\multirow{3}{*}{}&0.7	&0.282	&0.264	&0.173	&0.165	&0.182	&0.011&0.572	&0.548	&0.327	&0.492	&0.498	&0.259\\
				\multirow{9}{*}{}&\multirow{3}{*}{}&0.9	&0.513	&0.524	&0.273	&0.391	&0.452	&0.024&0.886	&0.900 &0.644	&0.809	&0.833	&0.478\\
				\hline
				\multirow{9}{*}{K-Clayton}& \multirow{3}{*}{\(\delta=0.25\)} &0.5	&0.078	&0.084	&0.069	&0.025	&0.046	&0.002&0.137	&0.133	&0.098	&0.066	&0.077	&0.061\\
				\multirow{9}{*}{}&\multirow{3}{*}{}&0.7	&0.279	&0.274	&0.118	&0.101	&0.176	&0.004&0.641	&0.643	&0.329	&0.407	&0.543	&0.237\\
				\multirow{9}{*}{}&\multirow{3}{*}{}&0.9	&0.915	&0.872	&0.519	&0.422	&0.856	&0.010&0.995	&1.000	&0.943	&0.928	&0.999	&0.700\\
				\cline{4-15}
				\multirow{9}{*}{}&\multirow{3}{*}{\(\delta=0.5\)} &0.5	&0.108	&0.096	&0.061	&0.060	&0.055	&0.005&0.183	&0.174	&0.113	&0.114	&0.109	&0.085\\
				\multirow{9}{*}{}&\multirow{3}{*}{}&0.7	&0.386	&0.371	&0.183	&0.208	&0.240	&0.012&0.716	&0.703	&0.445	&0.573	&0.610	&0.380\\
				\multirow{9}{*}{}&\multirow{3}{*}{}&0.9	&0.917	&0.905	&0.628	&0.706	&0.852	&0.042&0.998	&0.997	&0.948	&0.998	&1.000	&0.838\\
				\cline{4-15}
				\multirow{9}{*}{}& \multirow{3}{*}{\(\delta=0.75\)} &0.5	&0.080	&0.071	&0.068	&0.037	&0.039	&0.003&0.089	&0.090	&0.096	&0.064	&0.057&0.064\\
				\multirow{9}{*}{}&\multirow{3}{*}{}&0.7	&0.188	&0.172	&0.121	&0.084	&0.089	&0.013&0.351	&0.357	&0.225	&0.234	&0.241&0.141\\
				\multirow{9}{*}{}&\multirow{3}{*}{}&0.9	&0.441	&0.434	&0.243	&0.308	&0.333	&0.033&0.803	&0.795	&0.551	&0.695	&0.730	&0.429\\
				\hline
				\multirow{9}{*}{K-Gumbel}& \multirow{3}{*}{\(\delta=0.25\)} &0.5	&0.094	&0.078	&0.070	&0.034	&0.052	&0.004&0.173	&0.161	&0.115	&0.084	&0.084	&0.081\\
				\multirow{9}{*}{}&\multirow{3}{*}{}&0.7	&0.309	&0.241	&0.166	&0.108	&0.153	&0.003&0.599	&0.555	&0.336	&0.403	&0.460	&0.227\\
				\multirow{9}{*}{}&\multirow{3}{*}{}&0.9	&0.933	&0.882	&0.530	&0.409	&0.813	&0.005&0.995	&0.997	&0.925	&0.933	&0.996	&0.678\\
				\cline{4-15}
				\multirow{9}{*}{}&\multirow{3}{*}{\(\delta=0.5\)}&0.5	&0.184	&0.167	&0.123	&0.080	&0.086	&0.007&0.334	&0.325	&0.193	&0.238	&0.264	&0.164\\
				\multirow{9}{*}{}&\multirow{3}{*}{}&0.7	&0.536	&0.507	&0.323	&0.329	&0.353	&0.024&0.935	&0.925	&0.672	&0.784	&0.853	&0.470\\
				\multirow{9}{*}{}&\multirow{3}{*}{}&0.9	&0.961	&0.960	&0.702	&0.777	&0.922	&0.036&1.000	&1.000	&0.977	&0.999	&1.000	&0.869\\
				\cline{4-15}
				\multirow{9}{*}{}& \multirow{3}{*}{\(\delta=0.75\)}&0.5	&0.189	&0.179	&0.140	&0.089	&0.101	&0.011&0.319	&0.313	&0.192	&0.239	&0.231	&0.132\\
				\multirow{9}{*}{}&\multirow{3}{*}{}&0.7	&0.363	&0.352	&0.235	&0.253	&0.266	&0.021&0.707	&0.710	&0.421	&0.601	&0.597	&0.343\\
				\multirow{9}{*}{}&\multirow{3}{*}{}&0.9	&0.518	&0.550	&0.291	&0.381	&0.465	&0.026&0.934	&0.921	&0.622	&0.817	&0.861	&0.473\\
				\hline\hline
			\end{tabular}%
		\end{center}
		\caption{Power: exchangeability tests using feasible randomization and the multiplier bootstrap.}\label{altextab}
	}
\end{sidewaystable}	
	
	The nonexchangeable copulas used to produce the rejection frequencies in Table \ref{altextab} were obtained by applying the Khoudraji transform \citep{K95} to the Gaussian, Clayton and Gumbel copulas with parameters chosen such that the rank correlation \(\tau\) is equal to 0.5, 0.7 or 0.9. The Khoudraji transform of a copula \(C\) is given by \(C^\mathrm{K}_\delta(u,v)=u^\delta C(u^{1-\delta},v)\), where we allow \(\delta=0.25,0.5,0.75\). \citet{GNQ12} also simulate rejection frequencies of their exchangeability tests using these copula specifications; the rejection rates for the multiplier bootstrap with \(n=100\) in Table \ref{altextab} are taken directly from their paper.
	
	Cursory examination of the numbers in Table \ref{altextab} reveals that our feasible randomization procedure generates more power than the multiplier bootstrap -- often much more. A test using the statistic \(T_n\) with the multiplier bootstrap has essentially no power at the smaller sample size \(n=50\), whereas a test using the same statistic with our feasible randomization procedure has substantial power. This may be unsurprising in view of the extremely low null rejection rates exhibited by the former test in Table \ref{nulltab}. The differences in power with the statistics \(R_n\) and \(S_n\), or at the larger sample size \(n=100\), are less extreme, but nevertheless it is clear that our randomization procedure generates significant power improvements over the multiplier bootstrap at these sample sizes, with the latter procedure hamstrung by its low null rejection rates.
	
	\begin{table}[t]
		\footnotesize
		\begin{center}
			\begin{tabular}{lll|ccc|lll}
				\hline\hline
				\multirow{2}{*}{Copula} &\multirow{2}{*}{$n$} & \multirow{2}{*}{$\tau$}&\multicolumn{3}{c}{Feasible randomization}&\multicolumn{3}{c}{Multiplier bootstrap}\\
				\cline{4-9}
				{}&{}&{}&$R_{n}'$&$S_{n}'$&$T_{n}'$  &  $R_{n}'$&$S_{n}'$&$T_{n}'$\\
				\hline\hline
				\multirow{9}{*}{Clayton}& \multirow{3}{*}{50} &0.25	&0.236	&0.259	&0.172	& 0.102& 0.156&0.092\\
				\multirow{9}{*}{}&\multirow{3}{*}{}&0.5&0.428	&0.514	&0.325& 0.095& 0.174&0.106\\
				\multirow{9}{*}{}&\multirow{3}{*}{}&0.75&0.397	&0.508&0.354& 0.001& 0.029& 0.009\\
				\cline{4-9}
				\multirow{9}{*}{}&\multirow{3}{*}{100}&0.25&0.374	&0.412&0.269& 0.242& 0.351& 0.184\\
				\multirow{9}{*}{}&\multirow{3}{*}{}&0.5&0.745&0.808	&0.625& 0.470& 0.713 &0.459\\
				\multirow{9}{*}{}&\multirow{3}{*}{}&0.75&0.859	&0.924&0.762& 0.177& 0.728& 0.396\\
				\cline{4-9}
				\multirow{9}{*}{}&\multirow{3}{*}{250}&0.25&0.748&0.803&0.618&0.645&0.723&0.459\\
				\multirow{9}{*}{}&\multirow{3}{*}{}&0.5&0.998&1.000&0.973&0.984&0.995&0.953\\
				\multirow{9}{*}{}&\multirow{3}{*}{}&0.75&1.000&1.000&0.998&0.995&1.000&0.985\\
				\hline
				\multirow{9}{*}{Gumbel}& \multirow{3}{*}{50} &0.25	&0.054&0.059	&0.053	& 0.053& 0.056&0.056\\
				\multirow{9}{*}{}&\multirow{3}{*}{}&0.5&0.075	&0.064	&0.061 & 0.010& 0.017&0.027\\
				\multirow{9}{*}{}&\multirow{3}{*}{}&0.75&0.064	&0.066 &0.083& 0.002& 0.002& 0.001\\
				\cline{4-9}
				\multirow{9}{*}{}&\multirow{3}{*}{100}&0.25&0.093	&0.080&0.077& 0.078& 0.079& 0.059\\
				\multirow{9}{*}{}&\multirow{3}{*}{}&0.5&0.163&0.154	&0.128& 0.086& 0.122 &0.085\\
				\multirow{9}{*}{}&\multirow{3}{*}{}&0.75&0.145&0.148&0.107& 0.005& 0.058& 0.031\\
				\cline{4-9}
				\multirow{9}{*}{}&\multirow{3}{*}{250}&0.25&0.225&0.218&0.163&0.195&0.192 &0.132\\
				\multirow{9}{*}{}&\multirow{3}{*}{}&0.5&0.413&0.433&0.323&0.383&0.429&0.312\\
				\multirow{9}{*}{}&\multirow{3}{*}{}&0.75&0.403&0.488&0.278&0.235&0.364&0.215\\
				\hline\hline
			\end{tabular}%
		\end{center}
		\caption{Power: radial symmetry tests using feasible randomization and the multiplier bootstrap.}\label{altrstab}
	\end{table}
	
In Table \ref{altrstab} we report the results of numerical simulations used to investigate the power of radial symmetry tests based on our feasible randomization procedure or on the multiplier bootstrap at small (\(n=50\)), medium (\(n=100\)) and large (\(n=250\)) sample sizes, with nominal level \(\alpha=0.05\). We again used 10000 experimental replications, and \(N=250\) repetitions for the randomization and bootstrap procedures in each replication. Rejection rates were computed for the (radially asymmetric) Clayton and Gumbel copulas, parametrized to have rank correlation \(\tau=0.25,0.5,0.75\). As was the case in Table \ref{altextab}, it is clear from the numbers in Table \ref{altrstab} that our feasible randomization procedure generates large improvements in power over the multiplier bootstrap. This is particularly true for the Clayton copula at the smaller sample sizes \(n=50\) and \(n=100\). Rejection rates are lower for the Gumbel copula, which apparently exhibits a more mild form of radial asymmetry than the Clayton copula.

\subsection{Simulations with conditionally heteroskedastic data}

In Section \ref{secintro} we mentioned that radially asymmetric dependence between asset returns has been a subject of interest in empirical finance. As is well known, time series of asset returns typically exhibit conditional heteroskedasticity. A bivariate time series of asset returns will therefore not satisfy the iid condition underlying our analysis. We may nevertheless consider applying our symmetry tests to bivariate return series after first filtering out conditional heteroskedasticity or other dependencies in the data. In this section we report the results of some simulations we ran to investigate this possibility.

The design of our simulations is motivated by the semiparametric copula-based multivariate dynamic (SCOMDY) model of \citet{CF06}. In this model, a pair (or more) of time series are each assumed to evolve according to an ARMA-GARCH model or some other parametric conditional mean-variance specification. Dependence between the two series is generated by linking contemporaneous ARMA-GARCH innovations with a parametric copula function. The marginal distributions of the innovations are left unspecified, making the SCOMDY model semiparametric.

We calibrated our simulations using 658 daily returns on the Hang Seng and Nikkei market indices, running from March 7, 2016, to March 5, 2019. A scatterplot of these returns is displayed in panel (a) of Figure \ref{scatterplots}. To each univariate return series we fit a GARCH(1,1) model with iid Student innovations by maximum likelihood. A scatterplot of the fitted innovations for each return series is displayed in panel (b) of Figure \ref{scatterplots}. The fitted innovations were transformed to ranks, and divided by sample size to render them between zero and one. A scatterplot of the ranked fitted innovations is displayed in panel (c) of Figure \ref{scatterplots}. This last scatterplot provides us with a sense of the shape of the copula linking contemporaneous GARCH innovations. At a glance, it does not appear to provide evidence of nonexchangeability or radial asymmetry. To confirm, we applied our feasible randomization tests of copula exchangeability and radial symmetry to the fitted innovations, using \(N=1000\) repetitions of our feasible randomization procedure. We computed exchangeability test statistics \(R_n=0.0262\), \(S_n=0.0283\) and \(T_n=0.6627\), none of which led us to reject the null at the 0.2 nominal level. We computed radial symmetry test statistics \(R'_n=0.0459\), \(S'_n=0.0581\) and \(T'_n=0.7407\), none of which led us to reject the null at the 0.15 nominal level. We would therefore like to conclude that the GARCH innovations are consistent with copula exchangeability and radial symmetry. However, such a conclusion is not justified by the results of this paper, because the symmetry tests were applied to fitted GARCH innovations, and it has not been established whether the preliminary estimation of GARCH parameters affects the asymptotic null rejection rates of our testing procedures.

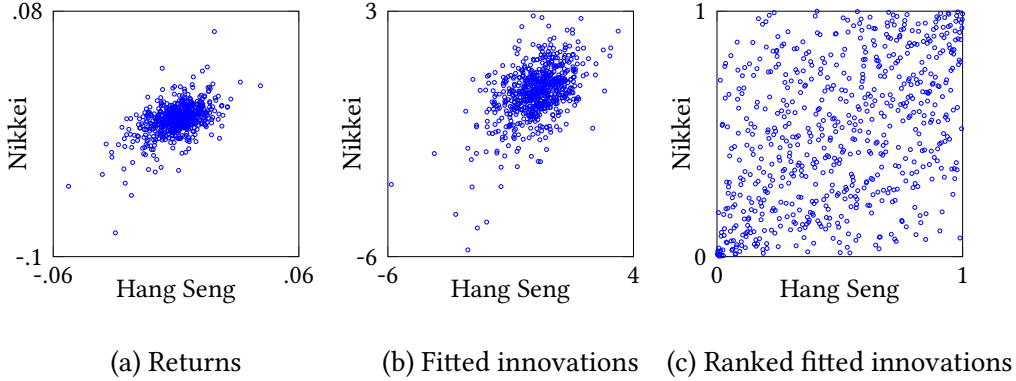
\begin{figure}[]
	\begin{center}
		\pgfplotsset{
			tick label style={font=\small},
			label style={font=\small},
			legend style={font=\footnotesize},
			scaled y ticks=false
		}
		\begin{tikzpicture}[scale=0.925]
		\begin{axis}[height=0.35\textwidth, width=0.35\textwidth, /pgf/number format/1000 sep={}, enlargelimits=false, ytick={-.1,.08}, yticklabels={-.1,.08}, ymin=-.1, ymax=.08, xtick={-.06,.06}, xticklabels={-.06,.06}, xmin=-.06, xmax=.06, xlabel={Hang Seng}, ylabel={Nikkei}, ticklabel style={/pgf/number format/fixed}, x label style={at={(axis description cs:0.5,0.1)}}, y label style={at={(axis description cs:0.2,0.5)}}, xtick scale label code/.code={}, title style={at={(0.5,-.4)},anchor=north,yshift=-.1}, title={(a) Returns}]
		\addplot+[mark=o,only marks,mark size=0.8pt]
		table [x=HSI, y=NI225] {returndata.csv};
		\end{axis}
		\end{tikzpicture}
		\begin{tikzpicture}[scale=0.925]
		\begin{axis}[height=0.35\textwidth, width=0.35\textwidth, /pgf/number format/1000 sep={}, enlargelimits=false, ytick={-6,3}, yticklabels={-6,3}, ymin=-6, ymax=3, xtick={-6,4}, xticklabels={-6,4}, xmin=-6, xmax=4, xlabel={Hang Seng}, ylabel={Nikkei}, x label style={at={(axis description cs:0.5,0.1)}}, y label style={at={(axis description cs:0.2,0.5)}}, title style={at={(0.5,-.4)},anchor=north,yshift=-.1}, title={(b) Fitted innovations}]
		\addplot+[mark=o,only marks,mark size=0.8pt]
		table [x=HSIres, y=NI225res] {returndata.csv};
		\end{axis}
		\end{tikzpicture}
		\begin{tikzpicture}[scale=0.925]
		\begin{axis}[height=0.35\textwidth, width=0.35\textwidth, /pgf/number format/1000 sep={}, enlargelimits=false, ytick={0,1}, ymin=0, ymax=1, xtick={0,1}, xmin=0, xmax=1, xlabel={Hang Seng}, ylabel={Nikkei}, x label style={at={(axis description cs:0.5,0.1)}}, y label style={at={(axis description cs:0.2,0.5)}}, title style={at={(0.5,-.4)},anchor=north,yshift=-.1}, title={(c) Ranked fitted innovations}]
		\addplot+[mark=o,only marks,mark size=0.8pt]
		table [x=HSIresrank, y=NI225resrank] {returndata.csv};
		\end{axis}
		\end{tikzpicture}
		\caption{Hang Seng and Nikkei market index returns, 3/7/2016--3/5/2019. Panel (a) displays raw daily returns. Panel (b) displays GARCH(1,1) fitted innovations. Panel (c) displays GARCH(1,1) fitted innovations after transforming to ranks and dividing by sample size.}
		\label{scatterplots}
	\end{center}
\end{figure}

To investigate further, we ran a number of simulations based on the SCOMDY model fitted to Hang Seng and Nikkei returns. Rather than using a parametric copula specification, we took the following nonparametric approach: we first obtained a bivariate distribution \(H\) for the GARCH innovations by slightly smoothing the empirical distribution of the fitted innovations in panel (b) of Figure \ref{scatterplots}. Let \(F\) and \(G\) be the margins and \(C\) the copula corresponding to \(H\). To impose the null of copula exchangeability or radial symmetry in our simulations, we set \(H^\ast=C^\ast(F,G)\), where \(C^\ast=(C+C^\top)/2\) when we wish to impose exchangeability, and \(C^\ast=(C+C^\rot)/2\) when we wish to impose radial symmetry. The data generating process for our simulations is then as follows: in each of \(1000\) experimental replications, we generate \(n\) pairs of innovations by iid sampling from \(H^\ast\), and then feed these into the GARCH(1,1) models calibrated to the Hang Seng and Nikkei returns, to produce simulated returns \((X_1,Y_1),\ldots,(X_n,Y_n)\). We then re-estimate the GARCH(1,1) models using these simulated returns, and apply our symmetry tests to the pairs of fitted innovations with \(N=250\) repetitions of our randomization procedure. To provide a point of comparison, we also applied our symmetry tests to the true innovations drawn directly from \(H^\ast\). We used sample sizes \(n\) of 50, 100 and 250, and nominal significance levels \(\alpha\) of 0.05 and 0.1.

\begin{table}[]
	\vspace*{-3\baselineskip}
	{\footnotesize
		\begin{center}
			\begin{tabular}{c|c|c|ccc|ccc}
				\hline\hline
				\multicolumn{1}{c}{}&\multicolumn{1}{c}{}& \multirow{2}{*}{}&\multicolumn{3}{c}{\thead{Fitted innovations}} &\multicolumn{3}{c}{\thead{True innovations}}\\
				\cline{4-9}
				\multicolumn{1}{c}{}&$\alpha$&$n$& $R_{n}$ & $S_{n}$ & $T_{n}$ & $R_{n}$ & $S_{n}$ & $T_{n}$ \\
				\hline\hline
				\multirow{6}{*}{\rotatebox[origin=c]{90}{\thead{Exchangeability\\ tests}}}&&50 &  0.061   &  0.066   &  0.058   &  0.061   &  0.063   &  0.054   \\
				&0.05&100&  0.058   &  0.062   &  0.064   &  0.060   &  0.064   &  0.070   \\
				&&250&  0.059   &  0.061   &  0.054   &  0.057   &  0.068   &  0.053   \\
				\cline{2-9}
				&&50 &  0.115   &  0.122   &  0.131   &  0.114   &  0.106   &  0.123   \\
				&0.1&100&  0.114   &  0.119   &  0.103  &  0.116   & 0.113   &  0.117   \\
				&&250&  0.117   &  0.106   &  0.095   &  0.123   &  0.113   &  0.099   \\
				\hline\hline
				\multicolumn{1}{c}{}&\multicolumn{1}{c}{}&&\(R_n'\)&\(S_n'\)&\(T_n'\)&\(R_n'\)&\(S_n'\)&\(T_n'\)\\
				\hline
				\multirow{6}{*}{\rotatebox[origin=c]{90}{\thead{Radial\\ symmetry tests}}}&&50 &  0.056   &  0.054   &  0.068   &  0.055   &  0.050   &  0.060   \\
				&0.05&100&  0.041  &  0.037   &  0.047   &  0.046   &  0.041   &  0.044   \\
				&&250&  0.039   &  0.047   &   0.056  &  0.050   &  0.052   &  0.049   \\
				\cline{2-9}
				&&50 &  0.115   &  0.118   &  0.109   &  0.120   &  0.117   &  0.102  \\
				&0.1&100&  0.082   &  0.086  &  0.098   &  0.090   &  0.094  &  0.098   \\
				&&250&  0.104   &  0.101   &  0.107   &  0.099   &  0.106   &  0.106   \\
				\hline\hline
			\end{tabular}
		\end{center}
		\caption{Null rejection rates: exchangeability and radial symmetry tests applied to the fitted and unobserved innovations of a pair of GARCH(1,1) processes calibrated to daily Hang Seng and Nikkei market index returns.}\label{GARCHtab}
	}
	
\end{table}

The results of our simulations are displayed in Table \ref{GARCHtab}. Strikingly, we see that the null rejection rates when our symmetry tests are applied to the fitted innovations are not appreciably different to the rejection rates obtained with the true innovations, and are generally close to the nominal level. Thus, the preliminary estimation of GARCH parameters does not appear to be affecting the rejection rate of our tests. A similar phenomenon has been shown to hold in closely related contexts. \citet{CF06} showed that, when the copula for contemporaneous innovations belongs to a known parametric class, the asymptotic distribution of the maximum likelihood estimator of the copula parameters is unaffected by the preliminary estimation of GARCH parameters. \citet{CCCFP09} showed that the asymptotic distribution of a statistic measuring copula goodness-of-fit is also unaffected by the preliminary estimation of GARCH parameters. Our tests of symmetry appear to benefit from the same property, although we have not shown this formally.

\section{Final remarks}\label{conclusion}

We have proposed a feasible randomization procedure which leads to consistent tests of copula symmetry which control size asymptotically. Numerical simulations indicate that basing tests of copula symmetry on our feasible randomization procedure instead of the multiplier bootstrap leads to substantially improved small sample performance.

A number of avenues for further research remain open. The asymptotic validity of our procedure was demonstrated for the specific statistics \(R_n\), \(S_n\), \(T_n\), \(R'_n\), \(S'_n\) and \(T'_n\), but in principle we may apply our procedure using other statistics measuring copula asymmetry. For instance, \citet{BQ17} have recently proposed tests of copula radial symmetry based on a range of statistics computed from the copula characteristic function, with critical values obtained using an adaptation of the multiplier bootstrap. Like the radial symmetry tests of \citet{GN14}, the tests of Bahraoui and Quessy frequently have null rejection rates well below the nominal level at smaller sample sizes; they note that ``this behavior is typical of methods based on the multiplier bootstrap'' \citep[p.\ 2075]{BQ17}. Further investigation of the applicability of our feasible randomization procedure with these and other statistics measuring nonexchangeability or radial asymmetry remains a task for future research.

It may be useful to extend our feasible randomization procedure to other hypothesis testing problems in which the need to estimate margins is a complicating factor. \citet{S18} has recently taken this approach to develop a permutation test of the null hypothesis that two independent multivariate samples are drawn from multivariate distributions with the same copula. We might also consider adapting our procedure to obtain tests of forms of copula symmetry other than exchangeability or radial symmetry, such as joint symmetry \citep[p.\ 36]{N06}, which is a stronger property than radial symmetry. We leave the pursuit of such adaptations for future research.

\appendix

\section{Proofs and supplementary lemmas}\label{appx}

\begin{lemma}\label{frwlem}
	\(C_n\) differs from \(C_n^\mathrm{D}\) by no more than \(2n^{-1}\) a.s.
\end{lemma}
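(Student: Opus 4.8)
The plan is to pass to ranks. Since $F$ and $G$ are continuous, almost surely $X_1,\dots,X_n$ are pairwise distinct, as are $Y_1,\dots,Y_n$; I would fix such a realization. On it, $nF_n(X_i)=R_i^X$ and $nG_n(Y_i)=R_i^Y$, where $(R_1^X,\dots,R_n^X)$ and $(R_1^Y,\dots,R_n^Y)$ are permutations of $\{1,\dots,n\}$, and $F_n$ is a right-continuous step function jumping by $n^{-1}$ at each order statistic $X_{(k)}$, so that $F_n^\leftarrow(u)=X_{(\lceil nu\rceil)}$ for $u\in(0,1]$ while $F_n^\leftarrow(0)=X_{(1)}$ by the stated convention; the same holds for $G_n$ and the $Y_i$.

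Given this, I would rewrite both empirical copulas in terms of ranks. Because each $R_i^X$ is an integer, $\mathbbm{1}(F_n(X_i)\le u)=\mathbbm{1}(R_i^X\le\lfloor nu\rfloor)$, so $C_n(u,v)=n^{-1}\sum_{i=1}^n\mathbbm{1}(R_i^X\le\lfloor nu\rfloor)\mathbbm{1}(R_i^Y\le\lfloor nv\rfloor)$. For $u,v\in(0,1]$, distinctness of the $X_i$ gives $\mathbbm{1}(X_i\le F_n^\leftarrow(u))=\mathbbm{1}(X_i\le X_{(\lceil nu\rceil)})=\mathbbm{1}(R_i^X\le\lceil nu\rceil)$, so that $C_n^{\mathrm D}(u,v)=n^{-1}\sum_{i=1}^n\mathbbm{1}(R_i^X\le\lceil nu\rceil)\mathbbm{1}(R_i^Y\le\lceil nv\rceil)$.

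The difference $C_n^{\mathrm D}(u,v)-C_n(u,v)$ is then an average over $i$ of the nonnegative terms $\mathbbm{1}(R_i^X\le\lceil nu\rceil,\,R_i^Y\le\lceil nv\rceil)-\mathbbm{1}(R_i^X\le\lfloor nu\rfloor,\,R_i^Y\le\lfloor nv\rfloor)$. Since $\lceil nu\rceil-\lfloor nu\rfloor\in\{0,1\}$ and likewise for $v$, the $i$-th term is nonzero only if $R_i^X=\lceil nu\rceil$ (with $\lceil nu\rceil=\lfloor nu\rfloor+1$) or $R_i^Y=\lceil nv\rceil$ (with $\lceil nv\rceil=\lfloor nv\rfloor+1$); distinctness of the ranks allows at most one index of each kind, so at most two terms are nonzero and $0\le C_n^{\mathrm D}(u,v)-C_n(u,v)\le 2n^{-1}$ on $(0,1]^2$. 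On the boundary $u=0$ or $v=0$ one has $C_n(u,v)=0$ and $C_n^{\mathrm D}(u,v)\le n^{-1}$ directly, which is stronger. Taking the supremum over $[0,1]^2$ gives the claim. There is no real obstacle here; the only points requiring a little care are the identification of $F_n^\leftarrow$ with order statistics (and the $F_n^\leftarrow(0)=F_n^\leftarrow(0+)$ convention at the boundary), and noting that the distinctness event has probability one.
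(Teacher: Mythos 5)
Your argument is correct. Note, however, that the paper does not actually write out a proof of this lemma: it simply refers the reader to \citet[p.\ 854]{FRW04}, where the comparison of the R\"{u}schendorf and Deheuvels empirical copulas is carried out. What you have done is supply, in full, the elementary rank-counting argument that such a reference contains: on the probability-one event of no ties you identify $F_n^\leftarrow(u)=X_{(\lceil nu\rceil)}$ for $u\in(0,1]$ (with the stated convention at $u=0$), rewrite both versions of the empirical copula as averages of indicators $\mathbbm{1}(R_i^X\leq\lfloor nu\rfloor,R_i^Y\leq\lfloor nv\rfloor)$ and $\mathbbm{1}(R_i^X\leq\lceil nu\rceil,R_i^Y\leq\lceil nv\rceil)$ respectively, and observe that the two indicators can disagree only for the (at most one) index with $R_i^X=\lceil nu\rceil$ and the (at most one) index with $R_i^Y=\lceil nv\rceil$, giving the bound $2n^{-1}$, with the boundary cases $u=0$ or $v=0$ handled separately and yielding an even smaller discrepancy. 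The details you flag as requiring care (the identification of the generalized inverse with order statistics, the $F_n^\leftarrow(0)=F_n^\leftarrow(0+)$ convention, and the a.s.\ distinctness of the ranks) are exactly the right ones, and your treatment of each is sound. So your proposal is a correct, self-contained substitute for the citation; the only thing the paper's route buys is brevity.
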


\begin{proof}
	See \citet[p.\ 854]{FRW04}.	
\end{proof}

\begin{lemma}\label{gnlem}
	\(D_n\) differs from \(C_n^\mathrm{s}\) by no more than \(4n^{-1}\) a.s.
\end{lemma}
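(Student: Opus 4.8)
The plan is to bound both $D_n$ and $C_n^{\mathrm s}$ by a common intermediate quantity and combine the two estimates by the triangle inequality. First I would fix the almost sure event on which $X_1,\dots,X_n$ are pairwise distinct and $Y_1,\dots,Y_n$ are pairwise distinct; this has probability one because $F$ and $G$ are continuous. On it the normalized ranks $U_{n1},\dots,U_{nn}$ form a permutation of $\{1/n,2/n,\dots,1\}$, and likewise for $V_{n1},\dots,V_{nn}$. Two consequences will be used repeatedly: (i) the marginal empirical cdfs satisfy $\bigl|\tfrac1n\sum_{i=1}^n\mathbbm 1(U_{ni}\le s)-s\bigr|\le n^{-1}$ and $\bigl|\tfrac1n\sum_{i=1}^n\mathbbm 1(V_{ni}\le t)-t\bigr|\le n^{-1}$ for all $s,t\in[0,1]$; and (ii) for each $(u,v)$ there is at most one index $i$ with $U_{ni}=1-u$ and at most one with $V_{ni}=1-v$.

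Next I would introduce the strict upper-quadrant empirical measure of the rank pairs,
\[
A_n(u,v)=\frac1n\sum_{i=1}^n\mathbbm 1\bigl(U_{ni}>1-u,\ V_{ni}>1-v\bigr).
\]
Averaging the identity $\mathbbm 1(a>s,\,b>t)=1-\mathbbm 1(a\le s)-\mathbbm 1(b\le t)+\mathbbm 1(a\le s,\,b\le t)$ over $i$ gives the exact inclusion--exclusion relation
\[
A_n(u,v)=1-\tfrac1n\textstyle\sum_i\mathbbm 1(U_{ni}\le 1-u)-\tfrac1n\sum_i\mathbbm 1(V_{ni}\le 1-v)+C_n(1-u,1-v).
\]
Subtracting $C_n^{\mathrm s}(u,v)=u+v-1+C_n(1-u,1-v)$, the $C_n$ terms cancel, and consequence (i) applied at $s=1-u$ and $t=1-v$ shows that the remaining difference is at most $2n^{-1}$ in absolute value, uniformly in $(u,v)\in[0,1]^2$.

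It then remains to compare $D_n$ with $A_n$. Writing $D_n(u,v)=\tfrac1n\sum_{i=1}^n\mathbbm 1(U_{ni}\ge 1-u,\ V_{ni}\ge 1-v)$, the $i$th summand of $D_n$ can differ from that of $A_n$ only when $U_{ni}\ge 1-u$ and $V_{ni}\ge 1-v$ while $U_{ni}=1-u$ or $V_{ni}=1-v$, in which case it is larger by one; hence $0\le n\bigl(D_n(u,v)-A_n(u,v)\bigr)\le\#\{i:U_{ni}=1-u\}+\#\{i:V_{ni}=1-v\}\le 2$ by consequence (ii). Combining the two bounds via the triangle inequality yields $\bigl|D_n(u,v)-C_n^{\mathrm s}(u,v)\bigr|\le 4n^{-1}$ for all $(u,v)$, which is the claim.

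I expect the last comparison to be the main (albeit mild) obstacle: one must verify that the discrepancy between the ``$\ge$'' convention used to define $D_n$ and the ``$>$'' convention produced by inclusion--exclusion is supported entirely on boundary ties, and that continuity of the margins caps the number of such ties at two. Everything else is bookkeeping, and the boundary cases $u\in\{0,1\}$ or $v\in\{0,1\}$ require no separate treatment since all the displayed identities and inequalities hold there verbatim. The overall structure mirrors the proof behind Lemma \ref{frwlem}.
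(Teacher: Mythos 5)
Your proof is correct and follows essentially the same route as the paper: your auxiliary quantity $A_n$ and the bound $0\le D_n-A_n\le 2n^{-1}$ reproduce exactly the boundary-tie term $\lambda_n$ that the paper takes from \citet{GN14}, and your use of $\bigl|\tfrac1n\sum_i\mathbbm 1(U_{ni}\le s)-s\bigr|\le n^{-1}$ at $s=1-u$ and $s=1-v$ is the paper's bound on $|C_n(1-u,1)-1+u|$ and $|C_n(1,1-v)-1+v|$, giving the same $2n^{-1}+2n^{-1}$ split. The only difference is that you derive the inclusion--exclusion identity from scratch rather than citing Genest and Ne\v{s}lehov\'{a}'s decomposition, which makes the argument self-contained but not substantively different.
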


\begin{proof}
	\citet[pp.\ 1110]{GN14}	point out that \(C_n(u,v)-D_n(u,v)=E_n(u,v)-\lambda_n(u,v)\), where
	\begin{equation*}
	E_n(u,v)=C_n(u,v)-C_n(1,1)+C_n(1-u,1)+C_n(1,1-v)-C_n(1-u,1-v)
	\end{equation*}
	and
	\begin{equation*}
	\lambda_n(u,v)=\frac{1}{n}\sum_{i=1}^n\mathbbm{1}(U_{ni}=1-u,V_{ni}\geq1-v)+\frac{1}{n}\sum_{i=1}^n\mathbbm{1}(U_{ni}>1-u,V_{ni}=1-v).
	\end{equation*}
	It follows that
	\begin{align*}
	D_n(u,v)-C_n^\srv(u,v)&=\lambda_n(u,v)-C_n^\srv(u,v)+C_n(1,1)-C_n(1-u,1)-C_n(1,1-v)\\&\quad+C_n(1-u,1-v),
	\end{align*}
	which simplifies to
	\begin{equation*}
	D_n(u,v)-C_n^\srv(u,v)=\lambda_n(u,v)+2-u-v-C_n(1-u,1)-C_n(1,1-v).
	\end{equation*}
	In the probability one event that there are no ties between \(X_i\)'s or between \(Y_i\)'s, we have \(|C_n(1-u,1)-1+u|\leq n^{-1}\), \(|C_n(1,1-v)-1+v|\leq n^{-1}\), and \(|\lambda_n(u,v)|\leq 2n^{-1}\).
\end{proof}

\begin{lemma}\label{lemeta}
	Let \(\tilde{C}_n^\tau\) be defined as in \eqref{Ftauex} with \(\pi^0\) and \(\pi^1\) the pair of maps defined in either \eqref{expi} or \eqref{rspi}, and let
	\begin{equation*}
	\check{C}_n^\tau(u,v)=n^{-1}\sum_{i=1}^n\mathbbm{1}(\check{U}_{ni}^\tau\leq u,\check{V}_{ni}^\tau\leq v),
	\end{equation*}
	where \((\check{U}_{ni}^\tau,\check{V}_{ni}^\tau)=\pi^{\tau_i}(U_{ni},V_{ni})-n^{-1}(\eta_i,\eta_i)\), and \(\eta_1,\ldots,\eta_n\) are independent random variables distributed uniformly on \((0,1)\). Then \(\check{C}_n^\tau\) differs from \(\tilde{C}_n^\tau\) on \([0,1]^2\) by no more than \(4n^{-1}\) a.s.
\end{lemma}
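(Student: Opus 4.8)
The plan is to bound $\sup_{(u,v)\in[0,1]^2}\bigl|\tilde C_n^\tau(u,v)-\check C_n^\tau(u,v)\bigr|$ by comparing the two empirical distribution functions point mass by point mass. Write $(U_{ni}^\tau,V_{ni}^\tau)=\pi^{\tau_i}(U_{ni},V_{ni})$, so that $(\check U_{ni}^\tau,\check V_{ni}^\tau)=(U_{ni}^\tau-n^{-1}\eta_i,\,V_{ni}^\tau-n^{-1}\eta_i)$ with $0<\eta_i<1$. The structural fact driving the argument is that each coordinate $U_{ni}^\tau$ and $V_{ni}^\tau$ lies on the grid $\{0,n^{-1},2n^{-1},\ldots,1\}$: under either choice of $\pi^1$ in \eqref{expi} or \eqref{rspi}, one has $U_{ni}^\tau=U_{ni}$ when $\tau_i=0$ and $U_{ni}^\tau=V_{ni}$ (exchangeability) or $U_{ni}^\tau=1-U_{ni}$ (radial symmetry) when $\tau_i=1$, and all of these values lie on that grid; similarly for $V_{ni}^\tau$.

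Next I would compare the summands. Because $\check U_{ni}^\tau<U_{ni}^\tau$ and $\check V_{ni}^\tau<V_{ni}^\tau$, the event $\{U_{ni}^\tau\le u,\,V_{ni}^\tau\le v\}$ is contained in $\{\check U_{ni}^\tau\le u,\,\check V_{ni}^\tau\le v\}$, so $\mathbbm{1}(U_{ni}^\tau\le u,V_{ni}^\tau\le v)-\mathbbm{1}(\check U_{ni}^\tau\le u,\check V_{ni}^\tau\le v)$ takes only the values $0$ and $-1$. It equals $-1$ only when $\check U_{ni}^\tau\le u$, $\check V_{ni}^\tau\le v$, and either $U_{ni}^\tau>u$ or $V_{ni}^\tau>v$; since $U_{ni}^\tau=\check U_{ni}^\tau+n^{-1}\eta_i<\check U_{ni}^\tau+n^{-1}\le u+n^{-1}$ whenever $\check U_{ni}^\tau\le u$ (and likewise for the second coordinate), this forces $U_{ni}^\tau\in(u,u+n^{-1})$ or $V_{ni}^\tau\in(v,v+n^{-1})$. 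Summing over $i$,
\[
\bigl|\tilde C_n^\tau(u,v)-\check C_n^\tau(u,v)\bigr|\;\le\;\tfrac1n\,\#\{i:U_{ni}^\tau\in(u,u+n^{-1})\}\;+\;\tfrac1n\,\#\{i:V_{ni}^\tau\in(v,v+n^{-1})\}.
\]

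To finish I would bound each of these two counts by $2$ on the probability-one event that there are no ties among the $X_i$ and none among the $Y_i$. On that event $U_{n1},\ldots,U_{nn}$ is a permutation of $\{n^{-1},2n^{-1},\ldots,1\}$ and likewise $V_{n1},\ldots,V_{nn}$, so the maps $i\mapsto U_{ni}$ and $i\mapsto V_{ni}$ are injective. An interval of length $n^{-1}$ contains at most one grid point $k/n$; the indices $i$ with $U_{ni}^\tau=k/n$ are precisely those with $\tau_i=0$ and $U_{ni}=k/n$, together with those with $\tau_i=1$ and $V_{ni}=k/n$ (exchangeability case) or $1-U_{ni}=k/n$ (radial case), and injectivity leaves at most one index in each group, hence at most two in total; the same reasoning bounds $\#\{i:V_{ni}^\tau\in(v,v+n^{-1})\}$ by $2$. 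Combined with the display, this gives $\sup_{(u,v)\in[0,1]^2}\bigl|\tilde C_n^\tau(u,v)-\check C_n^\tau(u,v)\bigr|\le 4n^{-1}$ on a probability-one event, which is the assertion. There is no substantive obstacle; the only point requiring care is keeping track of which coordinate of $\pi^{\tau_i}(U_{ni},V_{ni})$ one is following under each of the two definitions of $\pi^1$, after which the two cases reduce to the same elementary counting.
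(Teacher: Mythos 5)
Your proof is correct and follows essentially the same route as the paper's: on the probability-one no-ties event the transformed ranks lie on the grid with spacing $n^{-1}$ and at most two indices share any grid value, and since the perturbation is strictly smaller than $n^{-1}$ the two joint indicators can differ for at most four indices at any fixed $(u,v)$, giving the $4n^{-1}$ bound. Your version merely spells out the counting (monotonicity of the indicators and at most one grid point per open interval of length $n^{-1}$) a bit more explicitly than the paper does.
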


\begin{proof}
    In the probability one event that there are no ties between \(X_i\)'s or between \(Y_i\)'s, we know that \((U_{n1},\ldots,U_{nn})\) and \((V_{n1},\ldots,V_{nn})\) are each equal to some permutation of \((1/n,2/n,\ldots,1)\). Thus the \(U_{ni}^\tau\)'s and \(V_{ni}^\tau\)'s can only take the values \(1/n,\ldots,1\), and also 0 if \(\pi^0\) and \(\pi^1\) are defined as in \eqref{rspi}, with at most two of the \(U_{ni}^\tau\)'s and at most two of the \(V_{ni}^\tau\)'s sharing any given value. Since \(-1/n<-\eta_i/n<0\) for each \(i\), it follows that for any fixed \((u,v)\), we may have \(\mathbbm{1}(\check{U}_{ni}^\tau\leq u,\check{V}_{ni}^\tau\leq v)\) and \(\mathbbm{1}({U}_{ni}^\tau\leq u,{V}_{ni}^\tau\leq v)\) differing for at most four values of \(i\). Thus
    \begin{equation*}
    |\check{C}_n^\tau(u,v)-\tilde{C}_n^\tau(u,v)|\leq n^{-1}\sum_{i=1}^n\left|\mathbbm{1}(\check{U}_{ni}^\tau\leq u,\check{V}_{ni}^\tau\leq v)-\mathbbm{1}({U}_{ni}^\tau\leq u,{V}_{ni}^\tau\leq v)\right|\leq4n^{-1},
    \end{equation*}
    as claimed.
\end{proof}

\begin{lemma}\label{empboxc}
	For any \(u,u',v,v'\in[0,1]\) we have
	\begin{equation*}
	\frac{1}{n}\sum_{i=1}^n\mathbbm{1}(u\leq U_{ni}\leq u',v\leq V_{ni}\leq v')\to\pr(u<U\leq u',v<V\leq v')\text{ a.s.}
	\end{equation*}
\end{lemma}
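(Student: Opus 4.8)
The plan is to reduce the statement to the strong law of large numbers (SLLN) by replacing the normalized ranks $U_{ni}=F_n(X_i)$ and $V_{ni}=G_n(Y_i)$ with the exact probability integral transforms $F(X_i)$ and $G(Y_i)$, whose averages are directly governed by the SLLN because $(F(X_i),G(Y_i))$ are iid copies of $(U,V)$. The elementary input making this replacement legitimate is the Glivenko--Cantelli theorem: $\delta_n:=\|F_n-F\|_\infty\vee\|G_n-G\|_\infty\to0$ almost surely, whence $|U_{ni}-F(X_i)|\le\delta_n$ and $|V_{ni}-G(Y_i)|\le\delta_n$ for every $i$.

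First I would record the sandwich bound. On the event $\{\delta_n\le\epsilon\}$, for each $i=1,\ldots,n$,
\begin{align*}
&\mathbbm{1}(u+\epsilon\le F(X_i)\le u'-\epsilon,\ v+\epsilon\le G(Y_i)\le v'-\epsilon)\\
&\qquad\le\mathbbm{1}(u\le U_{ni}\le u',\ v\le V_{ni}\le v')\\
&\qquad\le\mathbbm{1}(u-\epsilon\le F(X_i)\le u'+\epsilon,\ v-\epsilon\le G(Y_i)\le v'+\epsilon),
\end{align*}
simply because $F(X_i)$ differs from $U_{ni}$, and $G(Y_i)$ from $V_{ni}$, by at most $\delta_n\le\epsilon$. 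Averaging over $i$ and applying the SLLN to the two outer averages — each an average of iid Bernoulli variables with mean $\pr(u\pm\epsilon\le U\le u'\mp\epsilon,\ v\pm\epsilon\le V\le v'\mp\epsilon)$ — pins the limit inferior and limit superior of $n^{-1}\sum_{i=1}^n\mathbbm{1}(u\le U_{ni}\le u',\,v\le V_{ni}\le v')$ between those two probabilities, for each fixed $\epsilon$.

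To carry out the two limiting operations ($n\to\infty$, then $\epsilon\downarrow0$) on a single probability-one set, I would restrict $\epsilon$ to the sequence $1/k$, $k\in\mathbf N$, and intersect the Glivenko--Cantelli event with the countably many SLLN events indexed by $k$. Continuity of $F$ and $G$ enters at the last step: it makes $U$ and $V$ uniformly distributed on $[0,1]$, hence atomless, so $\pr(U\in\{u,u'\})=\pr(V\in\{v,v'\})=0$. Letting $k\to\infty$, the lower bound increases to $\pr(u<U<u',\,v<V<v')$ and the upper bound decreases to $\pr(u\le U\le u',\,v\le V\le v')$ by continuity of measure, and atomlessness makes both of these equal to $\pr(u<U\le u',\,v<V\le v')$. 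The squeeze then delivers the claimed almost sure convergence.

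I do not anticipate a real obstacle; the only care needed is the bookkeeping that keeps everything on one almost sure event, and the (trivial, given atomlessness) observation that the closed left endpoints in the indicator do not affect the limit. An alternative would express the average, up to an $O(n^{-1})$ almost sure correction arising from ties and the closed-endpoint adjustments, as an alternating sum of the R\"uschendorf empirical copula $C_n$ at the four corners of the box $[u,u']\times[v,v']$, and invoke almost sure uniform convergence of $C_n$ to $C$; but the Glivenko--Cantelli-plus-SLLN argument is more self-contained and avoids separately establishing that uniform convergence.
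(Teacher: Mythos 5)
Your argument is correct, but it follows a genuinely different route from the paper. The paper's proof never compares $U_{ni}$ with $F(X_i)$: it expands the box indicator $\mathbbm{1}(u\leq U_{ni}\leq u',v\leq V_{ni}\leq v')$ by inclusion--exclusion into the four corner indicators $\mathbbm{1}(U_{ni}\leq\cdot,V_{ni}\leq\cdot)$ plus two boundary terms $\mathbbm{1}(U_{ni}=u,\ldots)$ and $\mathbbm{1}(\ldots,V_{ni}=v)$, notes that on the probability-one no-ties event each boundary term contributes at most $n^{-1}$ after averaging, and then invokes pointwise strong consistency of the empirical copula, $C_n(s,t)\to C(s,t)$ a.s., so the limit is the alternating sum $C(u',v')-C(u,v')-C(u',v)+C(u,v)=\pr(u<U\leq u',v<V\leq v')$ --- i.e.\ exactly the ``alternative'' you sketch in your closing sentence, except that the closed-left-endpoint discrepancy is absorbed into the explicit $O(n^{-1})$ boundary terms rather than handled by atomlessness. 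Your route instead sandwiches the rank-based indicator between indicators of $\epsilon$-shrunken and $\epsilon$-enlarged boxes in the true probability integral transforms $(F(X_i),G(Y_i))$, using $|U_{ni}-F(X_i)|\le\|F_n-F\|_\infty$ and Glivenko--Cantelli, applies the SLLN to the iid outer averages along $\epsilon=1/k$ on a single almost sure event, and closes the gap by continuity of measure plus atomlessness of $(U,V)$; all of these steps check out, including the squeeze and the identification of the two $\epsilon\downarrow0$ limits with $\pr(u<U\leq u',v<V\leq v')$. What each approach buys: the paper's is shorter and recycles a fact (strong consistency of the empirical copula) it already relies on elsewhere, while yours is self-contained, needing only Glivenko--Cantelli and the SLLN, at the price of the $\epsilon$-and-countable-intersection bookkeeping.
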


\begin{proof}
	Observe that
	\begin{align*}
	\mathbbm{1}(u\leq U_{ni}\leq u',v\leq V_{ni}\leq v')&=\mathbbm{1}(U_{ni}\leq u',V_{ni}\leq v')-\mathbbm{1}(U_{ni}\leq u,V_{ni}\leq v')\\&\quad-\mathbbm{1}(U_{ni}\leq u',V_{ni}\leq v)+\mathbbm{1}(U_{ni}\leq u,V_{ni}\leq v)\\&\quad+\mathbbm{1}(U_{ni}=u,v\leq V_{ni}\leq v')+\mathbbm{1}(u<U_{ni}\leq u',V_{ni}=v).
	\end{align*}
	It follows that
	\begin{align*}
	\frac{1}{n}\sum_{i=1}^n\mathbbm{1}(u\leq U_{ni}\leq u',v\leq V_{ni}\leq v')&=C_n(u',v')-C_n(u,v')-C_n(u',v)+C_n(u,v)\\&\quad+\frac{1}{n}\sum_{i=1}^n\mathbbm{1}(U_{ni}=u,v\leq V_{ni}\leq v')\\&\quad+\frac{1}{n}\sum_{i=1}^n\mathbbm{1}(u<U_{ni}\leq u',V_{ni}=v).
	\end{align*}
	With probability one we cannot have \(U_{ni}=u\) for more than one \(i\), or \(V_{ni}=v\) for more than one \(i\). Therefore the last two terms on the right hand side of the last displayed equality are each bounded by \(n^{-1}\). Pointwise strong consistency of the empirical copula thus yields the a.s.\ convergence
	\begin{align*}
	\frac{1}{n}\sum_{i=1}^n\mathbbm{1}(u\leq U_{ni}\leq u',v\leq V_{ni}\leq v')&\to C(u',v')-C(u,v')-C(u',v)+C(u,v)\\&=\pr(u<U\leq u',v<V\leq v').
	\end{align*}
\end{proof}

\begin{proof}[Proof of Lemma \ref{lemcwc}]
To establish conditional weak convergence we first note that \(\xi^\tau_n\), as a map from the underlying probability space into \(\ell^\infty([0,1]^2)\), is simple and hence Borel measurable. The second condition in Definition \ref{cwcdef} is therefore automatically satisfied.
To show that the first condition is satisfied we will apply the functional central limit theorem of \citet[Thm.\ 10.6]{P90} as stated by \citet[Thm.\ 11.16]{K08}. Adopting the notation of those authors, we let \(f_{ni}(u,v)=n^{-1/2}\mathbbm{1}(\pi^{\tau_i}(U_{ni},V_{ni})\leq(u,v))\), so that \(\xi^\tau_n=\sum_{i=1}^n(f_{ni}-\ex_\tau f_{ni})\). We will verify that conditions (A)--(E) of Kosorok hold a.s.
\begin{asparaenum}
	\item[(A)] Manageability of the \(f_{ni}\)'s with envelopes \(F_{ni}=n^{-1/2}\) follows from the fact that \(f_{ni}(u,v)\) is always monotone in \((u,v)\), as discussed by \citet[p.\ 221]{K08}.
	\item[(B)] Using the fact that \(\ex_\tau Z_{ni}(u,v)Z_{nj}(u',v')=0\) for \(i\neq j\), we obtain
	\begin{align*}
	\ex_\tau\xi^\tau_n(u,v)\xi^\tau_n(u',v')&=\frac{1}{n}\sum_{i=1}^n\ex_\tau Z_{ni}(u,v)Z_{ni}(u',v').
	\end{align*}
	Further, by noting that
	\begin{equation*}
	Z_{ni}(u,v)=\frac{(-1)^{\tau_i}}{2}\left(\mathbbm{1}(\pi^0(U_{ni},V_{ni})\leq(u,v))-\mathbbm{1}(\pi^1(U_{ni},V_{ni})\leq(u,v))\right),
	\end{equation*}
	we obtain
	\begin{align*}
	Z_{ni}(u,v)Z_{ni}(u',v')&=\frac{1}{4}\mathbbm{1}(\pi^0(U_{ni},V_{ni})\leq(u\wedge u',v\wedge v'))\\&\quad+\frac{1}{4}\mathbbm{1}(\pi^1(U_{ni},V_{ni})\leq(u\wedge u',v\wedge v'))\\&\quad-\frac{1}{4}\mathbbm{1}(\pi^0(U_{ni},V_{ni})\leq(u,v),\,\pi^1(U_{ni},V_{ni})\leq(u',v'))\\&\quad-\frac{1}{4}\mathbbm{1}(\pi^1(U_{ni},V_{ni})\leq(u,v),\,\pi^0(U_{ni},V_{ni})\leq(u',v')),
	\end{align*}
	which does not depend on \(\tau_i\) and therefore equals \(\ex_\tau Z_{ni}(u,v)Z_{ni}(u',v')\). Suppose that \(\pi^0\) and \(\pi^1\) are defined as in \eqref{expi}. Then the equalities just demonstrated imply that
	\begin{align*}
	\ex_\tau\xi^\tau_n(u,v)\xi^\tau_n(u',v')&=\frac{1}{4}C_n(u\wedge u',v\wedge v')+\frac{1}{4}C_n(v\wedge v',u\wedge u')\\&\quad-\frac{1}{4}C_n(u\wedge v',v\wedge u')-\frac{1}{4}C_n(v\wedge u',u\wedge v').
	\end{align*}
	The pointwise strong consistency of the empirical copula therefore implies that condition (B) of Kosorok is satisfied a.s.\ with
	\begin{align}
	\lim_{n\to\infty}\ex_\tau\xi^\tau_n(u,v)\xi^\tau_n(u',v')&=\frac{1}{4}C(u\wedge u',v\wedge v')+\frac{1}{4}C(v\wedge v',u\wedge u')\notag\\&\quad-\frac{1}{4}C(u\wedge v',v\wedge u')-\frac{1}{4}C(v\wedge u',u\wedge v').\label{kerAexpf}
	\end{align}
	Suppose instead that \(\pi^0\) and \(\pi^1\) are defined as in \eqref{rspi}. In this case we have
	\begin{align*}
	\ex_\tau\xi^\tau_n(u,v)\xi^\tau_n(u',v')&=\frac{1}{4}C_n(u\wedge u',v\wedge v')+\frac{1}{4}D_n(u\wedge u',v\wedge v')\\&\quad-\frac{1}{4n}\sum_{i=1}^n\mathbbm{1}(1-u'\leq U_{ni}\leq u,1-v'\leq V_{ni}\leq v)\\
	&\quad-\frac{1}{4n}\sum_{i=1}^n\mathbbm{1}(1-u\leq U_{ni}\leq u',1-v\leq V_{ni}\leq v'),
	\end{align*}
	It now follows from the pointwise strong consistency of the empirical copula and Lemmas \ref{gnlem} and \ref{empboxc} that condition (B) of Kosorok is satisfied a.s.\ with
	\begin{align}
	\lim_{n\to\infty}\ex_\tau\xi^\tau_n(u,v)\xi^\tau_n(u',v')&=\frac{1}{4}C(u\wedge u',v\wedge v')+\frac{1}{4}C^\srv(u\wedge u',v\wedge v')\notag\\&\quad-\frac{1}{4}\mathrm{P}\left(1-u'<U\leq u,1-v'<V\leq v\right)\notag\\&\quad-\frac{1}{4}\mathrm{P}\left(1-u<U\leq u',1-v<V\leq v'\right).\label{kerArspf}
	\end{align}
	\item[(C)] \(\limsup_{n\to\infty}\sum_{i=1}^n\ex_\tau F_{ni}^2=1<\infty\), trivially.
	\item[(D)] \(\sum_{i=1}^n\ex_\tau F_{ni}^2\mathbbm{1}(F_{ni}>\epsilon)=\mathbbm{1}(n^{-1/2}>\epsilon)\to0\) for each \(\epsilon>0\), also trivially.
	\item[(E)] The quantity \(n|f_{ni}(u,v)-f_{ni}(u',v')|^2\) is equal to one if exactly one of the inequalities \(\pi^{\tau_i}(U_{ni},V_{ni})\leq(u,v)\) and \(\pi^{\tau_i}(U_{ni},V_{ni})\leq(u',v')\) is satisfied, and is equal to zero otherwise. We therefore have
	\begin{align}
	n|f_{ni}(u,v)-f_{ni}(u',v')|^2&=\mathbbm{1}(\pi^{\tau_i}(U_{ni},V_{ni})\leq(u,v))(1-\mathbbm{1}(\pi^{\tau_i}(U_{ni},V_{ni})\leq(u',v')))\notag\\&\quad+(1-\mathbbm{1}(\pi^{\tau_i}(U_{ni},V_{ni})\leq(u,v)))\mathbbm{1}(\pi^{\tau_i}(U_{ni},V_{ni})\leq(u',v'))\notag\\
	&=\mathbbm{1}\left(\pi^{\tau_i}(U_{ni},V_{ni})\leq(u,v)\right)+\mathbbm{1}\left(\pi^{\tau_i}(U_{ni},V_{ni})\leq(u',v')\right)\notag\\
	&\quad-2\cdot\mathbbm{1}\left(\pi^{\tau_i}(U_{ni},V_{ni})\leq(u\wedge u',v\wedge v')\right).\label{kosE0}
	\end{align}
	It follows that
	\begin{align}
	n\ex_\tau|f_{ni}(u,v)-f_{ni}(u',v')|^2&=\frac{1}{2}\mathbbm{1}\left(\pi^0(U_{ni},V_{ni})\leq(u,v)\right)+\frac{1}{2}\mathbbm{1}\left(\pi^0(U_{ni},V_{ni})\leq(u',v')\right)\notag\\
	&\quad+\frac{1}{2}\mathbbm{1}\left(\pi^1(U_{ni},V_{ni})\leq(u,v)\right)+\frac{1}{2}\mathbbm{1}\left(\pi^1(U_{ni},V_{ni})\leq(u',v')\right)\notag\\
	&\quad-\mathbbm{1}\left(\pi^0(U_{ni},V_{ni})\leq(u\wedge u',v\wedge v')\right)\notag\\&\quad-\mathbbm{1}\left(\pi^1(U_{ni},V_{ni})\leq(u\wedge u',v\wedge v')\right).\label{kosE}
	\end{align}
	Define
	\begin{equation*}
	\rho_n((u,v),(u',v'))=\left(\sum_{i=1}^n\ex_\tau|f_{ni}(u,v)-f_{ni}(u',v')|^2\right)^{1/2}.
	\end{equation*}
	Suppose that \(\pi^0\) and \(\pi^1\) are defined as in \eqref{expi}. Then from \eqref{kosE} we have
	\begin{align*}
	\rho_n((u,v),(u',v'))^2&=\frac{1}{2}C_n(u,v)+\frac{1}{2}C_n(u',v')+\frac{1}{2}C_n(v,u)+\frac{1}{2}C_n(v',u')\\&\quad-C_n(u\wedge u',v\wedge v')-C_n(v\wedge v',u\wedge u'),
	\end{align*}
	and so the uniform strong consistency of the empirical copula ensures that condition (E) of Kosorok is satisfied a.s.\ with
	\begin{align*}
	\rho((u,v),(u',v'))^2&=\frac{1}{2}C(u,v)+\frac{1}{2}C(u',v')+\frac{1}{2}C(v,u)+\frac{1}{2}C(v',u')\\&\quad-C(u\wedge u',v\wedge v')-C(v\wedge v',u\wedge u').
	\end{align*}
	Suppose instead that \(\pi^0\) and \(\pi^1\) are defined as in \eqref{rspi}. Then from \eqref{kosE} we have
	\begin{align}
	\rho_n((u,v),(u',v'))^2&=\frac{1}{2}C_n(u,v)+\frac{1}{2}C_n(u',v')+\frac{1}{2}D_n(u,v)+\frac{1}{2}D_n(u',v')\notag\\&\quad-C_n(u\wedge u',v\wedge v')-D_n(u\wedge u',v\wedge v'),
	\end{align}
	and so the uniform strong consistency of the empirical copula along with Lemma \ref{gnlem} ensures that condition (E) of Kosorok is satisfied a.s.\ with
	\begin{align*}
	\rho((u,v),(u',v'))^2&=\frac{1}{2}C(u,v)+\frac{1}{2}C(u',v')+\frac{1}{2}C^\srv(u,v)+\frac{1}{2}C^\srv(u',v')\notag\\&\quad-C(u\wedge u',v\wedge v')-C^\srv(u\wedge u',v\wedge v').
	\end{align*}
\end{asparaenum}

Having verified that Kosorok's conditions (A)--(E) hold a.s., it remains to verify Kosorok's almost measurable Suslin (AMS) condition. From \eqref{kosE0} we have
\begin{align*}
\sum_{i=1}^n|f_{ni}(u,v)-f_{ni}(u',v')|^2&=\frac{1}{n}\sum_{i=1}^n\mathbbm{1}\left(\pi^{\tau_i}(U_{ni},V_{ni})\leq(u,v)\right)\\&\quad+\frac{1}{n}\sum_{i=1}^n\mathbbm{1}\left(\pi^{\tau_i}(U_{ni},V_{ni})\leq(u',v')\right)\\&\quad-\frac{2}{n}\sum_{i=1}^n\mathbbm{1}\left(\pi^{\tau_i}(U_{ni},V_{ni})\leq(u\wedge u',v\wedge v')\right).
\end{align*}
With \(\pi^0\) and \(\pi^1\) defined as in either \eqref{expi} or \eqref{rspi}, \(\pi^{\tau_i}(U_{ni},V_{ni})\) takes values on the grid \(T_n=\{(i/n,j/n):0\leq i,j\leq n\}\). We therefore have \(\inf_{(u,v)\in T_n}\sum_{i=1}^n|f_{ni}(u,v)-f_{ni}(u',v')|^2=0\) for every \((u',v')\in[0,1]^2\), with the infimum achieved by choosing \((u,v)\in T_n\) to be the largest grid point in the rectangle \([0,u']\times[0,v']\). This proves separability of the \(f_{ni}\)'s, which by Lemma 11.15 of \citet{K08} is sufficient for the AMS condition. We have now verified that all conditions of Theorem 11.16 of \citet{K08} hold a.s. Its conclusion tells us that \(\sup_{f\in\mathrm{BL}_1(\ell^\infty([0,1]^2))}|\ex_\tau f(\xi^\tau_n)-\ex f(\mathbb A)|\to0\) a.s., where \(\mathbb A\) is a centered Gaussian random element with covariance kernel given by the expression on the right-hand side of equality \eqref{kerAexpf} when \(\pi^0\) and \(\pi^1\) are given by \eqref{expi}, and by the expression on the right-hand side of equality \eqref{kerArspf} when \(\pi^0\) and \(\pi^1\) are given by \eqref{rspi}. Moreover, the sample paths of \(\mathbb A\) are continuous with respect to the semimetric \(\rho\), hence also with respect to the Euclidean metric.
\end{proof}

\begin{proof}[Proof of Lemma \ref{exlem}]
	Define \(\tilde{C}_n^\tau\) as in \eqref{Ftauex} and observe that \(\sqrt{n}(\tilde{C}_n^\tau-\frac{1}{2}(C_n+C_n^\top))=n^{-1/2}\sum_{i=1}^nZ_{ni}\), with the summands \(Z_{ni}\) defined as in Lemma \ref{lemcwc}. From Lemmas \ref{lemcwc} and \ref{lemeta} we have
	\begin{equation}\label{cwcexpf}
	\sqrt{n}\left(\check{C}_n^\tau-\frac{1}{2}(C_n+C_n^\top)\right)\cwconv\mathbb A,
	\end{equation}
	where \(\mathbb A\) is a centered Gaussian process with covariance kernel given by \eqref{kerAex}. From the weak convergence \(\sqrt{n}(C_n-C)\rightsquigarrow\mathbb C\) and the continuous mapping theorem we also have
	\begin{equation}\label{uwcexpf}
	\sqrt{n}\left(\frac{1}{2}(C_n+C_n^\top)-\frac{1}{2}(C+C^\top)\right)\rightsquigarrow\frac{1}{2}(\mathbb C+\mathbb C^\top).
	\end{equation}
	
	Let \(\mathbf D_\Phi\) be the set of bivariate cdfs on \([0,1]^2\) with margins grounded at zero, and let \(\Phi:\mathbf D_\Phi\to\ell^\infty([0,1]^2)\) be the map that sends a cdf \(\tilde{H}\in\mathbf D_\Phi\) with margins \(\tilde{H}_1\) and \(\tilde{H}_2\) to \(\tilde{H}(\tilde{H}_1^\leftarrow,\tilde{H}_2^\leftarrow)\). Theorem 2.4 of \citet{BV13} establishes that $\Phi$ is Hadamard differentiable at any regular copula in \(\mathbf D_\Phi\) tangentially to
	\begin{equation*}
	\mathbf D_0=\left\{h\in\mathbf{C}:h(0,u)=h(u,0)=h(1,1)=0\text{ for all }u\in[0,1]\right\},
	\end{equation*}
	where \(\mathbf{C}\) is the space of continuous real valued functions on \([0,1]^2\). The copula \(\frac{1}{2}(C+C^\top)\in\mathbf D_\Phi\) inherits the property of regularity from the copula \(C\in\mathbf D_\Phi\). From B\"{u}cher and Volgushev's result we obtain the derivative of \(\Phi\) at \(\frac{1}{2}(C+C^\top)\) in direction \(h\in\mathbf D_0\):
	\begin{align*}
	\Phi'_{\frac{1}{2}(C+C^\top)}h(u,v)&=h(u,v)-\frac{1}{2}\left(\dot{C}_1(u,v)+\dot{C}_2(v,u)\right)h(u,1)\notag\\&\quad-\frac{1}{2}\left(\dot{C}_2(u,v)+\dot{C}_1(v,u)\right)h(1,v).
	\end{align*}
	Note that \(\mathbb A\) concentrates on \(\mathbf D_0\), and that \(\mathbb D=\Phi'_{\frac{1}{2}(C+C^\top)}\mathbb A\). In view of \eqref{cwcexpf} and \eqref{uwcexpf} we may therefore apply the conditional delta method to obtain
	\begin{equation*}
	\sqrt{n}\left(\Phi(\check{C}_n^\tau)-\Phi\left(\frac{1}{2}(C_n+C_n^\top)\right)\right)\cwconv\mathbb D.
	\end{equation*}
	Now, \(\Phi(\check{C}^\tau_n)\) is the Deheuvels empirical copula of the perturbed transformed rank pairs \((\check{U}^\tau_{ni},\check{V}^\tau_{ni})\), \(i=1,\ldots,n\), which differs from \(C_n^\tau\) by no more than \(2n^{-1}\) a.s.\ by Lemma \ref{frwlem}. Furthermore, using the fact that \(\frac{1}{2}(C_n+C_n^\top)\) has margins uniform on \(\{n^{-1},2n^{-1},\ldots,1\}\) a.s., it is easy to show that \(\Phi(\frac{1}{2}(C_n+C_n^\top))\) differs from \(\frac{1}{2}(C_n+C_n^\top)\) by no more than \(2n^{-1}\) a.s. We therefore have \eqref{excwc2} as claimed.
\end{proof}

\begin{lemma}\label{exlem2}
	If \(C=C^\top\) then the random element \(\mathbb D\) appearing in the statement of Lemma \ref{exlem} satisfies \(\mathbb D-\mathbb D^\top\eql\mathbb C-\mathbb C^\top\).
\end{lemma}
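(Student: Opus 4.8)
The plan is to reduce the claimed identity to a covariance comparison between two centered Gaussian processes. First I would record what exchangeability buys for the partial derivatives: differentiating \(C(u,v)=C(v,u)\) on \((0,1)^2\), and checking the boundary cases directly from the \(\limsup\) conventions, gives \(\dot{C}_1(u,v)=\dot{C}_2(v,u)\) and \(\dot{C}_2(u,v)=\dot{C}_1(v,u)\) for every \((u,v)\in[0,1]^2\). Substituting these into \eqref{exd} collapses the coefficients \(\frac{1}{2}(\dot{C}_1(u,v)+\dot{C}_2(v,u))\) and \(\frac{1}{2}(\dot{C}_2(u,v)+\dot{C}_1(v,u))\) to \(\dot{C}_1(u,v)\) and \(\dot{C}_2(u,v)\), so that \(\mathbb{D}(u,v)=\mathbb{A}(u,v)-\dot{C}_1(u,v)\mathbb{A}(u,1)-\dot{C}_2(u,v)\mathbb{A}(1,v)\), which is exactly the shape of formula \eqref{cdef1} for \(\mathbb{C}\) with \(\mathbb{B}\) replaced by \(\mathbb{A}\). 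Writing \(\Lambda\) for the bounded linear operator on \(\ell^\infty([0,1]^2)\) given by \((\Lambda h)(u,v)=h(u,v)-\dot{C}_1(u,v)h(u,1)-\dot{C}_2(u,v)h(1,v)\), we then have \(\mathbb{C}=\Lambda\mathbb{B}\) always and \(\mathbb{D}=\Lambda\mathbb{A}\) under exchangeability. A one-line computation using the derivative identities shows that \(\Lambda\) commutes with the transpose, \((\Lambda h)^\top=\Lambda(h^\top)\), so by linearity \(\mathbb{D}-\mathbb{D}^\top=\Lambda(\mathbb{A}-\mathbb{A}^\top)\) and \(\mathbb{C}-\mathbb{C}^\top=\Lambda(\mathbb{B}-\mathbb{B}^\top)\); since \(\Lambda\) is continuous it suffices to prove \(\mathbb{A}-\mathbb{A}^\top\eql\mathbb{B}-\mathbb{B}^\top\).

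Both \(\mathbb{A}-\mathbb{A}^\top\) and \(\mathbb{B}-\mathbb{B}^\top\) are centered Gaussian elements of \(\ell^\infty([0,1]^2)\) with continuous paths, so it is enough to match their covariance kernels. Expanding \(\cov((\mathbb{B}-\mathbb{B}^\top)(u,v),(\mathbb{B}-\mathbb{B}^\top)(u',v'))\) into four terms using \eqref{cdef2}, the product terms \(C(\cdot,\cdot)C(\cdot,\cdot)\) cancel in pairs once \(C=C^\top\) is imposed, and the \(C(\cdot\wedge\cdot,\cdot\wedge\cdot)\) terms combine into \(C(u\wedge u',v\wedge v')+C(v\wedge v',u\wedge u')-C(u\wedge v',v\wedge u')-C(v\wedge u',u\wedge v')\), which is four times the right-hand side of \eqref{kerAex}. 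On the other side, the kernel \eqref{kerAex} is manifestly negated under swapping \((u,v)\) with \((v,u)\) in its first pair of arguments (and likewise in its second), so expanding \(\cov((\mathbb{A}-\mathbb{A}^\top)(u,v),(\mathbb{A}-\mathbb{A}^\top)(u',v'))\) into four terms, each equal to \(\pm\cov(\mathbb{A}(u,v),\mathbb{A}(u',v'))\), also produces four times \eqref{kerAex}. The two kernels agree, hence \(\mathbb{A}-\mathbb{A}^\top\eql\mathbb{B}-\mathbb{B}^\top\), and applying \(\Lambda\) gives \(\mathbb{D}-\mathbb{D}^\top\eql\mathbb{C}-\mathbb{C}^\top\).

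I expect the only real work to be bookkeeping: checking that the product terms in \(\cov(\mathbb{B}-\mathbb{B}^\top)\) genuinely cancel (they do not without exchangeability), and confirming the derivative identities \(\dot{C}_1(u,v)=\dot{C}_2(v,u)\) on the edges of the square, where \(\dot{C}_1\) and \(\dot{C}_2\) are defined by the \(\limsup\) conventions rather than as honest derivatives. Neither is difficult. If a shorter route is preferred, one can instead observe that the antisymmetry of \eqref{kerAex} forces \(\mathbb{A}^\top=-\mathbb{A}\) almost surely and \(\mathbb{A}\eql\frac{1}{2}(\mathbb{B}-\mathbb{B}^\top)\), whence \(\mathbb{D}-\mathbb{D}^\top=2\Lambda\mathbb{A}\eql\Lambda(\mathbb{B}-\mathbb{B}^\top)=\mathbb{C}-\mathbb{C}^\top\).
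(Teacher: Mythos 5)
Your proposal is correct and is essentially the paper's own argument: the paper likewise shows that under \(C=C^\top\) the kernel \eqref{kerAex} matches one quarter of the kernel of \(\mathbb B-\mathbb B^\top\), so that \(\mathbb A\eql\frac{1}{2}(\mathbb B-\mathbb B^\top)\), and then pushes this through the linear derivative operator \(\Phi'_C\) (your \(\Lambda\)) using the same exchangeability identities for \(\dot C_1,\dot C_2\); indeed the ``shorter route'' in your final sentence is exactly the published proof. Your main route merely reorganizes the same covariance computation by antisymmetrizing first and invoking the commutation \((\Lambda h)^\top=\Lambda(h^\top)\), which is fine.
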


\begin{proof}
	Recalling \eqref{cdef2}, the covariance kernel of \(\mathbb B-\mathbb B^\top\) is given by
	\begin{align*}
	&\quad\cov((\mathbb B-\mathbb B^\top)(u,v),(\mathbb B-\mathbb B^\top)(u',v'))\\
	&=C(u\wedge u',v\wedge v')-C(u,v)C(u',v')-C(u\wedge v',v\wedge u')+C(u,v)C(v',u')\\
	&\quad-C(v\wedge u',u\wedge v')+C(v,u)C(u',v')+C(v\wedge v',u\wedge u')-C(v,u)C(v',u').
	\end{align*}
	When \(C=C^\top\), this expression simplifies to
	\begin{equation*}
	\cov((\mathbb B-\mathbb B^\top)(u,v),(\mathbb B-\mathbb B^\top)(u',v'))=2C(u\wedge u',v\wedge v')-2C(u\wedge v',v\wedge u'),
	\end{equation*}
	and further, the covariance kernel of \(\mathbb A\) given in \eqref{kerAex} simplifies to
	\begin{equation*}
	\cov(\mathbb A(u,v),\mathbb A(u',v'))=\frac{1}{2}C(u\wedge u',v\wedge v')-\frac{1}{2}C(u\wedge v',v\wedge u').
	\end{equation*}
	We therefore have \(\mathbb A\eql\frac{1}{2}(\mathbb B-\mathbb B^\top)\) when \(C=C^\top\). Observe that
	\begin{align*}
	\Phi'_C(\mathbb B-\mathbb B^\top)(u,v)&=\mathbb B(u,v)-\mathbb B(v,u)-\dot{C}_1(u,v)(\mathbb B(u,1)-\mathbb B(1,u))\\
	&\quad-\dot{C}_2(u,v)(\mathbb B(1,v)-\mathbb B(v,1))\\
	&=\mathbb C(u,v)-\mathbb C(v,u),
	\end{align*}
	where we use \(C=C^\top\) to obtain the second equality. It follows that
	\begin{equation*}
	\mathbb D=\Phi'_C\mathbb A\eql\frac{1}{2}\Phi'_C(\mathbb B-\mathbb B^\top)=\frac{1}{2}(\mathbb C-\mathbb C^\top)
	\end{equation*}
	when \(C=C^\top\). But \(\frac{1}{2}(\mathbb C-\mathbb C^\top)-\frac{1}{2}(\mathbb C-\mathbb C^\top)^\top=\mathbb C-\mathbb C^\top\), and so our claim is proved.
\end{proof}

\begin{proof}[Proof of Theorem \ref{exthm}]
	When \(W_n=R_n\) or \(W_n=T_n\) we may easily deduce that \(W_n^\tau\cwconv W\) from Lemma \ref{exlem} by applying the conditional continuous mapping theorem. When \(W_n=S_n\) a more sophisticated argument is required, but since this argument is very similar to the proof of Proposition 3 of \citet{GNQ12}, we will be terse. From Lemma \ref{exlem} above and the conditional continuous mapping theorem we have
	\begin{equation}\label{product1}
	\sqrt{n}\left(\left(\begin{array}{c}\sqrt{n}(C_n^\tau-C_n^{\tau\top})^2\\C_n^\tau\end{array}\right)-\left(\begin{array}{c}0\\\frac{1}{2}(C_n+C_n^\top)\end{array}\right)\right)\cwconv\left(\begin{array}{c}(\mathbb D-\mathbb D^\top)^2\\\mathbb D\end{array}\right)
	\end{equation}
	in the product space \(\ell^\infty([0,1]^2)\times\ell^\infty([0,1]^2)\). From the weak convergence \(\sqrt{n}(C_n-C)\rightsquigarrow\mathbb C\) and continuous mapping theorem we also have
	\begin{equation}\label{product2}
	\sqrt{n}\left(\left(\begin{array}{c}0\\\frac{1}{2}(C_n+C_n^\top)\end{array}\right)-\left(\begin{array}{c}0\\\frac{1}{2}(C+C^\top)\end{array}\right)\right)\rightsquigarrow\left(\begin{array}{c}0\\\frac{1}{2}(\mathbb C+\mathbb C^\top)\end{array}\right).
	\end{equation}
	We may use \eqref{product1} and \eqref{product2} to justify an application of the conditional delta method using the same operator as in \citet[p.\ 831]{GNQ12}, Hadamard differentiability of which is given by Lemma 4.3 of \citet{CI10}. This leads us to conclude that
	\begin{equation*}
	n\iint(C_n^\tau-C_n^{\tau\top})^2\mathrm{d}C_n^\tau\cwconv\frac{1}{2}\iint(\mathbb D-\mathbb D^\top)^2\mathrm{d}(C+C^\top);
	\end{equation*}
	that is, \(W_n^\tau\cwconv W\). The second part of Theorem \ref{exthm} follows from Lemma \ref{exlem2}.
\end{proof}

\begin{proof}[Proof of Lemma \ref{rslem}]
	Define \(\tilde{C}_n^\tau\) as in \eqref{Ftauex} but with \(\pi^0\) and \(\pi^1\) defined as in \eqref{rspi}. Observe that \(\sqrt{n}(\tilde{C}_n^\tau-\frac{1}{2}(C_n+D_n))=n^{-1/2}\sum_{i=1}^nZ_{ni}\), with the summands \(Z_{ni}\) defined as in Lemma \ref{lemcwc}. From Lemmas \ref{lemcwc} and \ref{lemeta} we have
	\begin{equation}\label{cwcrspf}
	\sqrt{n}\left(\check{C}_n^\tau-\frac{1}{2}(C_n+D_n)\right)\cwconv\mathbb A.
	\end{equation}
	From the weak convergence \(\sqrt{n}(C_n-C)\rightsquigarrow\mathbb C\), the continuous mapping theorem, and the fact that \(|D_n-C_n^\srv|\leq 4n^{-1}\) a.s.\ by Lemma \ref{gnlem}, we also have
	\begin{align}
	\sqrt{n}\left(\frac{1}{2}(C_n+D_n)-\frac{1}{2}(C+C^\srv)\right)&=\frac{1}{2}\left(\sqrt{n}(C_n-C)+\sqrt{n}(C_n-C)^\rot\right)+\frac{1}{2}\sqrt{n}(D_n-C_n^\srv)\notag\\
	&\rightsquigarrow\frac{1}{2}(\mathbb C+\mathbb C^\rot).\label{cwcrspf2}
	\end{align}
	
	We would like to use \eqref{cwcrspf} and \eqref{cwcrspf2} to apply the conditional delta method using B\"{u}cher and Volgushev's result on the Hadamard differentiability of \(\Phi\), similar to  what we did at the analogous point in the proof of Lemma \ref{exlem}. However, there is a small technical issue to resolve: the bivariate cdfs \(\check{C}_n^\tau\) and \(\frac{1}{2}(C_n+D_n)\) do not have margins grounded at zero, and so do not lie in the domain of \(\Phi\). In fact, while \(C_n(1,0)=C_n(0,1)=0\) as desired, we have \(D_n(1,0)=D_n(0,1)=n^{-1}\) a.s., and \(\check{C}_n^\tau(1,0)\) and \(\check{C}_n^\tau(0,1)\) are equal to either \(0\) or \(n^{-1}\) a.s. Letting \(\tilde{\mathbf D}\) denote the set of bivariate cdfs on \([0,1]^2\), we define a sequence of maps \(\Lambda_n:\tilde{\mathbf D}\to\mathbf D_\Phi\) by \(\Lambda_n(H)(u,v)=\mathbbm{1}(u\wedge v\geq n^{-1})H(u,v)\).	It is easy to see that the four terms
	\begin{multline*}
	\left\vert\Lambda_n(\check{C}^\tau_n)-\check{C}^\tau_n\right\vert,\quad\left\vert\Lambda_n\left(\frac{1}{2}(C_n+D_n)\right)-\frac{1}{2}(C_n+D_n)\right\vert,\\\left\vert\Phi\left(\Lambda_n(\check{C}^\tau_n)\right)-\Phi(\check{C}^\tau_n)\right\vert\quad\text{and}\quad\left\vert\Phi\left(\Lambda_n\left(\frac{1}{2}(C_n+D_n)\right)\right)-\Phi\left(\frac{1}{2}(C_n+D_n)\right)\right\vert
	\end{multline*}
	are all bounded a.s.\ by a constant multiple of \(n^{-1}\). We may therefore modify the convergences \eqref{cwcrspf} and \eqref{cwcrspf2} to obtain
	\begin{equation}\label{cwcrspf3}
	\sqrt{n}\left(\Lambda_n(\check{C}_n^\tau)-\Lambda_n\left(\frac{1}{2}(C_n+D_n)\right)\right)\cwconv\mathbb A.
	\end{equation}
	and
	\begin{equation}\label{cwcrspf4}
	\sqrt{n}\left(\Lambda_n\left(\frac{1}{2}(C_n+D_n)\right)-\frac{1}{2}(C+C^\srv)\right)\rightsquigarrow\frac{1}{2}(\mathbb C+\mathbb C^\rot).
	\end{equation}
	Theorem 2.4 of \citet{BV13} establishes that \(\Phi\) is Hadamard differentiable at \(\frac{1}{2}(C+C^\srv)\) tangentially to \(\mathbf D_0\), with derivative in direction \(h\in\mathbf D_0\) given by
	\begin{align*}
	\Phi'_{\frac{1}{2}(C+C^\srv)}h(u,v)&=h(u,v)-\frac{1}{2}\left(\dot{C}_1(u,v)+1-\dot{C}_1(1-u,1-v)\right)h(u,1)\notag\\&\quad-\frac{1}{2}\left(\dot{C}_2(u,v)+1-\dot{C}_2(1-u,1-v)\right)h(1,v).\label{ddef1}
	\end{align*}
	Note that \(\mathbb A\) concentrates on \(\mathbf D_0\), and that \(\mathbb D=\Phi'_{\frac{1}{2}(C+C^\srv)}\mathbb A\). In view of \eqref{cwcrspf3} and \eqref{cwcrspf4} we may therefore apply the conditional delta method to obtain
	\begin{equation*}
	\sqrt{n}\left(\Phi\left(\Lambda_n(\check{C}_n^\tau)\right)-\Phi\left(\Lambda_n\left(\frac{1}{2}(C_n+D_n)\right)\right)\right)\cwconv\mathbb D.
	\end{equation*}
	Consequently,
	\begin{equation*}
	\sqrt{n}\left(\Phi(\check{C}_n^\tau)-\Phi\left(\frac{1}{2}(C_n+D_n)\right)\right)\cwconv\mathbb D.
	\end{equation*}	
	As in the proof of Lemma \ref{exlem}, \(\Phi(\check{C}_n^\tau)\) differs from \(C_n^\tau\) and \(\Phi(\frac{1}{2}(C_n+D_n))\) differs from \(\frac{1}{2}(C_n+D_n)\) by no more than \(2n^{-1}\) a.s. Thus we have \eqref{rscwc2} as claimed.
\end{proof}

\begin{lemma}\label{rslem2}
	Let \(\mathbb D\) be the random element appearing in the statement of Lemma \ref{rslem}. If \(C=C^\srv\) then there exist centered Gaussian random elements \(\mathbb E\) and \(\mathbb F\) of \(\ell^\infty([0,1]^2)\) that are equal in law and satisfy
	\begin{align*}
	\mathbb C(u,v)-\mathbb C(1-u,1-v)&=\mathbb E(u,v)-\dot{C}_1(u,v)\mathbb E(u,1)-\dot{C}_2(u,v)\mathbb E(1,v),\\\mathbb D(u,v)-\mathbb D(1-u,1-v)&=\mathbb F(u,v)-\dot{C}_1(u,v)\mathbb F(u,1)-\dot{C}_2(u,v)\mathbb F(1,v).
	\end{align*}
	Consequently, \(\mathbb D-\mathbb D^\rot\eql\mathbb C-\mathbb C^\rot\). The covariance kernel of \(\mathbb E\) and \(\mathbb F\) is
	\begin{equation*}
	\cov(\mathbb E(u,v),\mathbb E(u',v'))=2C(u\wedge u',v\wedge v')-2\pr(1-u'<U\leq u,1-v'<V\leq v).
	\end{equation*}
\end{lemma}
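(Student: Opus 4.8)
\emph{Proof plan.} I work throughout under the null $C=C^\srv$, and write $\Phi'_C$ for the linear operator $\Phi'_C h(u,v)=h(u,v)-\dot C_1(u,v)h(u,1)-\dot C_2(u,v)h(1,v)$ that appears in \eqref{cdef1}, so that $\mathbb C=\Phi'_C\mathbb B$. The first step is to record what the null says about the partial derivatives: differentiating $C(u,v)=u+v-1+C(1-u,1-v)$ yields
\begin{equation*}
\dot C_1(1-u,1-v)=1-\dot C_1(u,v),\qquad \dot C_2(1-u,1-v)=1-\dot C_2(u,v),
\end{equation*}
first in the interior by regularity, then on the boundary by the conventions adopted for $\dot C_j$. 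Substituting these into the formula for $\mathbb D$ in Lemma \ref{rslem} collapses each coefficient $\tfrac12(\dot C_j(u,v)+1-\dot C_j(1-u,1-v))$ to $\dot C_j(u,v)$, so that under the null $\mathbb D=\Phi'_C\mathbb A$.

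Next I would exhibit $\mathbb E$ and $\mathbb F$. From the covariance kernels \eqref{cdef2} and \eqref{kerArs} one checks that $\mathbb B$ and $\mathbb A$ each vanish a.s.\ on $\{u=0\}\cup\{v=0\}$. Set
\begin{align*}
\mathbb E(u,v)&=\mathbb B(u,v)-\mathbb B(1-u,1-v)+\mathbb B(1-u,1)+\mathbb B(1,1-v),\\
\mathbb F(u,v)&=\mathbb A(u,v)-\mathbb A(1-u,1-v)+\mathbb A(1-u,1)+\mathbb A(1,1-v);
\end{align*}
being continuous linear images of $\mathbb B$ and $\mathbb A$, these are centered Gaussian random elements of $\ell^\infty([0,1]^2)$ with continuous paths. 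Using the boundary vanishing, $\mathbb E(u,1)=\mathbb B(u,1)+\mathbb B(1-u,1)$ and $\mathbb E(1,v)=\mathbb B(1,v)+\mathbb B(1,1-v)$; substituting these together with $\mathbb C=\Phi'_C\mathbb B$ and the null derivative identities into $\Phi'_C\mathbb E(u,v)$ and cancelling terms verifies $\Phi'_C\mathbb E(u,v)=\mathbb C(u,v)-\mathbb C(1-u,1-v)$, the first displayed identity. The identical computation with $\mathbb A$ for $\mathbb B$ and $\mathbb D=\Phi'_C\mathbb A$ for $\mathbb C$ gives $\Phi'_C\mathbb F(u,v)=\mathbb D(u,v)-\mathbb D(1-u,1-v)$, the second displayed identity.

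It then remains to show $\mathbb E$ and $\mathbb F$ are equal in law and to identify the kernel. Expanding $\cov(\mathbb E(u,v),\mathbb E(u',v'))$ from \eqref{cdef2}, the products of values of $C$ group into $-(C(u,v)-C^\srv(u,v))(C(u',v')-C^\srv(u',v'))$, which vanishes under the null, leaving $C(u\wedge u',v\wedge v')+C^\srv(u\wedge u',v\wedge v')-\pr(1-u'<U\le u,1-v'<V\le v)-\pr(1-u<U\le u',1-v<V\le v')$. Under $C=C^\srv$ together with the radial symmetry relation $(U,V)\eql(1-U,1-V)$ — which gives $\pr(1-u<U\le u',1-v<V\le v')=\pr(1-u'<U\le u,1-v'<V\le v)$ — this reduces to the stated kernel $2C(u\wedge u',v\wedge v')-2\pr(1-u'<U\le u,1-v'<V\le v)$; and comparing the intermediate, pre-symmetrization form with \eqref{kerArs} shows $\cov(\mathbb A)=\tfrac14\cov(\mathbb E)$, i.e.\ $\mathbb A\eql\tfrac12\mathbb E$. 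Writing $T$ for the deterministic bounded linear map $TX(u,v)=X(u,v)-X(1-u,1-v)+X(1-u,1)+X(1,1-v)$, one has $\mathbb E=T\mathbb B$ and $\mathbb F=T\mathbb A$, and a short calculation gives $T^2X=2TX$ for every $X$ vanishing on $\{u=0\}\cup\{v=0\}$; hence $T\mathbb E=T^2\mathbb B=2\mathbb E$, so $\mathbb F=T\mathbb A\eql T(\tfrac12\mathbb E)=\tfrac12T\mathbb E=\mathbb E$. Thus $\mathbb F$ has the same kernel as $\mathbb E$, and since $\Phi'_C$ is deterministic, $\mathbb D-\mathbb D^\rot=\Phi'_C\mathbb F\eql\Phi'_C\mathbb E=\mathbb C-\mathbb C^\rot$. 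The main obstacle I anticipate is the bookkeeping in the middle steps: propagating the boundary vanishing of $\mathbb B$ and $\mathbb A$ through $\Phi'_C$ and $T$, and cleanly separating in the expansion of $\cov(\mathbb E)$ the product terms (which cancel under the null) from the orthant-probability terms; everything else is routine manipulation of Gaussian processes.
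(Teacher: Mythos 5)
Your proof is correct, and while the construction of \(\mathbb E=\Psi(\mathbb B)\), \(\mathbb F=\Psi(\mathbb A)\) (your map \(T\) is exactly the paper's \(\Psi\)) and the verification of the two structural identities via \(\dot C_j=1-\dot C_j^\rot\) coincide with the paper's argument, your route to \(\mathbb E\eql\mathbb F\) is genuinely different. The paper computes \(\cov(\mathbb E)\) through the i.i.d.\ representation of \(\mathbb B\) and the identity \(\Psi(\mathbbm 1(U_i\leq u,V_i\leq v))-1=\mathbbm 1(U_i\leq u,V_i\leq v)-\mathbbm 1(1-U_i<u,1-V_i<v)\), and then establishes \(\mathbb F\eql 2\mathbb A\) by a second empirical-process argument: it shows \(\sqrt n(\Psi(\tilde C_n^\tau)-1)\cwconv\mathbb F\), rewrites this sum with summands \(W_{ni}\), bounds \(\sum_i|W_{ni}-2Z_{ni}|\) by a constant (tie-breaking at the grid), and invokes Lemma \ref{lemcwc} again; equality of the kernels of \(\mathbb F\) and \(\mathbb E\) then follows. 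You instead (i) expand \(\cov(\mathbb E)\) directly, isolating the product term \(-(C-C^\srv)(u,v)\,(C-C^\srv)(u',v')\) which vanishes under the null, so that the remainder is exactly four times \eqref{kerArs} and hence \(\mathbb A\eql\tfrac12\mathbb E\); and (ii) exploit the deterministic operator identity \(T^2X=2TX\) for \(X\) vanishing on \(\{u=0\}\cup\{v=0\}\) (which I have checked, and which applies since \(\mathbb B\) and \(\mathbb A\) vanish a.s.\ on that set) to get \(\mathbb F=T\mathbb A\eql T(\tfrac12\mathbb E)=\tfrac12T^2\mathbb B=\mathbb E\). This replaces the paper's second conditional-weak-convergence argument by a purely algebraic/distributional one, which is arguably cleaner; what it costs is the bookkeeping you flag (the inclusion--exclusion needed to turn the sixteen-term expansion of \(\cov(\Psi(\mathbb B))\) from \eqref{cdef2} into the orthant-probability form, and the continuity-of-margins argument for swapping strict and weak inequalities), whereas the paper's identity \(\mathbb F\eql 2\mathbb A\) delivers the kernel of \(\mathbb F\) without any such expansion. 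One standard point you use implicitly, as does the paper: equality of covariance kernels implies equality in law here because both processes are centered Gaussian with continuous sample paths, hence concentrate on a separable subspace of \(\ell^\infty([0,1]^2)\) where finite-dimensional laws determine the distribution, and \(T\) being continuous and linear preserves equality in law. Your treatment of the boundary values of \(\dot C_1,\dot C_2\) is at the same level of detail as the paper's, so no gap there either.
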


\begin{proof}	
	Let \(\Psi:\ell^\infty([0,1]^2)\to\ell^\infty([0,1]^2)\) be the map
	\begin{equation*}
	\Psi(\theta)(u,v)=\theta(u,v)-\theta(1-u,1-v)+\theta(1-u,1)+\theta(1,1-v),
	\end{equation*}
	and define \(\mathbb E=\Psi(\mathbb B)\) and \(\mathbb F=\Psi(\mathbb A)\), where \(\mathbb A\) is as defined in the statement of Lemma \ref{rslem}. The operator \(\Psi\) is continuous and linear, and the random elements \(\mathbb A\) and \(\mathbb B\) have continuous sample paths and therefore take values in a separable subset of \(\ell^\infty([0,1]^2)\). It follows from Proposition 3.7.2 of \citet{B98} that \(\mathbb E\) and \(\mathbb F\) are centered Gaussian random elements of \(\ell^\infty([0,1]^2)\). When \(C=C^\srv\) we have \(\dot{C}_j=1-\dot{C}_j^\rot\) for \(j=1,2\). Therefore, from the definition of \(\mathbb C\) given in \eqref{cdef1},
	\begin{align*}
	\mathbb C(u,v)-\mathbb C(1-u,1-v)&=\mathbb B(u,v)-\mathbb B(1-u,1-v)+\mathbb B(1-u,1)+\mathbb B(1,1-v)\\&\quad-\dot{C}_1(u,v)(\mathbb B(u,1)+\mathbb B(1-u,1))\\&\quad-\dot{C}_2(u,v)(\mathbb B(1,v)+\mathbb B(1,1-v))\\
	&=\mathbb E(u,v)-\dot{C}_1(u,v)\mathbb E(u,1)-\dot{C}_2(u,v)\mathbb E(1,v).
	\end{align*}
	Similarly, from the definition of \(\mathbb D\) given in the statement of Lemma \ref{rslem},
	\begin{equation*}
	\mathbb D(u,v)-\mathbb D(1-u,1-v)=\mathbb F(u,v)-\dot{C}_1(u,v)\mathbb F(u,1)-\dot{C}_2(u,v)\mathbb F(1,v).
	\end{equation*}
	
	It remains only to show that \(\mathbb E\) and \(\mathbb F\) have the claimed covariance kernel. First we examine the covariance kernel of \(\mathbb E\). Let \(U_i=F(X_i)\) and \(V_i=G(Y_i)\). Since \(\mathbb B(u,v)\) is the weak limit of \(n^{-1/2}\sum_{i=1}^n(\mathbbm{1}(U_i\leq u,V_i\leq v)-C(u,v))\), the continuous mapping theorem implies that \(\mathbb E(u,v)\) is the weak limit of \(n^{-1/2}\sum_{i=1}^n\Psi(\mathbbm{1}(U_i\leq u,V_i\leq v)-C(u,v))\). The terms in this latter summation are centered and iid so, using the fact that \(\Psi(C)=1\) when \(C=C^\srv\), we find that
	\begin{equation}\label{ecov}
	\cov\left(\mathbb E(u,v),\mathbb E(u',v')\right)=\ex\bigl((\Psi(\mathbbm{1}(U_i\leq u,V_i\leq v))-1)(\Psi(\mathbbm{1}(U_i\leq u',V_i\leq v'))-1)\bigr).
	\end{equation}
	Observe that
	\begin{align*}%\label{neattrick}
	\mathbbm{1}(U_i\leq1-u,V_i\leq1-v)&=\mathbbm{1}(U_i\leq1-u)+\mathbbm{1}(V_i\leq1-v)-1\notag\\&\quad+\mathbbm{1}(1-U_i<u,1-V_i<v).
	\end{align*}
	It follows that
	\begin{align}\label{neattrick2}
	\Psi(\mathbbm{1}(U_i\leq u,V_i\leq v))-1&=\mathbbm{1}(U_i\leq u,V_i\leq v)-\mathbbm{1}(U_i\leq1-u,V_i\leq1-v)\notag\\
	&\quad+\mathbbm{1}(U_i\leq1-u)+\mathbbm{1}(V_i\leq1-v)-1\notag\\
	&=\mathbbm{1}(U_i\leq u,V_i\leq v)-\mathbbm{1}(1-U_i<u,1-V_i<v).
	\end{align}
	It now follows from \eqref{ecov} that
	\begin{align}
	\cov\left(\mathbb E(u,v),\mathbb E(u',v')\right)&=\ex\big{(}(\mathbbm{1}(U_i\leq u,V_i\leq v)-\mathbbm{1}(1-U_i<u,1-V_i<v))\notag\\
	&\quad\times(\mathbbm{1}(U_i\leq u',V_i\leq v')-\mathbbm{1}(1-U_i<u',1-V_i<v'))\big{)}\notag\\
	&=\pr\left(U_i\leq u\wedge u',V_i\leq v\wedge v'\right)\notag\\&\quad+\pr\left(1-U_i<u\wedge u',1-V_i<v\wedge v'\right)\notag\\
	&\quad-\pr\left(1-u'<U_i\leq u,1-v'<V_i\leq v\right)\notag\\
	&\quad-\pr\left(1-u<U_i\leq u',1-v<V_i\leq v'\right)\notag\\
	&=2C\left(u\wedge u',v\wedge v'\right)-2\pr\left(1-u'<U_i\leq u,1-v'<V_i\leq v\right),\label{ecov2}
	\end{align}
	where to obtain the last equality we use the fact that \((U_i,V_i)\) and \((1-U_i,1-V_i)\) both have continuous joint cdf \(C\) when \(C=C^\srv\).
	
	Next we examine the covariance kernel of \(\mathbb F\). By arguing as we did at the beginning of the proof of Lemma \ref{rslem}, we deduce from Lemmas \ref{lemcwc} and \ref{gnlem} that
	\begin{equation*}
	\sqrt{n}\left(\tilde{C}_n^\tau-\frac{1}{2}(C_n+C_n^\srv)\right)\cwconv\mathbb A.
	\end{equation*}
	Using the conditional continuous mapping theorem we then obtain
	\begin{equation*}
	\sqrt{n}\left(\Psi(\tilde{C}_n^\tau)-\frac{1}{2}\Psi(C_n+C_n^\srv)\right)\cwconv\mathbb F.
	\end{equation*}
	It is simple to verify that	\(\Psi(C_n+C_n^\srv)(u,v)=u+v+C_n(1-u,1)+C_n(1,1-v)\). Since \(C_n(1-u,1)\) and \(C_n(1,1-v)\) differ by no more than \(n^{-1}\) from \(1-u\) and \(1-v\) respectively a.s., it follows that \(\Psi(C_n+C_n^\srv)\) differs from \(2\) by no more than \(2n^{-1}\) a.s. Thus
	\begin{equation}\label{fconv}
	\sqrt{n}(\Psi(\tilde{C}_n^\tau)-1)\cwconv\mathbb F.
	\end{equation}
	By applying \eqref{neattrick2} with \(\pi^{\tau_i}(U_{ni},V_{ni})\) in place of \((U_i,V_i)\), we find that
	\begin{equation}\label{fconv2}
	\sqrt{n}(\Psi(\tilde{C}_n^\tau)-1)=\frac{1}{\sqrt{n}}\sum_{i=1}^n\bigl(\Psi(\mathbbm{1}(\pi^{\tau_i}(U_{ni},V_{ni})\leq(u,v))-1\bigr)=\frac{1}{\sqrt{n}}\sum_{i=1}^nW_{ni}(u,v),
	\end{equation}
	where we define
	\begin{equation*}
	W_{ni}(u,v)=\mathbbm{1}(\pi^{\tau_i}(U_{ni},V_{ni})\leq(u,v))-\mathbbm{1}(\pi^{\tau_i}(U_{ni},V_{ni})>(1-u,1-v)).
	\end{equation*}
	The summands \(Z_{ni}\) appearing in the statement of Lemma \ref{lemcwc} satisfy
	\begin{equation*}
	2Z_{ni}(u,v)=\mathbbm{1}(\pi^{\tau_i}(U_{ni},V_{ni})\leq(u,v))-\mathbbm{1}(\pi^{\tau_i}(U_{ni},V_{ni})\geq(1-u,1-v)).
	\end{equation*}
	Observe that \(\vert W_{ni}(u,v)-2Z_{ni}(u,v)\vert\) is bounded by \(4\) a.s.\ after summing over \(i=1,\ldots,n\). This follows from the fact that the first component of \(\pi^{\tau_i}(U_{ni},V_{ni})\) is equal to \(1-u\) for at most two values of \(i\) a.s., and the second component of \(\pi^{\tau_i}(U_{ni},V_{ni})\) is equal to \(1-v\) for at most two values of \(i\) a.s. Consequently, from \eqref{fconv} and \eqref{fconv2}, we have
	\begin{equation*}
	\frac{1}{\sqrt{n}}\sum_{i=1}^nZ_{ni}\cwconv\frac{1}{2}\mathbb F.
	\end{equation*}
	It now follows from Lemma \ref{lemcwc} that \(\mathbb F\eql2\mathbb A\). When \(C=C^\srv\) the covariance kernel of \(\mathbb A\) given in \eqref{kerArs} simplifies to
	\begin{equation*}
	\cov(\mathbb A(u,v),\mathbb A(u',v'))=\frac{1}{2}C(u\wedge u',v\wedge v')-\frac{1}{2}\pr(1-u'<U_i\leq u,1-v'<V_i\leq v).
	\end{equation*}
	The covariance kernel of \(\mathbb F\) is \(4\) times the covariance kernel of \(\mathbb A\), and thus equal to the covariance kernel of \(\mathbb E\) given in \eqref{ecov2}. Thus \(\mathbb E\eql\mathbb F\).
\end{proof}

\begin{proof}[Proof of Theorem \ref{rsthm}]
	Lemma \ref{gnlem} implies that
	\begin{equation*}
	\left\vert C_n(u,v)+D_n(u,v)-C_n(1-u,1-v)-D_n(1-u,1-v)-2u-2v+2\right\vert\leq 8n^{-1}
	\end{equation*}
	a.s., and similarly \(|D_n^\tau-C_n^{\tau\srv}|\leq4n^{-1}\) a.s. It follows easily that
	\begin{equation}\label{shortpf}
	\left\vert\left(C_n^\tau-\frac{1}{2}\left(C_n+D_n\right)\right)-\left(C_n^\tau-\frac{1}{2}\left(C_n+D_n\right)\right)^\rot-\left(C_n^\tau-D_n^\tau\right)\right\vert\leq8n^{-1}.
	\end{equation}
	From Lemma \ref{rslem} and the conditional continuous mapping theorem we thus have \(\sqrt{n}(C_n^\tau-D_n^\tau)\cwconv\mathbb D-\mathbb D^\rot\). When \(W_n=R_n'\) or \(W_n=T_n'\), another application of the conditional continuous mapping theorem yields \(W_n\cwconv W\). When \(W_n=S_n'\), we instead use Lemma \ref{rslem}, inequality \eqref{shortpf} and the conditional continuous mapping theorem to write
	\begin{equation}\label{product1rs}
	\sqrt{n}\left(\left(\begin{array}{c}\sqrt{n}(C_n^\tau-D_n^\tau)^2\\C_n^\tau\end{array}\right)-\left(\begin{array}{c}0\\\frac{1}{2}(C_n+D_n)\end{array}\right)\right)\cwconv\left(\begin{array}{c}(\mathbb D-\mathbb D^\rot)^2\\\mathbb D\end{array}\right),
	\end{equation}
	and use \eqref{cwcrspf2}, shown in the proof of Lemma \ref{rslem}, to write
	\begin{equation}\label{product2rs}
	\sqrt{n}\left(\left(\begin{array}{c}0\\\frac{1}{2}(C_n+D_n)\end{array}\right)-\left(\begin{array}{c}0\\\frac{1}{2}(C+C^\srv)\end{array}\right)\right)\rightsquigarrow\left(\begin{array}{c}0\\\frac{1}{2}(\mathbb C+\mathbb C^\rot)\end{array}\right).
	\end{equation}
	Using \eqref{product1rs} and \eqref{product2rs}, we may apply the conditional delta method in the same way as in the first paragraph of the proof of Theorem \ref{exthm} to deduce that \(W_n\cwconv W\). The second part of Theorem \ref{rsthm} then follows from Lemma \ref{rslem2}.
\end{proof}


\begin{thebibliography}{9}
	
	\bibitem[Athey and Haile(2002)]{AH02}
	\textsc{Athey, S.\ and Haile, P.\ A.} (2002). Identification of standard auction models. \textit{Econometrica}, \textbf{70}, 2107--2140.
	
	\bibitem[Ang and Chen(2002)]{AC02}
	\textsc{Ang, A.\ and Chen, J.} Asymmetric correlations of equity portfolios. \textit{Journal of Financial Economics}, \textbf{63}, 443--494.
	
	\bibitem[Beare and Seo(2014)]{BS14}
	\textsc{Beare, B.\ K.\ and Seo, J.} (2014). Time irreversible copula-based Markov models. \emph{Econometric Theory}, \textbf{30}, 923--960.
	
	\bibitem[Bahraoui and Quessy(2017)]{BQ17}
	\textsc{Bahraoui, T.\ and Quessy, J.\ -F.} (2017). Tests of radial symmetry for multivariate copulas based on the copula characteristic function. \emph{Electronic Journal of Statistics}, \textbf{11}, 2066--2096.
	
	\bibitem[Bogachev(1998)]{B98}
	\textsc{Bogachev, V.\ I.} (1998). \emph{Gaussian Measures}. American Mathematical Society, Providence.
	
	\bibitem[Bouzebda and Cherfi(2012)]{BC12}
	\textsc{Bouzebda, S.\ and Cherfi, M.} (2012). Test of symmetry based on copula function. \emph{Journal of Statistical Planning and Inference}, \textbf{142}, 1262--1271.
	
	\bibitem[B\"{u}cher(2011)]{B11}
	\textsc{B\"{u}cher, A.} (2011). Statistical inference for copulas and extremes. Doctoral thesis, Ruhr-Universit\"{a}t Bochum.
	
	\bibitem[B\"{u}cher and Dette(2010)]{BD10}
	\textsc{B\"{u}cher, A.\ and Dette, H.} (2010). A note on bootstrap approximations for the empirical copula process. \emph{Statistics and Probability Letters}, \textbf{80}, 1925--1932.
	
	\bibitem[B\"{u}cher et al.(2017)]{BIW17}
	\textsc{B\"{u}cher, A.\ Irresberger, F.\ and Weiss, G.\ N.\ F.} (2017). Testing asymmetry in dependence with copula-coskewness. \emph{North American Actuarial Journal}, \textbf{21}, 267--280.
	
	\bibitem[B\"{u}cher and Volgushev(2013)]{BV13}
	\textsc{B\"{u}cher, A.\ and Volgushev, S.} (2013). Empirical and sequential copula processes under serial dependence. \emph{Journal of Multivariate Analysis}, \textbf{119}, 61--70.
	
	\bibitem[Canay et al.(2017)]{CRS17}
	\textsc{Canay, I.\ A., Romano, J.\ P.\ and Shaikh, A.} (2017). Randomization tests under an approximate symmetry assumption. \emph{Econometrica}, \textbf{85}, 1013--1030.
	
	\bibitem[Carabarin-Aguirre and Ivanoff(2010)]{CI10}
	\textsc{Carabarin-Aguirre, A.\ and Ivanoff, B.\ G.} (2010). Estimation of a distribution under generalized censoring. \emph{Journal of Multivariate Analysis}, \textbf{101}, 1501--1519.
	
	\bibitem[Chan et al.(2009)]{CCCFP09}
	\textsc{Chan, N.\ -H., Chen, J., Chen, X., Fan, Y.\ and Peng, L.} (2009). Statistical inference for multivariate residual copula of GARCH models. \textit{Statistica Sinica}, \textbf{19}, 53--70.
	
	\bibitem[Chen and Fan(2006)]{CF06}
	\textsc{Chen, X.\ and Fan, Y.} (2006). Estimation and model selection of semiparametric copula-based multivariate dynamic models under copula misspecification. \textit{Journal of Econometrics}, \textbf{135}, 125--154.
	
	\bibitem[Chen and Riordan(2013)]{CR13b}
	\textsc{Chen, Y.\ and Riordan, M.\ H.} (2013). Profitability of product bundling. \textit{International Economic Review}, \textbf{54}, 35--57.
	
	\bibitem[Chung and Romano(2013)]{CR13}
	\textsc{Chung, E.\ and Romano, J.\ P.} (2013). Exact and asymptotically robust permutation tests. \emph{Annals of Statistics}, \textbf{41}, 484--507.
	
	\bibitem[Chung and Romano(2016a)]{CR16a}
	\textsc{Chung, E.\ and Romano, J.\ P.} (2016a). Asymptotically valid and exact permutation tests based on two-sample U-statistics. \emph{Journal of Statistical Planning and Inference}, \textbf{168}, 97--105.
	
	\bibitem[Chung and Romano(2016b)]{CR16b}
	\textsc{Chung, E.\ and Romano, J.\ P.} (2016b). Multivariate and multiple permutation tests. \emph{Journal of Econometrics}, \textbf{193}, 76--91.
	
	\bibitem[Dehgani et al.(2013)]{DDU13}
	\textsc{Dehgani, A., Dolati, A.\ and \'{U}beda-Flores, M.} (2013). Measures of radial asymmetry for bivariate random vectors. \emph{Statistical Papers}, \textbf{54}, 271--286.
	
	\bibitem[Deheuvels(1979)]{D79}
	\textsc{Deheuvels, P.} (1979). La fonction de d\'{e}pendance empirique et ses propri\'{e}t\'{e}s: Un test non param\'{e}trique d'ind\'{e}pendance. \textit{Acad\'{e}mie Royale de Belgique, Bulletin de la Classe des Sciences (5)}, \textbf{65}, 274--292.
	
	\bibitem[Fermanian et al.(2004)]{FRW04}
	\textsc{Fermanian, J.\ -D., Radulovi\'{c}, D.\ and Wegkamp, M.} (2004). Weak convergence of empirical copula processes. \emph{Bernoulli}, \textbf{10}, 847--860.
	
	\bibitem[Garcia and Tsafack(2011)]{GT11}
	\textsc{Garcia, R.\ and Tsafack, G.} (2011). Dependence structure and extreme comovements in international equity and bond markets. \textit{Journal of Banking and Finance}, \textbf{35}, 1954--1970.
	
	\bibitem[Genest and Ne\v{s}lehov\'{a}(2013)]{GN13}
	\textsc{Genest, C.\ and Ne\v{s}lehov\'{a}, J.\ G.} (2013). Assessing and modeling asymmetry in bivariate continuous data. In Jaworski, P., Durante, F.\ and H\"{a}rdle, W.\ K.\ (eds), \emph{Copulae in Mathematical and Quantitative Finance}, 91--114. Springer, Berlin.
	
	\bibitem[Genest and Ne\v{s}lehov\'{a}(2014)]{GN14}
	\textsc{Genest, C.\ and Ne\v{s}lehov\'{a}, J.\ G.} (2014). On tests of radial symmetry for bivariate copulas. \emph{Statistical Papers}, \textbf{55}, 1107--1119.
	
	\bibitem[Genest et al.(2012)]{GNQ12}
	\textsc{Genest, C., Ne\v{s}lehov\'{a}, J.\ G.\ and Quessy, J.\ -F.} (2012). Tests of symmetry for bivariate copulas. \emph{Annals of the Institute of Statistical Mathematics}, \textbf{64}, 811--834.
	
	\bibitem[Harder and Stadtm\"{u}ller(2017)]{HS17}
	\textsc{Harder, M.\ and Stadtm\"{u}ller, U.} (2017). Testing exchangeability of copulas in arbitrary dimension. \emph{Journal of Nonparametric Statistics}, \textbf{29}, 40--60.
	
	\bibitem[Hendricks et al.(2003)]{HPP03}
	\textsc{Hendricks, K., Pinske, J.\ and Porter, R.\ H.} (2003). Empirical implications of equilibrium bidding in first-price, symmetric, common value auctions. \textit{Review of Economic Studies}, \textbf{70}, 115--145.
	
	\bibitem[Hoeffding(1952)]{H52}
	\textsc{Hoeffding, W.} (1952). The large-sample power of tests based on permutations of observations. \textit{Annals of Mathematical Statistics}, \textbf{23}, 169--192.
	
	\bibitem[Hong et al.(2007)]{HTZ02}
	\textsc{Hong, Y., Tu, J.\ and Zhou, G.} (2002). Asymmetries in stock returns: statistical tests and economic evaluation. \textit{Review of Financial Studies}, \textbf{20}, 1547--1581.
	
	\bibitem[Khoudraji(1995)]{K95}
	\textsc{Khoudraji, A.} (1995). \textit{Contributions \`{a} l'\'{e}tude des copules et \`{a} la mod\'{e}lisation de valeurs extr\^{e}mes bivari\'{e}es}. Ph.D.\ thesis, Universit\'{e} Laval, Qu\'{e}bec City.
	
	\bibitem[Kosorok(2003)]{K03}
	\textsc{Kosorok, M.\ R.} (2003). Bootstraps of sums of independent but not identically distributed stochastic processes. \emph{Journal of Multivariate Analysis}, \textbf{84}, 299--318.
	
	\bibitem[Kosorok(2008)]{K08}
	\textsc{Kosorok, M.\ R.} (2008). \textit{Introduction to Empirical Processes and Semiparametric Inference}. Springer Series in Statistics, New York.
	
	\bibitem[Krupskii(2017)]{K17}
	\textsc{Krupskii, P.} (2017). Copula-based measures of reflection and permutation asymmetry and statistical tests. \emph{Statistical Papers}, \textbf{58}, 1165--1187.
	
	\bibitem[Li and Genton(2013)]{LG13}
	\textsc{Li, B.\ and Genton, M. G.} (2013). Nonparametric identification of copula structures. \emph{Journal of the American Statistical Association}, \textbf{108}, 666--675.
	
	\bibitem[Li et al.(2000)]{LPV00}
	\textsc{Li, T., Perrigne, I.\ and Vuong, Q.} (2000). Conditionally independent private information in OCS wildcat auctions. \textit{Journal of Econometrics}, \textbf{98}, 129--161.
	
	\bibitem[Menzel(2016)]{M16}
	\textsc{Menzel, K.} (2016). Inference for games with many players. \textit{Review of Economic Studies}, \textbf{83}, 306--337.
	
	\bibitem[Milgrom and Weber(1982)]{MW82}
	\textsc{Milgrom, P.\ R.\ and Weber, R.\ J.} (1982). A theory of auctions and competitive bidding. \textit{Econometrica}, \textbf{50}, 1089--1122.
	
	\bibitem[Nelsen(1993)]{N93}
	\textsc{Nelsen, R.\ B.} (1993). Some concepts of bivariate symmetry. \emph{Journal of Nonparametric Statistics}, \textbf{3}, 95--101.
	
	\bibitem[Nelsen(2006)]{N06}
	\textsc{Nelsen, R.\ B.} (2006). \emph{An Introduction to Copulas}, 2nd ed. Springer, New York.
	
	\bibitem[Nelsen(2007)]{N07}
	\textsc{Nelsen, R.\ B.} (2007). Extremes of nonexchangeability. \emph{Statistical Papers}, \textbf{48}, 329--336.
	
	\bibitem[Okimoto(2008)]{O08}
	\textsc{Okimoto, T.} (2008). New evidence of asymmetric dependence structures in international equity markets. \textit{Journal of Financial and Quantitative Analysis}, \textbf{43}, 787--815.
	
	\bibitem[Patton(2004)]{P04}
	\textsc{Patton, A.\ J.} (2004). On the out-of-sample importance of skewness and asymmetric dependence for asset allocation. \textit{Journal of Financial Econometrics}, \textbf{2}, 130--168.
	
	\bibitem[Patton(2006)]{P06}
	\textsc{Patton, A.\ J.} (2006). Modelling asymmetric exchange rate dependence. \textit{International Economic Review}, \textbf{47}, 527--556.
	
	\bibitem[Pollard(1990)]{P90}
	\textsc{Pollard, D.} (1990). \emph{Empirical Processes: Theory and Applications}. Institute of Mathematical Statistics, Hayward.
	
	\bibitem[Quessy(2016)]{Q16}
	\textsc{Quessy, J.\ -F.} (2016). On consistent nonparametric statistical tests of symmetry hypotheses. \emph{Symmetry}, \textbf{8}, 31.
	
	\bibitem[Quessy and Bahraoui(2013)]{QB13}
	\textsc{Quessy, J.\ -F.\ and Bahraoui, T.} (2013). Graphical and formal statistical tools for the symmetry of bivariate copulas. \emph{Canadian Journal of Statistics}, \textbf{41}, 637--656.
	
	\bibitem[R\'{e}millard and Scaillet(2009)]{RS09}
	\textsc{R\'{e}millard, B.\ and Scaillet, O.} (2009). Testing for equality between two copulas. \emph{Journal of Multivariate Analysis}, \textbf{100}, 377--386.
	
	\bibitem[Romano(1989)]{R89}
	\textsc{Romano, J.\ P.} (1989). Bootstrap and randomization tests of some nonparametric hypotheses. \emph{Annals of Statistics}, \textbf{17}, 141--159.
	
	\bibitem[Romano(1990)]{R90}
	\textsc{Romano, J.\ P.} (1990). On the behavior of randomization tests without a group invariance assumption. \emph{Journal of the American Statistical Association}, \textbf{85}, 686--692.
	
	\bibitem[R\"{u}schendorf(1976)]{R76}
	\textsc{R\"{u}schendorf, L.} (1976). Asymptotic distributions of multivariate rank order statistics. \emph{Annals of Statistics}, \textbf{4}, 912--923.
	
	\bibitem[Segers(2012)]{S12}
	\textsc{Segers, J.} (2012). Asymptotics of empirical copula processes under non-restrictive smoothness assumptions. \emph{Bernoulli}, \textbf{18}, 764--782.
	
	\bibitem[Seo(2018)]{S18}
	\textsc{Seo, J.} (2018). Randomization tests for equality in dependence structure. Arxiv e-print, arXiv:1811.02105 [econ.EM].
	
	\bibitem[Sklar(1959)]{S59}
	\textsc{Sklar, A.} (1959). Fonctions de r\'{e}partition \`{a} \(n\) dimensions et leurs marges. \emph{Publications of the Institute of Statistics of the University of Paris}, \textbf{8}, 229--231.
	
	\bibitem[Tang(2011)]{T11}
	\textsc{Tang, X.} (2011). Bounds on revenue distributions in counterfactual auctions with reserve prices. \textit{RAND Journal of Economics}, \textbf{42}, 175--203.
	
\end{thebibliography}
\end{document}